\newcommand{\Figwidth}{0.9\columnwidth}%
\def\twocolbreak{\nonumber\\ &}%
\def\twocolnewline{\nonumber\\}%
\def\twocolAlignMarker{&}%
\newcommand{\Figwidth}{4.5in}%
\def\twocolbreak{}%
\def\twocolnewline{}%
\def\twocolAlignMarker{}%
\begin{document}
\title{Degrees of Freedom of the Broadcast Channel with Hybrid CSI at Transmitter and Receivers}

\author{Mohamed Fadel, {\em Student Member, IEEE}, and Aria Nosratinia, {\em Fellow, IEEE}
\thanks{The authors are with the Department of Electrical Engineering, University of Texas at Dallas, Richardson, TX 75083-0688 USA, E-mail: mohamed.fadel@utdallas.edu;aria@utdallas.edu. This work was presented in part at the IEEE International Symposium on Information Theory (ISIT), Germany, June 2017 \cite{Fadel_block}.}
}
\maketitle

\newtheorem{theorem}{Theorem}
\newtheorem{lemma}{Lemma}
\newtheorem{remark}{Remark}

\def\A{\mathbf A}
\def\Aset{\mathcal A}
\def\a{\mathbf a}
\def\B{\mathbf B}
\def\b{\mathbf b}
\def\D{\mathbf D}
\def\Dset{\mathcal D}
\def\d{\mathbf d}
\def\E{E}
\def\e{\mathbf e}
\def\F{\mathbf F}
\def\f{\mathbf f}
\def\G{\mathbf G}
\def\g{\mathbf g}
\def\Gset{{\mathcal G}}
\def\H{\mathbf H}
\def\h{\mathbf h}
\def\Hset{\mathcal H}
\def\I{\mathbf I}
\def\J{\mathbf J}
\def\Jset{\mathcal J}
\def\K{ K}
\def\Kset{\mathcal K}
\def\L{L}
\def\l{\ell}
\def\N{N}
\def\m{m}
\def\M{M}
\def\o{p}
\def\P{\mathbf P}
\def\Pset{\mathcal P}
\def\p{p}
\def\Q{Q}
\def\R{R}
\def\q{q}
\def\r{r}
\def\S{\mathcal S}
\def\T{T}
\def\Tset{\mathcal T}
\def\U{\mathbf U}
\def\u{\mathbf u}
\def\Uset{\mathcal U}
\def\V{\mathbf V}
\def\v{\mathbf v}
\def\W{\mathbf W}
\def\w{\mathbf w}
\def\X{\mathbf X}
\def\x{\mathbf x}
\def\Y{\mathbf Y}
\def\Yp{Y'}
\def\y{\mathbf y}
\def\Z{\mathbf Z}
\def\z{\mathbf z}
\def\bigO{O}
\def\ZeroMat{\mathbf 0}
\def\Onevec{\mathbf 1}
\def\XiNoise{\mathbf N}
\def\Cov{\mathbf \Sigma}


\begin{abstract}
In general the different links of a broadcast channel  may experience different fading dynamics and, potentially, unequal or {\em hybrid} channel state information (CSI) conditions. The faster the fading and the shorter the fading block length,  the more often the link needs to be trained and estimated at the receiver, and the more likely that CSI is stale or unavailable at the transmitter. Disparity of link fading dynamics in the presence of CSI limitations can be modeled by a multi-user broadcast channel with both non-identical link fading block lengths as well as dissimilar link CSIR/CSIT conditions.
This paper investigates a MISO broadcast channel where some receivers experience longer coherence intervals (static receivers) and have CSIR, while some other receivers experience shorter coherence intervals (dynamic receivers) and do not enjoy free CSIR. We consider a variety of CSIT conditions for the above mentioned model, including  no CSIT, delayed CSIT, or hybrid CSIT. To investigate the degrees of freedom region, we employ interference alignment and beamforming along with a product superposition that allows simultaneous but non-contaminating transmission of pilots and data to different receivers. Outer bounds employ the extremal entropy inequality as well as a bounding of the performance of a discrete memoryless multiuser {\em multilevel} broadcast channel. For several cases, inner and outer bounds are established that either partially meet, or the gap diminishes with increasing coherence times.
\end{abstract}

\begin{keywords}
Broadcast channel, Channel state information, Coherence time, Coherence diversity, Degrees of freedom, Fading channel, Multilevel broadcast channel, Product superposition.
\end{keywords}
\IEEEpeerreviewmaketitle

\section{Introduction}


The performance of a broadcast channel depends on both the channel dynamics as well as the availability and the quality of channel state information (CSI) on the two ends of each link~\cite{Huang_degrees,Vaze_degree,Lapidoth_capacity,Jafar_blind}. The two issues of CSI and the channel dynamics are practically related. The faster the fading, the more often the channel needs training, thus consuming more channel resources, while a very slow fading link requires infrequent training, therefore slow fading models often assume that CSIR is available due to the cost of training being small when amortized over time.


In practice, in a broadcast channel some links may fade faster or slower than others. Recently, it has been shown~\cite{Fadel_broadcast,Fadel_disparity}, that the degrees of freedom of the broadcast channel are affected by the disparity of link fading speeds, but existing studies have focused on a few simple and uniform CSI conditions, e.g., neither CSIT nor CSIR were available in~\cite{Fadel_broadcast,Fadel_disparity} for any user. This paper studies a broadcast channel where the links experience both disparate fading conditions as well as non-uniform or {\em hybrid} CSI conditions. 


A review of the relevant literature is as follows. Under perfect instantaneous CSI, the degrees of freedom of a broadcast channel increase with the minimum of the transmit antennas and the total number of receivers antennas~\cite{Caire_achievable,Weingarten_capacity}. However, due to the time-varying nature of the channel and feedback impairments, perfect instantaneous transmit-side CSI (CSIT) may not be available, and also receive-side CSI (CSIR) can be assumed for slow-fading channels only.

Broadcast channel with perfect CSIR has been investigated under a variety of CSIT conditions, including imperfect, delayed, or no CSIT~\cite{Huang_degrees,Vaze_degree,Lapidoth_capacity,Davoodi_aligned,Maddah_completely,Gou_optimal}. 
In the absence of CSIT, Huang {\em et al.}~\cite{Huang_degrees} and Vaze and Varanasi~\cite{Vaze_degree} showed that the degrees of freedom collapse to that of the single-receiver, since the receivers are {\em stochastically equivalent} with respect to the transmitter. For a MISO broadcast channel Lapidoth {\em et al.}~\cite{Lapidoth_capacity} conjectured that as long as the precision of CSIT is finite, the degrees of freedom collapse to unity. This conjecture was recently settled in the positive by Davoodi and Jafar in~\cite{Davoodi_aligned}. Moreover, for a MISO broadcast channel under perfect delayed CSIT Maddah-Ali and Tse in~\cite{Maddah_completely} showed using retrospective interference alignment that the degrees of freedom are $\frac{1}{ 1 + \frac{1}{2} + \ldots + \frac{1}{K}} > 1$, where $K$ is the number of the transmit antennas and also the number of receivers. A scenario of mixed CSIT was investigated in~\cite{Gou_optimal}, where the transmitter has partial knowledge on the current channel in addition to delayed CSI. 

The potential variation between the quality of feedback links has led to the model of hybrid CSIT, where the  CSIT with respect to different links may not be identical~\cite{Tandon_fading,Amuru_degrees,Davoodi_aligned,Tandon_synergistic}. A MISO broadcast channel with perfect CSIT for some receivers and delayed for the others was studied by Tandon {\em et al.} in~\cite{Tandon_fading} and Amuru {\em et al.} in~\cite{Amuru_degrees}.  Davoodi and Jafar in~\cite{Davoodi_aligned} showed that for a MISO two-receiver broadcast channel under perfect CSIT for one user and no CSIT for the other, the degrees of freedom collapse to unity. Tandon {\em et al.} in~\cite{Tandon_synergistic} considered a MISO broadcast channel with alternating hybrid CSIT to be perfect, delayed, or no CSIT with respect to different receivers.


As mentioned earlier, investigation of broadcast channels under unequal link fading dynamics is fairly recent. An achievable degrees of freedom region for one static and one dynamic receiver was given in~\cite{Li_product,Li_coherent,Fadel_coherent,Fadel_coherence} via product superposition, producing a gain that is now known as {\em coherence diversity}. Coherence diversity gain was further investigated in~\cite{Fadel_disparity,Fadel_broadcast} for a $K$-receiver broadcast channel with neither CSIT nor CSIR. Also, a broadcast channel was investigated in \cite{Zhang_spatially}, where the receivers MIMO fading links experience {\em nonidentical} spacial correlation.


In this paper, we consider a multiuser model under a hybrid CSIR
scenario where a group of receivers, denoted static receivers, are
assumed to have CSIR, and another group with shorter link coherence
time, denoted dynamic receivers, do not have free CSIR. We consider
this model under a variety of CSIT conditions, including no CSIT, delayed CSIT, and two
hybrid CSIT scenarios. In each of these conditions, we analyze the
degrees of freedom region. A few new tools are introduced, and
inner and outer bounds are derived that partially meet in some
cases. The results of this paper are cataloged as follows.

In the absence of CSIT, an outer bound on the degrees of freedom region is produced via bounding the rates of a discrete memoryless multilevel broadcast channel~\cite{Borade_multilevel,Nair_capacity}
and then applying the  extremal entropy inequality~\cite{Liu_extremal,Liu_vector}. Our achievable degrees of freedom region meets the outer bound in the limiting case where the coherence times of the static and dynamic receivers are the same. 

For delayed CSIT, we use the outdated CSI model that was used by Maddah-Ali and Tse~\cite{Maddah_completely} under i.i.d.\ fading and assuming global CSIR at all nodes. 
Noting that our model does not have uniform CSIR, we produced a technique with alignment over super-symbols to utilize outdated CSIT but merge it together with product superposition to reuse the pilots of the dynamic receivers for the purpose of transmission to static receivers.
Moreover, we develop an outer bound that is suitable for block-fading channels with different coherence times, by appropriately enhancing the channel to a physically-degraded broadcast channel and then applying the extremal entropy inequality~\cite{Liu_extremal,Liu_vector}. For one static and one dynamic receiver, our achievable degrees of freedom partially meet our outer bound, and furthermore the gap decreases with the dynamic receiver coherence time $\T$. 

Under hybrid CSIT, we analyze two conditions: First, we consider perfect CSIT for the static receivers and no CSIT with respect to the dynamic receivers. The achievable degrees of freedom in this case are obtained using product superposition with the dynamic receiver's pilots reused and beamforming for the static receivers to avoid interference. Second, we consider perfect CSIT with respect to the static receivers and delayed CSIT with respect to the dynamic receivers. An achievable transmission scheme is proposed via a combination of beamforming, interference alignment, and product superposition methodologies. The outer bounds for the two hybrid-CSIT cases were based on constructing an enhanced physically degraded channel and then applying the extremal entropy inequality. For one static receiver with perfect CSIT and one dynamic receiver with delayed CSIT, the gap between the achievable and the outer sum degrees of freedom is $\frac{1}{\T}$.

\renewcommand{\arraystretch}{1.1}
\begin{table*}
\caption{Notation}
\begin{center}
\begin{tabular}{lcc}
 & {\bf Static Users} & {\bf Dynamic Users} \\
\hline
number of users & $\m'$ & $\m$ \\
MISO channel gains & $\g_1,\ldots,\g_{m'} $ & $\h_1,\ldots,\h_m $ \\
received signals (continuous) & $y'_1, \ldots, y'_{m'} $ & $y_1,\ldots,y_m $ \\
DMC receive variables & $Y'_1, \ldots, Y'_{m'} $ & $Y_1,\ldots,Y_m $ \\
transmission rates & $R'_1,\ldots, R'_{m'} $ & $R_1\ldots,R_m $ \\
messages & $M'_1,\ldots, M'_{m'} $ & $M_1,\ldots, M_m $ \\
degrees of freedom & $d'_1, \dots, d'_{m'} $ & $d_1, \dots, d_m $ \\
coherence time & $T' $ & $T $ \\
 \multicolumn{3}{c}{\bf General Variables} \\
\hline
$\X$ & transmit signal \\
$\rho$ & signal-to-noise ratio \\
$U_i, V_j, W$ & auxiliary random variables \\
$\Hset$ & set of all channel gains \\
${\Dset}_x$ & vertex of degrees of freedom region \\
${\e}_i$ & canonical coordinate vector \\
\hline
\end{tabular}
\end{center}
\end{table*}

\section{System Model}
\label{section:system}

\begin{figure}
\center
\includegraphics{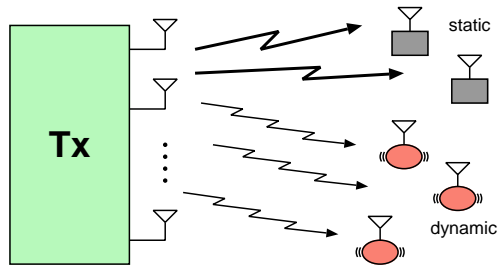}
\caption{A broadcast channel with multiple static and multiple dynamic users}
\label{figure:system_multiple_static_dynamic}
\end{figure}

Consider a broadcast channel with multiple single-antenna receivers and the transmitter is equipped with $\N_t$ antennas. The expressions ``receiver'' and ``user'' are employed without distinction throughout the paper, indicating the receiving terminals in the broadcast channel. The channels of the users are modeled as Rayleigh block fading where the channel coefficients remain constant over each block and change independently across blocks~\cite{Marzetta_capacity,Zheng_communication}. As shown in Fig.~\ref{figure:system_multiple_static_dynamic}, the users are partitioned into two sets based on channel availability and the length of the coherence interval: one set of dynamic users and another set of static users. The former contains $\m$ dynamic users having coherence time $\T$ and no free CSIR\footnote{This means that the cost of knowing CSI at the receiver, e.g., by channel estimation, is not ignored.}, and the latter contains $\m'$ static users having coherence time $\T'$ and perfect instantaneous CSIR. We consider the transmitter is equipped with more antennas than the number of dynamic and static users, i.e., $\N_t \geq \m' + \m$.

The received signals $y'_{j}(t), y_{i}(t)$ at the static user $j$, and the dynamic user $i$, respectively, at time instant $t$ are
\begin{align}
y'_{j}(t) & = \g_{j}^{\dagger}(t) \x(t) + z'_{j}(t), \quad j = 1, \ldots, \m', \nonumber \\
y_{i}(t)  & = \h_{i}^{\dagger}(t) \x(t) + z_{i}(t),  \quad i = 1, \ldots, \m, \label{eq:system_miso_bc}
\end{align}
where $\x(t) \in \mathbb{C}^{\N_t}$ is the transmitted signal, $z'_{j}(t), z_{i}(t)$ denote the corresponding additive i.i.d. Gaussian noise of the users, and $\g_{j}(t) \in \mathbb{C}^{\N_t}, \h_{i}(t) \in \mathbb{C}^{\N_t}$ denote the channels of the static user $j$ and the dynamic user $i$ whose coefficients stay the same over $\T'$ and $\T$ time instances, respectively. The distributions of $\g_j$ and $\h_i$ are globally known at the transmitter and at the users.\footnote{Also, the coherence times of all channels are globally known at the transmitter and at the users.} Having CSIR, the value of $\g_{j}(t)$ is available instantaneously and perfectly at the static user $j$. Furthermore, the static user $j$ obtains an outdated version of the dynamic users channels $\h_i$, and also the dynamic user $i$ obtains an outdated version of the static users channel $\g_i$ (completely stale)~\cite{Maddah_completely}. CSIT for each user can take one of the following:
\begin{itemize}
\item Perfect CSIT: the channel vectors, $\g_{j}(t),\h_{i}(t)$, are available at the transmitter instantaneously and perfectly.
\item Delayed CSIT: the channel vectors, $\g_{j}(t),\h_{i}(t)$, are available at the transmitter after they change independently in the following block (completely stale~\cite{Maddah_completely}).
\item No CSIT: the channel vectors, $\g_{j}(t),\h_{i}(t)$, cannot not be known at the transmitter.
\end{itemize}

We consider the broadcast channel with private messages for all users and no common messages. More specifically, we assume that the independent messages $\M'_j \in [1:2^{n \R'_i(\rho)}], \M_i \in [1:2^{n \R_i(\rho)}]$ associated with rates $\R'_{j}(\rho), \R_{i}(\rho)$ are communicated from the transmitter to the static user $j$ and dynamic user $i$, respectively, at $\rho$ signal-to-noise ratio. The degrees of freedom of the static and dynamic users achieving rates $\R'_{j}(\rho), \R_{i}(\rho)$ can be defined as
\begin{align}
d'_{j} & = \lim_{\rho \rightarrow \infty} \frac{R'_{j}(\rho)}{\log (\rho)}, \quad j= 1, \ldots, \m', \nonumber \\
d_{i} & = \lim_{\rho \rightarrow \infty} \frac{R_{i}(\rho)}{\log (\rho)}, \quad i = 1, \ldots, \m.
\end{align}
The degrees of freedom region is defined as
\begin{align}
\mathcal{D} = \big\{ & ( d'_1, \ldots, d'_{\m'}, d_1, \ldots, d_\m) \in \mathbb{R}_+^{\m' + \m} \big| \twocolbreak 
 \ \exists ( R'_1(\rho), \ldots, R'_{\m'}(\rho), R_1(\rho), \ldots, R_{\m}(\rho) ) \in C(\rho), \nonumber \\ 
 & d'_{j} = \lim_{\rho \rightarrow \infty} \frac{R'_{j}(\rho)}{\log (\rho)}, 
 d_{i} = \lim_{\rho \rightarrow \infty} \frac{R_{i}(\rho)}{\log (\rho)}, \twocolbreak
	 \ j = 1, \ldots, \m', \ i = 1, \ldots, \m \big\},
\end{align}
where $C(\rho)$ is the capacity region at $\rho$ signal-to-noise ratio. The sum degrees of freedom is defined as
\begin{equation}
d_{\text{sum}} = \lim_{\rho \rightarrow \infty} \frac{C_{\text{sum}}(\rho)}{\log(\rho)},
\end{equation}
where 
\begin{equation}
C_{\text{sum}}(\rho) = \max \sum_{j= 1}^{\m'} R'_{j}(\rho) + \sum_{i = 1}^{\m} R_{i}(\rho).
\end{equation}
In the sequel, we study the degrees of freedom of the above MISO broadcast channel under different CSIT scenarios that could be perfect, delayed or no CSIT.

\section{No CSIT for All Users}
\label{section:nn_CSIT}

In this section, we study the broadcast channel defined in Section~\ref{section:system} when there is no CSIT for all users. In particular, we give outer and achievable degrees of freedom regions in Section~\ref{section:nn_CSIT_outer} and Section~\ref{section:nn_CSIT_ach}, respectively. The outer degrees of freedom region is based on the construction of an outer bound on the rates of a multiuser multilevel discrete memoryless channel that is given in Section~\ref{section:Ml_BC}.

\subsection{Multiuser Multilevel Broadcast Channel}
\label{section:Ml_BC}

\begin{figure}
\center
\includegraphics[width=\Figwidth]{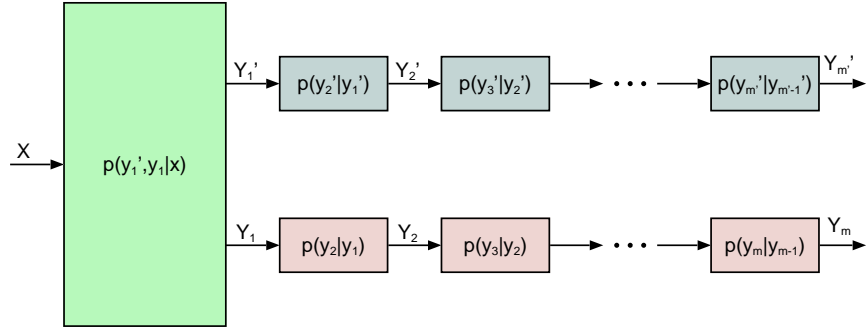}
\caption{Discrete memoryless multiuser multilevel broadcast channel}
\label{figure:Ml_BC}
\end{figure}

The multilevel broadcast channel was introduced by Borade {\em et al.}~\cite{Borade_multilevel} as a three-user broadcast discrete memoryless broadcast channel where two of the users are degraded with respect to each other. The capacity of this channel under degraded message sets was established by Nair and El Gamal~\cite{Nair_capacity}. Here, we study a multiuser multilevel broadcast channel with two sets of degraded users (see Fig.~\ref{figure:Ml_BC}). One set contains $\m'$ users with $Y'_j$ received signal at user $j$, and the other set contains $\m$ users with $Y_i$ received signal at user $i$. Therefore,
\begin{align}
X \rightarrow & Y'_{1} \rightarrow Y'_{2} \rightarrow \cdots \rightarrow Y'_{\m'} \nonumber \\
X \rightarrow & Y_{1} \rightarrow Y_{2} \rightarrow \cdots \rightarrow Y_{\m} 
\label{KM_outer_MC}
\end{align}
form two Markov chains. We consider a broadcast channel with $(\m'+\m)$ private messages and no common message. An outer bound for the above multilevel broadcast channel is given in the following theorem.
\begin{theorem}
\label{theorem:KM_outer}
The rate region of the multilevel broadcast channel with two sets of degraded users (Eq.~\eqref{KM_outer_MC}) is outer bounded by the intersection of
\begin{align}
\R_{1} \leq & I( U_{\m'}, W ; Y_1 | V_1 )
 - I( W ; Y'_{\m'} | U_{\m'} ), \label{eq:KM_outer_1} \\
\R_i \leq & I( V_{i-1} ; Y_i | V_i ), \qquad \qquad i=2, \ldots, \m, \label{eq:KM_outer_2} \\
\R'_j \leq & I( U_{j-1} ; Y'_j | U_j ), \qquad \qquad j=1, \ldots, \m'-1, \label{eq:KM_outer_3} \\
\R'_{\m'} \leq & I( W ; Y'_{\m'} | U_{\m'} ) 
 + I( X ; Y'_{\m'} | U_{\m'}, W ) \twocolbreak
- I( X ; Y'_{\m'} | U_{\m'-1} ), \label{eq:KM_outer_4} 
\end{align} 
and
\begin{align}
\R_i \leq & I( \tilde{U}_{i-1} ; Y_i | \tilde{U}_i ), \qquad \qquad i=1, \ldots, \m-1, \label{eq:KM_outer_5} \\
\R_{\m} \leq & I( \tilde{W} ; Y_{\m} | \tilde{U}_{\m} )
 + I( X ; Y_{\m} | \tilde{U}_{\m}, \tilde{W} ) - I( X ; Y_{\m} | \tilde{U}_{\m-1} ), \label{eq:KM_outer_6} \\
\R'_{1} \leq & I( \tilde{U}_{\m}, \tilde{W} ; Y'_1 | \tilde{V}_1 )
 - I( \tilde{W} ; Y_{\m} | \tilde{U}_{\m} ), \label{eq:KM_outer_7} \\
\R'_{j} \leq & I( \tilde{V}_{j-1} ; Y'_j | \tilde{V}_j ), \qquad \qquad j= 2, \ldots, \m', \label{eq:KM_outer_8}
\end{align}
for some pmf
\begin{equation}
p(u_1,\ldots,u_{m'}, \tilde{u}_1,\ldots,\tilde{u}_{m}, v_1,\ldots,v_m, \tilde{v}_1,\ldots,\tilde{v}_{m'}, w,\tilde{w}, x ),
\end{equation}
where
\begin{align}
U_{\m'} \rightarrow \cdots \rightarrow U_1 \rightarrow & X \rightarrow (Y_1, \ldots, Y_{\m}, Y'_1, \ldots, Y'_{\m'}) \nonumber \\
V_{\m} \rightarrow \cdots \rightarrow V_1 \rightarrow (W,U_{m'}) \rightarrow & X \rightarrow (Y_1, \ldots, Y_{\m}, Y'_1, \ldots Y'_{\m'}) \nonumber \\
\tilde{U}_{\m} \rightarrow \cdots \rightarrow \tilde{U}_1 \rightarrow & X \rightarrow (Y_1, \ldots Y_{\m}, Y'_1, \ldots, Y'_{\m'}) \nonumber \\
\tilde{V}_{\m'} \rightarrow \cdots \rightarrow \tilde{V}_1 \rightarrow (\tilde{W},\tilde{U}_{\m}) \rightarrow & X \rightarrow (Y_1, \ldots Y_{\m}, Y'_1, \ldots, Y'_{\m'})
\end{align}
forms Markov chains and $U_0=\tilde{U}_0\triangleq X$.
\end{theorem}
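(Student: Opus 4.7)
The plan is to extend the outer-bound argument of Nair and El~Gamal~\cite{Nair_capacity} for the three-user multilevel broadcast channel to the present setting of two degraded chains of lengths $m'$ and $m$. The two groups of inequalities, \eqref{eq:KM_outer_1}--\eqref{eq:KM_outer_4} and \eqref{eq:KM_outer_5}--\eqref{eq:KM_outer_8}, are symmetric under interchange of the two chains, so it suffices to establish the first group; the second follows by relabeling $(U_j,V_k,W)\leftrightarrow(\tilde V_j,\tilde U_k,\tilde W)$ and repeating the argument with the roles of the static and dynamic chains swapped.

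Starting from Fano's inequality, $nR_i\le I(M_i;Y_i^n)+n\epsilon_n$ and analogously for $R'_j$, I would introduce the multi-letter auxiliaries by the standard degraded-chain prescription
\begin{align*}
U_{j,t} &\triangleq \bigl(M'_{j+1},\ldots,M'_{m'},\,{Y'_j}^{t-1}\bigr), \quad j=1,\ldots,m', \\
V_{k,t} &\triangleq \bigl(M_{k+1},\ldots,M_m,\,Y^{t-1}_{k}\bigr), \quad k=1,\ldots,m,
\end{align*}
with $U_{0,t}\triangleq X_t$, together with a cross-chain auxiliary $W_t$ (tentatively of the form $W_t=(M_1,\ldots,M_m,\,{Y'_{m'}}^{t-1})$, possibly adjusted) that captures the information about the dynamic-chain messages already accessible at the weakest static receiver. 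The two Markov chains in \eqref{KM_outer_MC}, together with the memoryless channel, induce the Markov conditions declared in the theorem. Bounds \eqref{eq:KM_outer_2}, \eqref{eq:KM_outer_3}, \eqref{eq:KM_outer_5}, and \eqref{eq:KM_outer_8} then follow from the usual single-chain degraded outer bound via the chain rule, the data-processing inequality, and Csisz\'ar's sum identity, and do not involve the cross-chain auxiliary.

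The cross-chain bounds \eqref{eq:KM_outer_1} and \eqref{eq:KM_outer_4} must be derived jointly. Starting from a combined bound on $R_1+R'_{m'}$, I would invoke the splitting $I(A;B)=I(A,W;B)-I(W;B\mid A)$ with the common auxiliary $W$, so that a positive term $I(W;Y'_{m'}\mid U_{m'})$ appears in the expression for $R'_{m'}$ and an identical negative term appears in the expression for $R_1$. Further Csisz\'ar-sum manipulations, together with the identity $I(U_{m'-1};Y'_{m'}\mid U_{m'})=I(X;Y'_{m'}\mid U_{m'})-I(X;Y'_{m'}\mid U_{m'-1})$---which follows from the Markov chain $U_{m'}\to U_{m'-1}\to X\to Y'_{m'}$---produce the claimed forms \eqref{eq:KM_outer_1} and \eqref{eq:KM_outer_4}. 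Single-letterization is accomplished by introducing a time-sharing random variable $Q\sim\mathrm{Unif}\{1,\ldots,n\}$ and absorbing it into the auxiliaries. The main obstacle will be a careful choice of $W$ and $\tilde W$ so that all prescribed Markov conditions---particularly $V_m\to\cdots\to V_1\to(W,U_{m'})\to X$---are simultaneously preserved after single-letterization, while the Csisz\'ar-sum manipulations reproduce the asymmetric form of \eqref{eq:KM_outer_4}, including the negative term $-I(X;Y'_{m'}\mid U_{m'-1})$ that ties this bound to \eqref{eq:KM_outer_3}.
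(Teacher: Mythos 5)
Your high-level strategy coincides with the paper's: enhance the channel, apply Fano and the chain rule along each degraded chain, use the Csisz\'ar sum identity to create a matching cross-chain term, and single-letterize with a time-sharing variable. However, the concrete auxiliary definitions you propose would not make the argument go through, and the correct choice of auxiliaries is precisely the non-mechanical content of this proof --- which you explicitly leave open (``tentatively,'' ``possibly adjusted''). Three specific problems. First, you omit the genie step: the paper gives static user $j$ not only $M'_{j+1},\ldots,M'_{m'}$ but also \emph{all} dynamic messages $M_1,\ldots,M_m$, so that $U_{j,k}=(M'_{j+1},\ldots,M'_{m'},M_1,\ldots,M_m,{Y'_j}^{k-1})$. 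Without $M_1,\ldots,M_m$ inside $U_{m'}$, the term produced by the Csisz\'ar sum identity in the $R_1$ derivation cannot be written as $I(W;Y'_{\m'}\mid U_{\m'})$, and the cancellation between \eqref{eq:KM_outer_1} and \eqref{eq:KM_outer_4} fails. Second, your $V_{k,t}$ uses \emph{past} outputs $Y_k^{t-1}$, whereas the paper uses \emph{future} outputs $Y_{i,k+1},\ldots,Y_{i,n}$; the Csisz\'ar sum identity pairs the past of one sequence with the future of the other, so the directions must be opposite across the two chains ($U$'s built from pasts of $Y'$, $V$'s and $W$ from futures of $Y$) for the identity
\begin{equation*}
\sum_{k} I\bigl({Y'_{\m'}}^{k-1};Y_{1,k}\,\big|\,M_1,\ldots,M_\m,Y_{1,k+1}^n\bigr)=\sum_{k} I\bigl(Y_{1,k+1}^n;Y'_{\m',k}\,\big|\,U_{\m',k}\bigr)
\end{equation*}
to produce exactly the matching term. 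Third, the correct cross-chain auxiliary is simply $W_k=Y_{1,k+1}^n$, not $(M_1,\ldots,M_m,{Y'_{\m'}}^{k-1})$.

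Your tentative choice also breaks the declared Markov structure: with $W_t=(M_1,\ldots,M_m,{Y'_{\m'}}^{t-1})$ and $U_{\m',t}={Y'_{\m'}}^{t-1}$, the pair $(W_t,U_{\m',t})$ does not determine $Y_1^{t-1}$, so your $V_{1,t}$ is not conditionally independent of $X_t$ given $(W_t,U_{\m',t})$ and the chain $V_{\m}\rightarrow\cdots\rightarrow V_1\rightarrow(W,U_{\m'})\rightarrow X$ cannot be verified. With the paper's definitions, $V_{1,k}=(M_2,\ldots,M_\m,Y_{1,k+1}^n)$ is a deterministic function of $(W_k,U_{\m',k})=(Y_{1,k+1}^n,M_1,\ldots,M_\m,{Y'_{\m'}}^{k-1})$, which is what makes that chain immediate. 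Since the downstream degrees-of-freedom bound (Theorem~\ref{theorem:nn_CSIT_outer}) relies on summing the static and dynamic rate bounds so that the $\pm I(W;Y'_{\m'}\mid U_{\m'})$ terms cancel, these are not cosmetic issues: the proof does not close until the auxiliaries are fixed as above.
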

\begin{proof}
See Appendix~\ref{appendix:KM_outer}.
\end{proof}

\begin{remark}
\label{remark:KM_outer}
Theorem~\ref{theorem:KM_outer} is an extension of the K{\"o}rner-Marton outer bound~\cite[Theorem 5]{Marton_coding} to more than two users, and it recovers the K{\"o}rner-Marton bound when $\m=\m'=1$.
\end{remark}

\begin{remark}
For the multiuser multilevel broadcast channel characterized by~\eqref{KM_outer_MC}, we establish the capacity for degraded message sets in Appendix~\ref{appendix:degraded_message}, where one common message is communicated to all receivers and one further private message is communicated to one receiver.
\end{remark}

\subsection{Outer Degrees of Freedom Region}
\label{section:nn_CSIT_outer}

In the sequel, we give an outer bound on the degrees of freedom of the broadcast channel defined in Section~\ref{section:system} when there is no CSIT for all users. The outer bound development depends on the results of Theorem~\ref{theorem:KM_outer} in Section~\ref{section:Ml_BC}.
\begin{theorem}
\label{theorem:nn_CSIT_outer}
An outer bound on the degrees of freedom region of the fading broadcast channel characterized by Eq.~\eqref{eq:system_miso_bc} without CSIT is,
\begin{align}
 \sum_{j = 1}^{\m'} d'_{j} & \leq 1, \label{eq:nn_CSIT_outer_1} \\
 \sum_{i=1}^{\m} d_{i} & \leq 1 - \frac{1}{\T}, \label{eq:nn_CSIT_outer_2}\\
\sum_{j = 1}^{\m'} d'_{j} + \sum_{i = 1}^{\m} d_{i} & \leq 
\begin{cases}
1 & T=T', \ \Delta T = 0 \\
\frac{4}{3} & \text{otherwise},
\end{cases}
\label{eq:nn_CSIT_outer_3}
\end{align}
where $\Delta T$ is the offset between the two coherence intervals.
\end{theorem}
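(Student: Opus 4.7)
The plan is to establish the three outer bounds in sequence, leveraging Theorem~\ref{theorem:KM_outer} together with the extremal entropy inequality~\cite{Liu_extremal,Liu_vector}. The organizing observation is that, since there is no CSIT and the channels within each group are i.i.d., the $\m'$ static receivers (respectively, the $\m$ dynamic receivers) are stochastically equivalent from the transmitter's vantage point. This permits enhancements in which each group is collapsed into a degraded Markov chain on which Theorem~\ref{theorem:KM_outer} can be applied.

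For \eqref{eq:nn_CSIT_outer_1}, I would enhance the static group by granting static user $j$ the received signals of users $j+1,\ldots,\m'$, which is consistent with the joint distribution and produces a degraded chain $X\to Y'_1\to\cdots\to Y'_{\m'}$ (a special case of \eqref{KM_outer_MC} with $\m=0$). The sum-rate bound then collapses to $\sum_j R'_j\le I(X;Y'_1)$, which is the capacity of a single-user MISO Rayleigh channel with perfect CSIR and no CSIT; its high-SNR DoF is unity. For \eqref{eq:nn_CSIT_outer_2}, I would apply the same enhancement within the dynamic group, obtaining $X\to Y_1\to\cdots\to Y_\m$ and $\sum_i R_i \le I(X;Y_1)$. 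The right-hand side is the noncoherent single-user MISO Rayleigh capacity with coherence~$\T$, whose high-SNR DoF is $1-\tfrac{1}{\T}$; here one either invokes the Zheng--Tse style noncoherent bound directly or derives it through the extremal entropy inequality after accounting for the $\tfrac{1}{\T}$ fraction of symbols needed to resolve the unknown channel within each coherence block.

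For the joint sum bound \eqref{eq:nn_CSIT_outer_3}, I would combine the two previous enhancements to obtain the full two-chain structure of \eqref{KM_outer_MC} and then apply Theorem~\ref{theorem:KM_outer}. When $\T=\T'$ and $\Delta T=0$, a further (strict) enhancement grants the dynamic users free CSIR as well, rendering all $\m+\m'$ users stochastically equivalent static users; bound \eqref{eq:nn_CSIT_outer_1} then yields sum DoF at most $1$. When $\T\neq\T'$, I would evaluate the conditional mutual informations appearing in \eqref{eq:KM_outer_1}--\eqref{eq:KM_outer_8} for the Gaussian block-fading model with an i.i.d.\ Gaussian input, and use the extremal entropy inequality to upper-bound differential entropies of the form $h(Y\mid U)$ by their Gaussian counterparts. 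Choosing the auxiliary variables $U_i,V_i,W$ so that the multilevel decomposition aligns with a single coherence block of the longer-coherence receivers, the EEI produces a bound whose high-SNR scaling is a constant $\tfrac{4}{3}$, independent of the particular values of $\T,\T'$.

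The main obstacle will be the $\tfrac{4}{3}$ bound. Establishing it requires a careful choice of the auxiliaries in Theorem~\ref{theorem:KM_outer} so that the positive terms $I(W;Y'_{\m'}\mid U_{\m'})$ and $I(\tilde{W};Y_{\m}\mid \tilde{U}_{\m})$ cancel between the two collections of inequalities in the manner suggested by \eqref{eq:KM_outer_1}--\eqref{eq:KM_outer_4} versus \eqref{eq:KM_outer_5}--\eqref{eq:KM_outer_8}, and so that the residual differential entropies can be bounded by the EEI without losing the $\tfrac{1}{3}$ slack over the trivial unity bound. The delicate point is that the CSIR asymmetry (static users know their channel, dynamic users do not) and the coherence-time asymmetry must both be exploited simultaneously, while the final DoF bound is forced to be uniform in $\T$ and~$\T'$; ensuring this uniformity and confirming that the chosen auxiliaries satisfy the Markov structure stipulated by Theorem~\ref{theorem:KM_outer} is where the core technical work will lie.
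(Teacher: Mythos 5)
Your treatment of \eqref{eq:nn_CSIT_outer_1}, \eqref{eq:nn_CSIT_outer_2}, and the $\T=\T'$ case of \eqref{eq:nn_CSIT_outer_3} is sound and matches the paper in spirit (the paper simply cites~\cite{Fadel_disparity,Fadel_coherent} for the first two and for the identical-coherence case). The genuine gap is the $\tfrac{4}{3}$ bound, which you correctly flag as the obstacle but for which your sketch would not go through as described. The paper does not ``choose'' the auxiliaries to align with a coherence block --- they are fixed by the converse of Theorem~\ref{theorem:KM_outer} --- and the constant $\tfrac{4}{3}$ has nothing to do with $\T$ or $\T'$. The actual mechanism is: after the global-CSIR enhancement, each of the two collections \eqref{eq:KM_outer_1}--\eqref{eq:KM_outer_4} and \eqref{eq:KM_outer_5}--\eqref{eq:KM_outer_8} is processed \emph{separately}. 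Within the first collection, stochastic equivalence of the received signals inside each group telescopes the chain-rule terms, leaving $\sum_j R'_j \lesssim I(W;y'_{\m'}|U_{\m'},\Hset)+h(y'_{\m'}|U_{\m'},W,\Hset)$ and $\sum_i R_i \lesssim \log\rho - I(W;y'_{\m'}|U_{\m'},\Hset)-h(y_1|U_{\m'},W,\Hset)$. One then forms the weighted combination $\tfrac12\sum_j R'_j+\sum_i R_i$: the weight $\tfrac12$ makes the $I(W;\cdot)$ contributions sum to $-\tfrac12 I(W;\cdot)\le 0$, and the residual $\tfrac12 h(y'_{\m'},y_1|U_{\m'},W,\Hset)-h(y_1|U_{\m'},W,\Hset)$ is $o(\log\rho)$ by the extremal entropy inequality and Lemma~\ref{lemma:DoF_bounding} (the $\tfrac12$ is exactly the $1/\min\{\N_t,N_1\}$ normalization with $N_1=2$). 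This yields $\tfrac12\sum d'_j+\sum d_i\le 1$; the symmetric collection yields $\sum d'_j+\tfrac12\sum d_i\le 1$; \emph{adding} the two gives $\tfrac32(\sum d'_j+\sum d_i)\le 2$, i.e., the $\tfrac43$.

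Your proposal instead envisions a cancellation of $I(W;Y'_{\m'}|U_{\m'})$ against $I(\tilde W;Y_\m|\tilde U_\m)$ ``between the two collections.'' Those two terms involve different auxiliaries from two logically independent applications of Theorem~\ref{theorem:KM_outer} and cannot be paired; any attempt to combine the raw rate inequalities across the two collections before converting each to a DoF inequality would founder on this mismatch. Likewise, evaluating the mutual informations ``with an i.i.d.\ Gaussian input'' is not available in a converse. So while your inventory of ingredients (multilevel outer bound, EEI, stochastic equivalence) is the right one, the proposal is missing the two ideas that actually produce $\tfrac43$: the asymmetric $\tfrac12$ weighting within each collection, and the final summation of the two resulting weighted DoF bounds.
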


\begin{proof}
Equations~\eqref{eq:nn_CSIT_outer_1} and~\eqref{eq:nn_CSIT_outer_2} are outer bounds for a broadcast channel whose users are either all homogeneously static or all homogeneously dynamic~\cite{Fadel_disparity,Fadel_coherent}. The remainder of the proof is dedicated to establishing~\eqref{eq:nn_CSIT_outer_3}. We enhance the channel by giving all users global CSIR. When $\T'=\T$ and $\Delta T = 0$,~\eqref{eq:nn_CSIT_outer_3} follows directly from~\cite{Fadel_coherent,Fadel_disparity}. When $\T'\neq\T$ or $\Delta T \neq 0$, having no CSIT, the channel belongs to the class of multiuser multilevel broadcast channels in Section~\ref{section:Ml_BC}. We then use the two outer bounds developed for the multilevel broadcast channels to generate two degrees of freedom bounds, and merge them to get the desired result.

We begin with the outer bound described in~\eqref{eq:KM_outer_1}-\eqref{eq:KM_outer_4}; we combine these equations to obtain partial sum-rate bounds on the static $(\sum R'_j)$ and dynamic $(\sum R_i)$ receivers:
\begin{align}
\sum_{j = 1}^{\m'} \R'_{j}
\leq & \sum_{j=1}^{\m'-1} I( U_{j-1} ; y'_{j} | U_j, \Hset)
+ I( W ; y'_{\m'} | U_{\m'}, \Hset) \twocolbreak
+ I( \x ; y'_{\m'} | U_{\m'}, W, \Hset) \nonumber \\
& - I( \x ; y'_{\m'} | U_{\m'-1}, \Hset) \nonumber \\
= & \sum_{j=1}^{\m'-1} h( y'_{j} | U_j, \Hset)
- h( y'_{j} | U_{j-1},\Hset) \twocolbreak
+ I( W ; y'_{\m'} | U_{\m'}, \Hset) 
+ h( y'_{\m'} | U_{\m'}, W, \Hset) \nonumber \\ 
& - h( y'_{\m'} | U_{\m'-1}, \Hset) 
+ o(\log(\rho)) \label{eq:nn_CSIT_R_s_1} \\
= & I( W ; y'_{\m'} | U_{\m'}, \Hset)
+ h( y'_{\m'} | U_{\m'}, W, \Hset) \twocolbreak 
+ o(\log(\rho)), \label{eq:nn_CSIT_R_s_2}
\end{align}
where $\Hset$ is the set of all channel vectors,~\eqref{eq:nn_CSIT_R_s_1} follows from the chain rule, $h( y'_{j} | \x, \Hset) = o(\log(\rho))$, and~\eqref{eq:nn_CSIT_R_s_2} follows since the received signals of all static users, $y'_j$, have the same statistics~\cite{Fadel_coherent, Fadel_disparity}. Also, using Theorem~\ref{theorem:KM_outer},
\begin{align}
\sum_{j = 1}^{\m} \R_{j} 
\leq & I( U_{\m'}, W ; y_{1} | V_1, \Hset ) 
- I( W ; y'_{\m'} | U_{\m'}, \Hset ) \twocolbreak
 + \sum_{j=2}^{\m} I( V_{j-1}; y_{j} | V_j, \Hset ) \nonumber \\
= & h( y_{1} | V_1, \Hset ) 
- h( y_{1} | U_{\m'}, W, \Hset ) \twocolbreak 
- I( W ; y'_{\m'} | U_{\m'}, \Hset )
 + \sum_{j=2}^{\m} h( y_{j} | V_j, \Hset ) \nonumber \\
& - h( y_{j} | V_{j-1},\Hset ) \label{eq:nn_CSIT_R_d_1} \\
= & - h( y_{1} | U_{\m'}, W, \Hset ) 
- I( W ; y'_{\m'} | U_{\m'}, \Hset ) \twocolbreak
+ h( y_{\m}| V_{\m}, \Hset ) 
+ o(\log(\rho) ) \label{eq:nn_CSIT_R_d_2} \\
\leq & - h( y_{1} | U_{\m'}, W, \Hset ) 
- I( W ; y'_{\m'} | U_{\m'}, \Hset ) \twocolbreak
 + \log(\rho) 
 + o(\log(\rho) ), \label{eq:nn_CSIT_R_d_3}
\end{align}
where~\eqref{eq:nn_CSIT_R_d_1} follows from the chain rule,~\eqref{eq:nn_CSIT_R_d_2} follows since $ y_{j}$ have the same statistics, and~\eqref{eq:nn_CSIT_R_d_3} follows since $h( y_{\m} | V_{\m}, \Hset ) \leq n\log(\rho) + o(\log(\rho))$. Define $Y'_{j,k}$ to be the received signal of user $j$ at time instance $k$. From~\eqref{eq:nn_CSIT_R_s_2} and~\eqref{eq:nn_CSIT_R_d_3}, we can obtain the bound \eqref{eq:nn_CSIT_R_d_R_s_3} on the rates.
\makeatletter
\if@twocolumn%
\begin{figure*}
\begin{align}
\frac{1}{2} \sum_{j=1}^{\m'} \R'_{j} + \sum_{j=1}^{\m} \R_{j}
\leq & \frac{1}{2} I( W ; y'_{\m'} | U_{\m'}, \Hset )
 + \frac{1}{2} h( y'_{\m'}| U_{\m'}, W, \Hset ) 
 - h( y_{1} | U_{\m'}, W, \Hset ) 
 - I( W ; y'_{\m'} | U_{\m'}, \Hset )
 + \log(\rho) \nonumber\\
 & + o(\log(\rho)) \nonumber \\
= & \frac{1}{2} h( y'_{\m'}| U_{\m'}, W, \Hset ) 
 - h( y_{1} | U_{\m'}, W, \Hset )
 + \log(\rho) 
 + o(\log(\rho)) \nonumber \\
\leq & \frac{1}{2} h( y'_{\m'}, y_{1} | U_{\m'}, W, \Hset ) 
 - h( y_{1} | U_{\m'}, W, \Hset )
 + \log(\rho) 
 + o(\log(\rho)) \label{eq:nn_CSIT_R_d_R_s_1} \\
\leq & \sum_{k = 1}^n \frac{1}{2} h( y'_{\m',k}, y_{1,k} | U_{\m'}, W, \Hset, y'_{\m',1}, \ldots, y'_{\m',k-1} , 
y_{1,1}, \ldots, y_{1,k-1} ) \nonumber \\
& - h( y_{1,k} | U_{\m'}, W, \Hset, y'_{\m',1}, \ldots, y'_{\m',k-1}, y_{1,1}, \ldots, y_{1,k-1} ) + \log(\rho) + o(\log(\rho)) \label{eq:nn_CSIT_R_d_R_s_2} \\
\leq & \max_{\text{Tr} \{\Cov_x\} \leq \rho, \Cov_x \succcurlyeq 0} \mathbb{E}_H \big\{ \frac{1}{2} \log | \I + \H \Cov_x \H^{\dagger}| - \log (1 + \h_{1}^{\dagger} \Cov_x \h_{1} ) \big\} 
 + \log(\rho)
 + o(\log(\rho)). \label{eq:nn_CSIT_R_d_R_s_3}
\end{align}
\hrulefill
\end{figure*}
\else%
\begin{align}
\frac{1}{2} \sum_{j=1}^{\m'} \R'_{j} \twocolAlignMarker + \sum_{j=1}^{\m} \R_{j} \twocolnewline
\leq & \frac{1}{2} I( W ; y'_{\m'} | U_{\m'}, \Hset )
 + \frac{1}{2} h( y'_{\m'}| U_{\m'}, W, \Hset ) 
 - h( y_{1} | U_{\m'}, W, \Hset ) \nonumber \\
& - I( W ; y'_{\m'} | U_{\m'}, \Hset ) \twocolbreak
 + \log(\rho) 
 + o(\log(\rho)) \nonumber \\
= & \frac{1}{2} h( y'_{\m'}| U_{\m'}, W, \Hset ) 
 - h( y_{1} | U_{\m'}, W, \Hset ) \twocolbreak
 + \log(\rho) 
 + o(\log(\rho)) \nonumber \\
\leq & \frac{1}{2} h( y'_{\m'}, y_{1} | U_{\m'}, W, \Hset ) 
 - h( y_{1} | U_{\m'}, W, \Hset ) \twocolbreak
 + \log(\rho) 
 + o(\log(\rho)) \label{eq:nn_CSIT_R_d_R_s_1} \\
\leq & \sum_{k = 1}^n \frac{1}{2} h( y'_{\m',k}, y_{1,k} | U_{\m'}, W, \Hset, y'_{\m',1}, \ldots, y'_{\m',k-1}, 
 y_{1,1}, \ldots, y_{1,k-1} ) \nonumber \\
& - h( y_{1,k} | U_{\m'}, W, \Hset, 
 y'_{\m',1}, \ldots, y'_{\m',k-1}, 
 y_{1,1}, \ldots, y_{1,k-1} ) + \log(\rho) \nonumber \\
& + o(\log(\rho)) \label{eq:nn_CSIT_R_d_R_s_2} \\
\leq & \max_{\text{Tr} \{\Cov_x\} \leq \rho, \Cov_x \succcurlyeq 0} \mathbb{E}_H \big\{ \frac{1}{2} \log | \I + \H \Cov_x \H^{\dagger}| \twocolbreak
 - \log (1 + \h_{1}^{\dagger} \Cov_x \h_{1} ) \big\} 
 + \log(\rho) \nonumber \\
 & + o(\log(\rho)), \label{eq:nn_CSIT_R_d_R_s_3}
\end{align}
\fi%
\makeatother
where~\eqref{eq:nn_CSIT_R_d_R_s_1} and~\eqref{eq:nn_CSIT_R_d_R_s_2} follow from the chain rule and that conditioning does not increase differential entropy, and~\eqref{eq:nn_CSIT_R_d_R_s_3} follows from extremal entropy inequality~\cite{Liu_extremal,Liu_vector,Yang_degrees_1}. In order to bound~\eqref{eq:nn_CSIT_R_d_R_s_3}, we use a specialization of~\cite[Lemma 3]{Yi_degrees} as follows.
\begin{lemma}
\label{lemma:DoF_bounding}
Consider two random matrices $\H_1 \in \mathbb{C}^{ N_1 \times \N_t}$ and $\H_2 \in \mathbb{C}^{N_2 \times \N_t}$, where $N_1 \geq N_2$. For a covariance matrix $\Cov_x$, where $\text{Tr}\{ \Cov_x \} \leq \rho$, we have 
\begin{align}
\max_{\Cov_x} \ & \frac{1}{\min\{ \N_t, N_1\}} \log | \I + \H_1 \Cov_x \H_1^{\dagger}| \twocolbreak
- \frac{1}{\min\{ \N_t, N_2\}} \log | \I + \H_2 \Cov_x \H_2^{\dagger}| \twocolbreak
 \leq o(\log(\rho)).
\end{align}
\end{lemma}
The proof of Lemma~\ref{lemma:DoF_bounding} is omitted as it directly follows from~\cite[Lemma 3]{Yi_degrees}. Lemma~\ref{lemma:DoF_bounding} yields the following outer bound on the degrees of freedom:
\begin{equation}
\frac{1}{2} \sum_{j =1}^{\m'} d'_{j} + \sum_{i =1}^{\m} d_{i} \leq 1.
\label{eq:nn_CSIT_outer_11}
\end{equation}
We now repeat the exercise of bounding the sum rates and deriving degrees of freedom, this time starting from~\eqref{eq:KM_outer_5}-\eqref{eq:KM_outer_8}. By following bounding steps parallel to~\eqref{eq:nn_CSIT_R_s_2},~\eqref{eq:nn_CSIT_R_d_3},~\eqref{eq:nn_CSIT_R_d_R_s_3},
\begin{equation}
\sum_{j =1}^{\m'} d'_{j} + \frac{1}{2} \sum_{i =1}^{\m} d_{i} \leq 1.
\label{eq:nn_CSIT_outer_12}
\end{equation}
Adding~\eqref{eq:nn_CSIT_outer_11} and~\eqref{eq:nn_CSIT_outer_12} yields the outer bound~\eqref{eq:nn_CSIT_outer_3}, completing the proof of Theorem~\ref{theorem:nn_CSIT_outer}.
\end{proof}

\subsection{Achievable Degrees of Freedom Region}
\label{section:nn_CSIT_ach}

\begin{theorem}
\label{theorem:nn_CSIT_ach}
The fading broadcast channel described by Eq.~\eqref{eq:system_miso_bc} can achieve the following degrees of freedom without CSIT:
\begin{align}
\sum_{i =1}^{\m} d_{i} & \leq 1 - \frac{1}{\T}, \label{eq:nn_CSIT_ach_1} \\
\sum_{j =1}^{\m'} d'_{j} + \sum_{i =1 }^{\m} d_{i} & \leq 1. \label{eq:nn_CSIT_ach_2}
\end{align}
\end{theorem}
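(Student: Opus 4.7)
The plan is to establish the two bounds by constructing three elementary schemes and then taking their time-sharing convex hull. The key new ingredient is a multiplicative (``product'') superposition that recycles the dynamic users' pilot slot to carry a data symbol for a static user, producing sum degrees of freedom equal to~$1$ without needing any CSIT.

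I would begin with two reference schemes that cover the pure corners. Scheme~A sends from a single transmit antenna a Gaussian symbol to one static user over one channel use; since the static user has CSIR, it achieves DoF~$1$, so time-sharing among the $\m'$ static users realises any nonnegative vector with $\sum_{j} d'_j \le 1$ and $d_i = 0$. Scheme~B sends from a single transmit antenna to one dynamic user over a coherence block of length~$\T$: one pilot symbol followed by $\T-1$ Gaussian data symbols, giving DoF $1-\tfrac{1}{\T}$ per user; time-sharing among the $\m$ dynamic users realises any nonnegative vector with $\sum_{i} d_i \le 1-\tfrac{1}{\T}$ and $d'_j = 0$.

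Scheme~C is the key construction. Fix a dynamic user~$i$ and a static user~$j$ and work over one dynamic coherence block of length~$\T$ using only the first transmit antenna (write $h_i$ and $g_j$ for the corresponding scalar channel gains). The transmitted symbol sequence is
\begin{equation*}
x(1) = a, \qquad x(t) = a\,b(t), \quad t = 2, \dots, \T,
\end{equation*}
where $a$ is a Gaussian codeword symbol for static user~$j$ and $b(2),\dots,b(\T)$ are Gaussian codeword symbols for dynamic user~$i$. The static user sees $g_j a + z'$ in slot~$1$ and, knowing $g_j$, decodes $a$ at rate $\log\rho + o(\log\rho)$ per block, i.e.,~DoF~$\tfrac{1}{\T}$. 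The dynamic user views $\tilde h \triangleq h_i a$ as an unknown effective scalar channel that is constant over the block, uses slot~$1$ as a pilot observation of~$\tilde h$, and decodes $b(2),\ldots,b(\T)$ from the remaining $\T-1$ slots, achieving DoF $1-\tfrac{1}{\T}$. Scheme~C therefore attains the corner that puts $1-\tfrac{1}{\T}$ on~$d_i$ and $\tfrac{1}{\T}$ on~$d'_j$, summing to exactly~$1$.

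The conclusion then follows by time-sharing: the convex hull of the origin together with the corners produced by schemes~A, B, and~C is precisely the polytope defined by~\eqref{eq:nn_CSIT_ach_1}--\eqref{eq:nn_CSIT_ach_2}, as is seen by a short linear-programming check (schemes~A and~C together cover the face $\sum_j d'_j + \sum_i d_i = 1$, and scheme~B is needed to reach the dynamic-only vertices). The main obstacle I anticipate is the rate analysis for the dynamic user in scheme~C, because the pilot $\tilde h = h_i a$ is itself a random function of the unknown data symbol~$a$. Standard training-based capacity analysis handles this cleanly: conditioned on $a$ lying in a typical set, the residual channel to the dynamic user is a standard block-fading channel of length~$\T$ with one pilot slot and bounded mean-square channel estimation error, so rate $(\T-1)\log\rho + o(\log\rho)$ per block is achievable, and averaging over~$a$ preserves the degrees of freedom.
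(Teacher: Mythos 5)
Your proposal is correct and follows essentially the same route as the paper: single-user transmission to a static or a dynamic user for the pure corners, product superposition reusing the dynamic user's pilot slot to deliver a symbol to a static receiver with CSIR for the mixed corners $\bigl(\tfrac{1}{\T},\,1-\tfrac{1}{\T}\bigr)$, and time-sharing over the resulting $(\m'+1)(\m+1)$ vertices of the polytope. Your added remark on handling the data-dependent effective channel $\tilde h = h_i a$ via standard training-based analysis is a sound way to make rigorous a step the paper treats as routine.
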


\begin{proof}
The achievable scheme uses product superposition~\cite{Li_coherent,Fadel_coherent}, where the transmitter uses one antenna to send the super symbol to two users: one dynamic and one static,
\begin{equation}
\x^{\dagger}= x_s \x_d^{\dagger},
\end{equation}
where $x_s \in \mathbb{C}$ is a symbol intended for the static user,
\begin{equation}
\x_d^{\dagger} = [ x_{\tau},\; \; \x_{\delta}^{\dagger} ]
\end{equation}
where $ x_{\tau} \in \mathbb{C}$ is a pilot and $\x_{\delta} \in \mathbb{C}^{ \T - 1 }$ is a super symbol intended for the dynamic user. Since degrees of freedom analysis is insensitive to the additive noise, we omit the noise component in the following.
\begin{align}
\y^{\dagger} &= h x_s [ x_{\tau}, \;\; \x_{\delta}^{\dagger} ] \nonumber \\ 
&= [ \overline{h} x_{\tau}, \;\; \overline{h} \x_{\delta}^{\dagger} ],
\end{align}
where $\overline{h}=h x_s$. The dynamic user estimates the equivalent channel $\overline{h}$ during the first time instance and then decodes $\x_{\delta}$ {\em coherently} based on the channel estimate. The static receiver only utilizes the received signal during the first time instance:
\begin{equation}
y'_{1} = g x_s.
\end{equation}
Knowing its channel gain $g$, the static receiver can decode $x_s$. The achievable degrees of freedom of the two users are,
\begin{equation}
(d',d) = \big( \frac{1}{\T}, 1 - \frac{1}{\T} \big).
\end{equation}
We now proceed to prove that the degrees of freedom region characterized by~\eqref{eq:nn_CSIT_ach_1} and~\eqref{eq:nn_CSIT_ach_2} can be achieved via a combination of  two-user product superposition strategies that were outlined above, and single-user strategies. For clarity of exposition we refer to~\eqref{eq:nn_CSIT_ach_1}, which describes the degrees of freedom constraints of the dynamic receivers, as the {\em non-coherent bound}, and to~\eqref{eq:nn_CSIT_ach_2} as the {\em coherent bound}. The non-negativity of degrees of freedom restricts them to the non-negative orthant ${\mathbb R}_+^{\m+\m'}$. The intersection of the coherent bound and the non-negative orthant is a $(\m'+\m)$--simplex that has $\m+\m'+1$ vertices. The non-coherent bound is a hyperplane that partitions the simplex with $\m'+1$ vertices on one side of the non-coherent bound and $\m$ on the other. Therefore the intersection of the simplex with the non-coherent bound produces a polytope with $(\m'+1)(\m+1)$ vertices.\footnote{This can be verified with a simple counting exercise involving the number of edges of the simplex that cross the non-coherent bound.} For illustration, see Fig.~\ref{figure:region_2static_1dynamic} showing the three-user degrees of freedom with two static users and Fig.~\ref{figure:region_1static_2dynamic} with one static user. 

\begin{figure}
\begin{minipage}[b]{3.in}
\center
\includegraphics[width=3.5in]{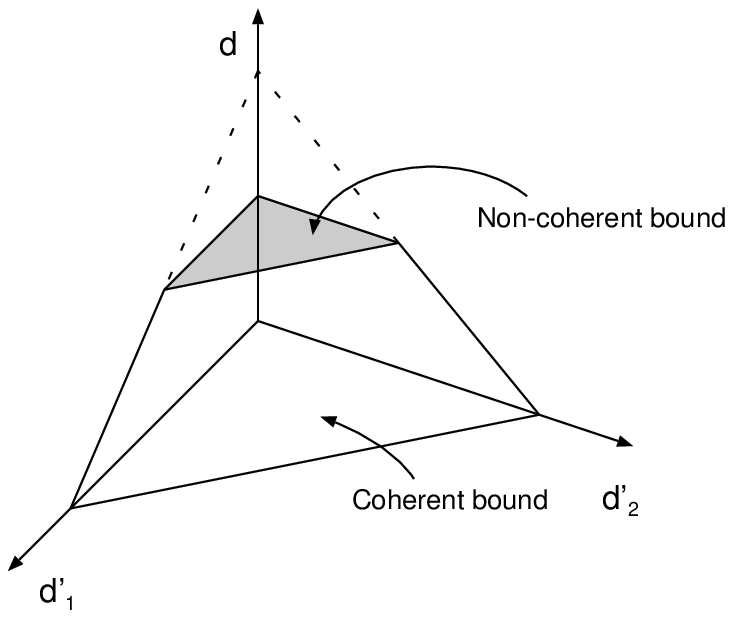}
\caption{Achievable degrees of freedom region of one dynamic and two static users}
\label{figure:region_2static_1dynamic}
\end{minipage}
\hfill
\begin{minipage}[b]{3.in}
\center
\includegraphics[width=3.5in]{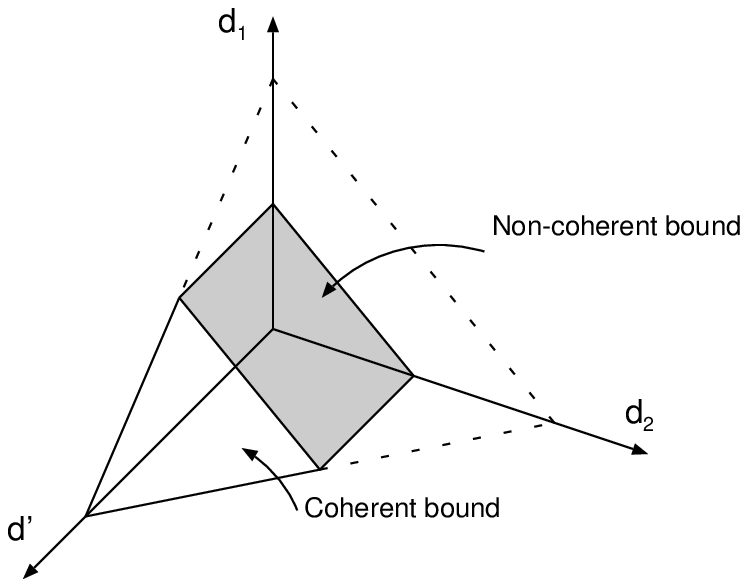}
\caption{Achievable degrees of freedom region of one static and two dynamic users}
\label{figure:region_1static_2dynamic}
\end{minipage}
\end{figure}

We now verify that each of the $(\m'+1)(\m+1)$ vertices can be achieved with either a single-user strategy, or via a two-user product superposition strategy:
\begin{itemize}
\item $\m'$ vertices corresponding to single-user transmission to each static user $j$ achieving one degree of freedom.
\item $\m$ vertices corresponding to single-user transmission to each dynamic user $i$ achieving $(1 - \frac{1}{ \T})$ degrees of freedom.
\item $\m' \m$ vertices corresponding to product superposition applied to all possible pairs of static and dynamic users, achieving $\frac{1}{\T}$ degrees of freedom for one static user and $( 1 - \frac{1}{\T} )$ degrees of freedom for one dynamic user.
\item One trivial vertex at the origin, corresponding to no transmission achieving zero degrees of freedom for all users.
\end{itemize}
Hence, the number of the vertices is $\m' + \m + \m'\m +1 = (\m+1)(\m'+1)$. This completes the achievability proof of Theorem~\ref{theorem:nn_CSIT_ach}.
\end{proof}

\begin{remark}
\label{remark:nn_CSIT_identical}
When the static and dynamic users have the same coherence time, the inner and outer bounds on degrees of freedom coincide. In this case it is degrees of freedom optimal to serve two-users at a time (one dynamic and one static).
\end{remark}

\section{Delayed CSIT for all users}
\label{section:dd_CSIT}
Under delayed CSIT, the transmitter knows each channel gain only after it is no longer valid. This condition is also known as outdated CSIT. We begin by proving inner and outer bounds when transmitting only to static users, only to dynamic users, and to one static and one dynamic user. We then synthesize this collection of bounds into an overall degrees of freedom region. 

\subsection{Transmission to Static Users}
\label{section:dd_CSIT_static}

\begin{theorem}
\label{theorem:dd_CSIT_static}
The degrees of freedom region of the fading broadcast channel characterized by Eq.~\eqref{eq:system_miso_bc} with delayed CSIT and having $\m'$ static users and no dynamic users is
\begin{equation}
d'_{j} \leq \frac{1}{ 1 + \frac{1}{2} + \ldots + \frac{1}{\m'}}, \quad j = 1, \ldots, \m'.
\end{equation}
\end{theorem}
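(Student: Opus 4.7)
The statement is the classical Maddah-Ali-Tse result specialized to the static-only regime: when there are no dynamic users, every receiver enjoys perfect CSIR and the transmitter has only completely stale CSIT, which is exactly the setting of \cite{Maddah_completely}. The plan is therefore to invoke the Maddah-Ali-Tse scheme for achievability and a genie-aided converse based on statistical equivalence, with only light adaptation to the notation of the present paper.

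\textbf{Achievability.} I would apply retrospective interference alignment directly over the $\m'$ static receivers. The scheme operates in $\m'$ phases: in phase $k$, the transmitter sends a block of $\binom{\m'}{k}$ vector symbols, each carrying data intended for a $k$-subset of the users. Because delayed CSIT is available, after each phase the transmitter can form the linear combinations that were overheard at unintended receivers in previous phases and retransmit them in a way that simultaneously resolves interference for the intended group and delivers useful side information to a new group. Static receivers, having perfect CSIR, decode the transmitted linear combinations coherently and then invert the resulting systems of equations to recover their messages. Counting the number of channel uses per phase yields the symmetric per-user rate $\frac{1}{1+\frac{1}{2}+\cdots+\frac{1}{\m'}}$.

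\textbf{Converse.} For the outer bound I would give all receivers global (noncausal) CSIR as an enhancement, which only enlarges the capacity region. Fix any permutation $\pi$ of $\{1,\ldots,\m'\}$ and hand receiver $\pi(k)$ the received signals of $\pi(1),\ldots,\pi(k-1)$ as a genie. After Fano, the sum rate is bounded by a telescoping sum of conditional differential entropies of the form $h(y'_{\pi(k)}\mid y'_{\pi(1)},\ldots,y'_{\pi(k-1)},\Hset,\text{delayed CSIT})$. The key step is the Maddah-Ali-Tse statistical-equivalence lemma: since given $\x$ and the delayed CSIT the static receivers are i.i.d., the normalized joint entropy $\frac{1}{k}h(y'_{\pi(1)},\ldots,y'_{\pi(k)}\mid\cdot)$ is nonincreasing in $k$. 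Applying this inequality to each term of the telescope yields
\begin{equation*}
\sum_{k=1}^{\m'}\frac{R'_{\pi(k)}}{k}\;\leq\;\log\rho+o(\log\rho),
\end{equation*}
valid for every permutation $\pi$. Specializing to the symmetric point $d'_1=\cdots=d'_{\m'}=d'$ gives $d'\bigl(1+\tfrac12+\cdots+\tfrac{1}{\m'}\bigr)\leq 1$, which is the claim.

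\textbf{Main obstacle.} The technical core, and the only step that requires genuine care, is verifying the Maddah-Ali-Tse entropy inequality in the present block-fading, CSIR-enhanced formulation, because one must argue that the extra conditioning on $\Hset$ and on delayed CSIT does not break the statistical equivalence of the static receivers. Once this is established, the rest of the argument is a direct adaptation of \cite{Maddah_completely}; no dynamic-user machinery, product superposition, or multilevel bounding from earlier sections is needed here.
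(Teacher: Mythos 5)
Your overall route---retrospective interference alignment for achievability and a genie-aided, statistical-equivalence converse---is the same one the paper takes, but there are two concrete issues. First, the achievability as you describe it ignores the block-fading structure: the static users have coherence time $\T'\geq 1$, and if the Maddah-Ali--Tse phases are run symbol-by-symbol inside a single coherence block, the channel vectors seen in successive phases are identical, so the retransmitted ``overheard'' linear combinations do not provide new linearly independent equations and the alignment argument collapses. The paper's fix is to run retrospective interference alignment over super-symbols of length $\T'$, so that the channel changes independently between phases; your proof needs this (or an equivalent scheduling of phases across distinct coherence blocks) to go through.

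Second, the step you yourself flag as the ``main obstacle''---that conditioning on delayed CSIT does not break the statistical equivalence of the receivers---is left unresolved, and it is exactly where the standard argument does the work differently. The paper (following Maddah-Ali--Tse) first observes that the genie-enhanced channel, in which receiver $\pi(k)$ is handed the outputs of $\pi(1),\ldots,\pi(k-1)$, is \emph{physically degraded}; since feedback (and hence delayed CSIT) does not increase the capacity of a physically degraded broadcast channel, the delayed CSIT can be removed entirely \emph{before} invoking statistical equivalence, which is then applied to a no-CSIT block-fading channel where TDMA is known to be degrees-of-freedom optimal. Replacing your direct ``condition on delayed CSIT and hope exchangeability survives'' step with this degradedness-plus-feedback argument closes the gap cleanly; without it, you would have to argue explicitly that $\x(t)$, which may depend on the stale channels of specific receivers, still leaves the current receiver outputs exchangeable given the past, which is doable but is precisely the care you deferred.
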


\begin{proof}
The special case of fast fading ($\T'=1$) was discussed by Maddah-Ali and Tse in~\cite{Maddah_completely}, where the achievability was established by {\em retrospective interference alignment} that aligns the interference using the outdated CSIT, and the converse was proved by generating an improved channel without CSIT having a tight degrees of freedom region against TDMA according to the results in~\cite{Huang_degrees,Vaze_degree}. For $\T' \geq 1$, the achievability is established by employing retrospective interference alignment presented in~\cite{Maddah_completely} over super symbols each of length $\T'$. The converse is proved by following the same procedures in~\cite{Maddah_completely} to generate a block-fading improved channel without CSIT and with identical coherence intervals of length $\T'$. According to the results of~\cite{Fadel_coherent,Fadel_disparity}, TDMA is tight against the degrees of freedom region of the improved channel. 
\end{proof}

\subsection{Transmission to Dynamic Users}
\label{section:dd_CSIT_dynamic}

\begin{theorem}
\label{theorem:dd_CSIT_dynamic}
The fading broadcast channel characterized by Eq.~\eqref{eq:system_miso_bc}, with delayed CSIT and having $\m$ dynamic users and no static users, can achieve the degrees of freedom
\begin{equation}
\label{eq:dd_CSIT_dynamic_ach}
d_{i} \leq \frac{1}{ 1 + \frac{1}{2} + \ldots + \frac{1}{\m}} ( 1 - \frac{\m}{\T}), \quad i = 1, \ldots, \m.
\end{equation}
An outer bound on the degrees of freedom region is
\begin{align}
d_{i} \leq & 1 - \frac{1}{\T}, \label{eq:dd_CSIT_dynamic_outer_1} \\
\sum_{i = 1}^{\m} d_{i} \leq & \frac{\m}{ 1 + \frac{1}{2} + \ldots + \frac{1}{\m}}. \label{eq:dd_CSIT_dynamic_outer_2}
\end{align}
\end{theorem}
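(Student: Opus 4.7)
The proof naturally splits into achievability for the inner bound~\eqref{eq:dd_CSIT_dynamic_ach} and two converses~\eqref{eq:dd_CSIT_dynamic_outer_1}, \eqref{eq:dd_CSIT_dynamic_outer_2}. My plan is to combine pilot-based training with the Maddah-Ali--Tse (MAT) retrospective interference alignment for the achievability, and to derive the converses from a channel enhancement that reduces the problem to Theorem~\ref{theorem:dd_CSIT_static}.

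For the achievability of~\eqref{eq:dd_CSIT_dynamic_ach}, I would partition each dynamic coherence block of length $\T$ into a pilot phase of length $\m$ and a data phase of length $\T-\m$. During the pilot phase, $\m$ orthogonal training symbols are emitted from the $\m$ active transmit antennas so that every dynamic receiver can estimate its own $1\times\m$ channel vector. The data phase is treated as one super-symbol of a MAT scheme applied across many such coherence blocks, exactly in the same super-symbol spirit used in Theorem~\ref{theorem:dd_CSIT_static}. Since the CSIT fed back between blocks is precisely the delayed CSI required by~\cite{Maddah_completely}, the extended MAT scheme delivers total DoF $\m/(1+\tfrac12+\cdots+\tfrac1\m)$ per data symbol, and multiplying by the duty factor $(\T-\m)/\T$ yields the per-user DoF $(1-\m/\T)/(1+\tfrac12+\cdots+\tfrac1\m)$, matching the statement.

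The individual outer bound~\eqref{eq:dd_CSIT_dynamic_outer_1} is the standard point-to-point non-coherent block-fading DoF limit for a user of coherence time $\T$ with no free CSIR, the same per-user bound already invoked for~\eqref{eq:nn_CSIT_outer_2} and proved in~\cite{Fadel_coherent,Fadel_disparity}. The sum outer bound~\eqref{eq:dd_CSIT_dynamic_outer_2} follows by genie enhancement: endow each dynamic receiver with instantaneous perfect CSIR at no cost. The enhanced channel is precisely the static-only channel of Theorem~\ref{theorem:dd_CSIT_static} with $\m$ users and coherence $\T'=\T$; that theorem gives $d_i\leq 1/(1+\tfrac12+\cdots+\tfrac1\m)$ for each $i$, and summing over $i$ produces~\eqref{eq:dd_CSIT_dynamic_outer_2}.

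The main obstacle is a subtlety on the achievability side: the MAT scheme requires each receiver to possess delayed CSI of the other receivers' channels in order to decode the side information transmitted in later MAT phases, whereas $\m$ per-block pilots only let each user estimate its own channel. I would handle this by appealing to the standard assumption of global delayed CSI in the MAT framework and noting that any additional cross-user delayed CSI can be relayed to the receivers via an overhead consuming only a sublinear fraction of $\log\rho$, which does not affect the DoF calculation. With that bookkeeping in place, the pilot cost of $\m$ symbols per block is the sole source of DoF loss relative to the static analogue in Theorem~\ref{theorem:dd_CSIT_static}, which is exactly the $(1-\m/\T)$ factor appearing in~\eqref{eq:dd_CSIT_dynamic_ach}.
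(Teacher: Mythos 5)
Your proposal is correct and follows essentially the same route as the paper: $\m$ pilots per coherence block followed by Maddah-Ali--Tse retrospective interference alignment over super-symbols for the achievability, the single-user non-coherent bound for~\eqref{eq:dd_CSIT_dynamic_outer_1}, and a global-CSIR genie enhancement reducing to Theorem~\ref{theorem:dd_CSIT_static} for~\eqref{eq:dd_CSIT_dynamic_outer_2}. The only small point worth making explicit in the individual bound is the step the paper states via Shannon's feedback result---that delayed CSIT, being causal feedback, does not increase point-to-point capacity, which is what licenses applying the no-CSIT single-user limit $1-\frac{1}{\T}$.
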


\begin{proof}
The achievability part can be proved as follows. At the beginning of each super symbol, $\m$ pilots are sent for channel estimation. Then retrospective interference alignment in~\cite{Maddah_completely} over super symbols is employed during the remaining $(\T - \m)$ instances, to achieve~\eqref{eq:dd_CSIT_dynamic_ach}. For the converse part,~\eqref{eq:dd_CSIT_dynamic_outer_2} is proved by giving the users global CSIR, and then applying Theorem~\ref{theorem:dd_CSIT_static}. Moreover,~\eqref{eq:dd_CSIT_dynamic_outer_1} is the single-user bound for each dynamic user that can be proved as follows. For a single user with delayed CSIT, feedback does not increase the capacity~\cite{Shannon_zero}, and consequently the assumption of delayed CSIT can be removed. Hence, the single-user bound for each dynamic user with delayed CSIT is the same as the single-user bound without CSIT~\cite{Zheng_communication}. 
\end{proof}

\subsection{Transmission to One Static and One Dynamic User}
\label{section:dd_CSIT_static_dynamic_one}

\begin{theorem}
\label{theorem:dd_CSIT_static_dynamic_one_ach}
The fading broadcast channel characterized by Eq.~\eqref{eq:system_miso_bc}, with delayed CSIT and having one static and one dynamic user, can achieve the following degrees of freedom
\begin{align}
\Dset_1 &: (d',d) = \big(\frac{2}{3} ( 1 + \frac{1}{\T} ), \frac{2}{3} ( 1 - \frac{2}{\T} ) \big) \label{eq:dd_csit_ach_1}, \\
\Dset_2 &: (d',d) = (\frac{1}{\T}, 1 - \frac{1}{\T} ) \label{eq:dd_csit_ach_2}.
\end{align}
Furthermore, the achievable degrees of freedom region is the convex hull of the above degrees of freedom pairs.
\end{theorem}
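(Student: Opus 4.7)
The plan is to establish the two named corner points $\Dset_1$ and $\Dset_2$ by separate constructions, after which any point in their convex hull follows by time sharing.

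For $\Dset_2 = \bigl(\tfrac{1}{\T},\,1 - \tfrac{1}{\T}\bigr)$ I would reuse the two-user product superposition scheme from the proof of Theorem~\ref{theorem:nn_CSIT_ach}; it requires no CSIT at all and so remains admissible a fortiori under delayed CSIT.

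For $\Dset_1 = \bigl(\tfrac{2}{3}(1+\tfrac{1}{\T}),\,\tfrac{2}{3}(1-\tfrac{2}{\T})\bigr)$ the plan is to lift the Maddah-Ali--Tse retrospective interference alignment scheme of \cite{Maddah_completely} to super-symbols of length~$\T$ (aligned with the dynamic coherence block) and fuse it with product superposition so that every training symbol dedicated to the dynamic user simultaneously carries data to the static user. The construction would operate over a super-block of $3\T$ channel uses partitioned into three phases of length~$\T$ each, with $\g$ held constant throughout (legitimate since $\T'\geq\T$, by choosing the super-block inside a single static coherence interval). Phase~1 uses both transmit antennas to send symbols for the static user, so she collects one scalar equation per slot in the two unknowns transmitted that slot; the dynamic user passively records the interference scalar $\h^{(1)\dagger}\x(t)$, a quantity also reproducible at the transmitter from delayed CSIT. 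Phase~2 carries data for the dynamic user: its first two slots are product-superposition pilots that train $\h^{(2)}$ at the dynamic user while simultaneously delivering two static symbols (decoded via her CSIR), and the remaining $\T-2$ slots convey $2(\T-2)$ data symbols $\{\v_{t'}\}$ for the dynamic user, with the static user recording the induced interference scalars. Phase~3 is the retrospective phase: one product-superposition pilot at the start trains $\h^{(3)}$ and delivers one further static symbol, and the remaining slots broadcast MAT-style scalar combinations of the form $\h^{(1)\dagger}\u_t + \g^{\dagger}\v_{t'}$ computed from delayed CSI. Each receiver then cancels the side-information it has already acquired---the static user removes $\g^{\dagger}\v_{t'}$ observed in Phase~2, the dynamic user removes $\h^{(1)\dagger}\u_t$ observed in Phase~1---and is left with the complementary equation needed to pair with its own Phase~1 (resp.\ Phase~2) measurement and resolve the pair of unknowns. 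Careful bookkeeping, in which one slot of Phase~1 is used as a single-antenna transmission so that the retrospective equation count exactly matches the static user's demand, yields $2(\T+1)$ decoded static symbols and $2(\T-2)$ decoded dynamic symbols per $3\T$ channel uses, matching $\Dset_1$. Time sharing with the scheme for $\Dset_2$ then sweeps out the convex hull.

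The main obstacle is coping with the asymmetric CSIR: because product superposition only exposes to the dynamic user the equivalent channel $\bar{\h} = \h\,x_s$ rather than $\h$ itself, the Phase~3 retrospective scalars must be wrapped inside the same multiplicative static symbol used for training so that she can divide out $\bar{\h}$ and recover the MAT combination cleanly. One also has to verify that the $2\times 2$ linear systems assembled at each receiver---$\{\g,\h^{(1)}\}$ at the static user and $\{\g,\h^{(2)}\}$ at the dynamic user---are almost surely nonsingular, which follows from the independence of $\g$ and the various $\h^{(k)}$ under Rayleigh fading.
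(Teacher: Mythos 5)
Your architecture is essentially the paper's: $\Dset_2$ via CSIT-free product superposition, and $\Dset_1$ via a three-phase Maddah-Ali--Tse scheme in which the pilot slots of each phase double as data for the static user, with the final phase broadcasting a reconstructed sum of the two users' overheard interference. Your symbol counts ($2(\T+1)$ static and $2(\T-2)$ dynamic symbols per $3\T$ uses) match the paper's.

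However, there is a genuine gap in your timing of the delayed CSIT for the static channel. You place the entire $3\T$ super-block \emph{inside a single static coherence interval} with $\g$ held constant throughout. Under the paper's delayed-CSIT model, the transmitter learns a channel realization only after the corresponding coherence block has ended; hence during Phase~3 the transmitter still does not know the current $\g$ and cannot form the retrospective combinations $\h^{(1)\dagger}\u_t + \g^{\dagger}\v_{t'}$ --- reconstructing the static user's Phase-2 overheard interference requires exactly the $\g$ that is not yet available. (The delayed CSIT of $\h^{(1)}$ is fine, since two dynamic coherence blocks have elapsed by Phase~3.) The paper avoids this by making each phase a full super-symbol of length $\T'$ occupying its \emph{own} static coherence block (with $\ell=\T'/\T$ parallel dynamic sub-blocks inside it), so that by Phase~3 the transmitter has stale knowledge of both $\g_2$ and all the $\h_{1,n}$; the static user then decodes $\U_n$ using its outdated CSIR of $\h_{1,n}$. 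To repair your argument you would need to spread the three phases over three consecutive static coherence intervals in the same way, at which point the construction coincides with the paper's. A secondary point worth stating explicitly is that the static user's recovery of $\u_t$ from the pair $\{\g^{\dagger}\u_t,\ \h^{(1)\dagger}\u_t\}$ requires it to know $\h^{(1)}$, which the system model grants only as outdated side information at the static receiver --- you should invoke that assumption rather than the almost-sure nonsingularity alone.
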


\begin{proof}
From Section~\ref{section:nn_CSIT_ach}, product superposition achieves the pair~\eqref{eq:dd_csit_ach_2} that does not require CSIT for any of the two users. The remainder of the proof is dedicated to the achievability of the pair~\eqref{eq:dd_csit_ach_1}. We provide a transmission scheme based on retrospective interference alignment~\cite{Maddah_completely} along with product superposition.
\begin{enumerate}
\item The transmitter first emits a super-symbol intended for the static user:
\begin{equation}
\X_1 = [ \X_{1,1},\;\; \cdots, \;\; \X_{1,\ell} ],
\end{equation}
where $\ell = \frac{\T'}{\T}$, and each $\X_{1,n} \in \mathbb{C}^{2 \times \T}$ occupies $T$ time instances and has the following structure:
\begin{equation}
\X_{1,n} = [ \bar{\U}_n, \;\; \bar{\U}_n \U_n ], \quad n = 1, \ldots, \ell,
\end{equation}
both the diagonal matrix $\bar{\U}_n \in \mathbb{C}^{2 \times 2}$ and $\U_n \in \mathbb{C}^{2 \times (\T-2)}$ contain symbols intended for the static user. The components of  $\y_1^{'\dagger} = [ \y_{1,1}^{'\dagger}, \;\; \cdots, \;\; \y_{1,\ell}^{'\dagger} ]$ are:
\begin{align}
\y_{1,n}^{'\dagger} = & [ \g_1^\dagger \bar{\U}_n, \;\; \g_1^\dagger \bar{\U}_n \U_n ], \quad n = 1, \ldots, \ell \nonumber\\
= & [ \tilde{\g}_{1,n}^\dagger, \;\; \tilde{\g}_{1,n}^\dagger \U_n ],
\end{align}
where $\tilde{\g}_{1,n}^\dagger = \g_1^\dagger \bar{\U}_n$. The static user by definition knows $\g_1$ so it can decode $\bar{\U}_n$ which yields $2\frac{\T'}{\T}$ degrees of freedom. The remaining $\frac{\T'}{\T}\left(\T - 2\right)$ observations in $\tilde{\g}_{1,n}^\dagger\U_n$ involve $2\frac{\T'}{\T}\left(\T - 2\right)$ unknowns, so they require a further $\frac{\T'}{\T}\left(\T - 2\right)$ independent observations for reliable decoding. 

The components of $\y_1^{\dagger} = [ \y_{1,1}^{\dagger}, \;\; \cdots, \;\; \y_{1,\ell}^{\dagger}]$ are
\begin{align}
\y_{1,n}^{\dagger} = & [ \h_{1,n}^\dagger \bar{\U}_n, \;\; \h_{1,n}^\dagger \bar{\U}_n \U_n ], \quad n = 1, \ldots, \ell \nonumber\\
 = & [ \tilde{\h}_{1,n}^\dagger, \;\; \tilde{\h}_{1,n}^\dagger \U_n ],
\end{align}
where $\tilde{\h}_{1,n}^\dagger = \h_{1,n}^\dagger \bar{\U}_n$ is the equivalent channel estimated by the dynamic user. The dynamic user saves $\tilde{\h}_{1,n}^\dagger \U_n$ for interference cancellation in the upcoming steps.

\item The transmitter sends a second super symbol intended for the dynamic user:
\begin{equation}
\X_2 = [ \X_{2,1}, \;\; \cdots, \;\; \X_{2,\ell} ],
\end{equation}
where
\begin{equation}
\X_{2,n} = [ \tilde{\U}_n, \;\; \tilde{\U}_n \V_n ], \quad n = 1, \ldots, \ell,
\end{equation}
$\tilde{\U}_n \in \mathbb{C}^{2 \times 2}$ is diagonal and includes 2 independent symbols intended for the static user, and $\V_n \in \mathbb{C}^{2 \times (\T -2)}$ contains independent symbols intended for the dynamic user. The components of
$\y_2^{\dagger} = [ \y_{2,1}^{\dagger}, \;\; \cdots, \;\; \y_{2,\ell}^{\dagger} ]$
are
\begin{align}
\y_{2,n}^{\dagger} = & [ \h_{2,n}^\dagger \tilde{\U}_n, \;\; \h_{2,n}^\dagger \tilde{\U}_n \V_n ], \quad n = 1, \ldots, \ell \nonumber\\
= & [ \tilde{\h}_{2,n}^\dagger, \;\; \tilde{\h}_{2,n}^\dagger\V_n ],
\end{align}
where $\tilde{\h}_{2,n}^\dagger = \h_{2,n}^\dagger \tilde{\U}_n$ is the equivalent channel estimated by the dynamic user. The dynamic user saves $\tilde{\h}_{2,n}^\dagger\V_n$ which includes $\frac{\T'}{\T}\left(\T - 2\right)$ independent observations about $2\frac{\T'}{\T}\left(\T - 2\right)$ unknowns, and hence an additional $\frac{\T'}{\T}\left(\T - 2\right)$ observations are needed to decode $\V_n$. The components of $\y_2^{'\dagger} = [ \y_{2,1}^{'\dagger}, \;\; \cdots, \;\; \y_{2,\ell}^{'\dagger} ]$ are
\begin{align}
\y'_{2,n} = & [ \g_2^\dagger \tilde{\U}_n, \;\; \g_2^\dagger \tilde{\U}_n \V_n ], \quad n = 1, \ldots, \ell \nonumber\\
= & [ \tilde{\g}_{2,n}^\dagger, \;\; \tilde{\g}_{2,n}^\dagger\V_n ],
\end{align}
where $\tilde{\g}_{2,n}^\dagger = \g_2^\dagger \tilde{\U}_n$ is the equivalent channel estimated by the static user; the static user saves $\tilde{\g}_{2,n}^\dagger \V_n$ for the upcoming steps. Knowing $\g_2$, the static user achieves $2\frac{\T'}{\T}$ further degrees of freedom from decoding $\tilde{\U}_n$.

\item The transmitter emits a third super symbol consisting of a linear combination of the signals generated from the first and the second super symbols.
\begin{equation}
\X_3 = [ \X_{3,1}, \;\; \cdots, \;\; \X_{3,\ell} ], 
\end{equation}
where
\begin{equation}
\X_{3,n} = [ \hat{\U}_n, \;\; \hat{\U}_n ( \tilde{\h}_{1,n}^\dagger\U_n + \tilde{\g}_{2,n}^\dagger\V_n ) ], \quad n = 1, \ldots, \ell,
\end{equation}
$\hat{\U}_n \in \mathbb{C}^{2 \times 2}$ is diagonal and contains 2 independent symbols intended for the static user, and hence the static user achieves further $2\frac{\T'}{\T}$ degrees of freedom.

The static user cancels $\tilde{\g}_{2,n}^\dagger\V_n$ saved during the second super symbol and obtains $\tilde{\h}_{1,n}^\dagger\U_n$ that includes the additional independent $\frac{\T'}{\T}\left( \T - 2\right)$ observations needed for decoding $\U_n$. Therefore, the static user achieves $2\frac{\T'}{\T}( \T - 2)$ further degrees of freedom. The dynamic user estimates the equivalent channel $\tilde{\h}_{3,n}^\dagger = \h_{3,n}^\dagger \hat{\U}_n$, cancels $\tilde{\h}_{1,n}^\dagger\U_n$ saved during the first super symbol, and obtains $\tilde{\g}_{2,n}^\dagger\V_n$ that contains the additional observations needed for decoding $\V_n$. Hence, the dynamic user achieves $2\frac{\T'}{\T} ( \T - 2 )$ degrees of freedom.
\end{enumerate}
In aggregate, over $3\T'$ time instants, the static and dynamic user achieve the degrees of freedom
\begin{equation}
d'=6\frac{\T'}{\T}+ 2\frac{\T'}{\T} ( \T - 2), \qquad d= 2\frac{\T'}{\T} ( \T - 2 ).
\end{equation}
This completes the proof of Theorem~\ref{theorem:dd_CSIT_static_dynamic_one_ach}.
\end{proof}

\begin{theorem}
\label{theorem:dd_CSIT_static_dynamic_one_outer}
An outer bound on the degrees of freedom region of the fading broadcast channel characterized by Eq.~\eqref{eq:system_miso_bc}, with one static and one dynamic user having delayed CSIT, is
\begin{align}
\frac{d'}{2} + d &\leq 1, \label{eq:dd_CSIT_static_dynamic_one_outer_1} \\
d' + \frac{d}{2} &\leq 1, \label{eq:dd_CSIT_static_dynamic_one_outer_2} \\
 d &\leq 1 - \frac{1}{\T}. \label{eq:dd_CSIT_static_dynamic_one_outer_3}
\end{align}
\end{theorem}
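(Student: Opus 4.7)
The plan is to address the three bounds separately. Bound~\eqref{eq:dd_CSIT_static_dynamic_one_outer_3}, $d \leq 1 - \frac{1}{\T}$, is a single-user claim: feedback cannot enlarge point-to-point capacity~\cite{Shannon_zero}, so delayed CSIT degenerates to no CSIT once the static user is silenced, and the bound is the noncoherent MISO DoF of~\cite{Zheng_communication}, as already used in Theorem~\ref{theorem:dd_CSIT_dynamic}. The other two bounds are the MAT-style constraints characteristic of a two-user MISO BC with delayed CSIT; they cannot be obtained via the multilevel-BC argument of Theorem~\ref{theorem:nn_CSIT_outer}, because delayed CSIT breaks the stochastic equivalence between receivers that that argument exploited, so a fresh enhancement plus extremal entropy inequality (EEI) approach is needed.

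For~\eqref{eq:dd_CSIT_static_dynamic_one_outer_1}, $d'/2+d \leq 1$, I would enhance the channel in two steps: (i) provide global CSIR to the dynamic receiver, which only enlarges the capacity region and removes the CSIR asymmetry; (ii) hand the dynamic receiver the entire output sequence $y'^{n}$ of the static receiver, so that the enhanced channel is physically degraded with the dynamic user as the strong user. Fano's inequality then yields
\begin{align*}
nR' &\leq h(y'^n|\Hset) - h(y'^n|\M',\Hset) + n\epsilon_n, \\
nR  &\leq h(y^n|y'^n,\M',\Hset) - h(y^n|y'^n,\M',\M,\Hset) + n\epsilon_n,
\end{align*}
with the second subtrahend being $o(n\log\rho)$ since, conditioned on both messages and all channel coefficients, only additive noise remains. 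Combining these with weights $\frac{1}{2}$ and $1$, and using $h(y'^n|\Hset) \leq n\log\rho + o(n\log\rho)$, what is left to bound is $h(y^n|y'^n,\M',\Hset) - \frac{1}{2} h(y'^n|\M',\Hset)$. Applying Lemma~\ref{lemma:DoF_bounding} with $H_1$ the joint $2$-dimensional channel stacking $\g^{\dagger}$ above $\h^{\dagger}$ and $H_2 = \g^{\dagger}$ (so $N_1=2$, $N_2=1$, $\N_t \geq 2$), after single-letterization exactly as in~\eqref{eq:nn_CSIT_R_d_R_s_3}, delivers $h(y^n|y'^n,\M',\Hset) \leq h(y'^n|\M',\Hset) + o(n\log\rho)$, which cancels the residual term and yields $d'/2+d \leq 1$ as $\rho \to \infty$. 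Bound~\eqref{eq:dd_CSIT_static_dynamic_one_outer_2} is then obtained by interchanging the two receivers in the enhancement and repeating the derivation verbatim.

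The main obstacle is the mismatched coherence intervals $\T' \neq \T$: the per-symbol distributions of $y'(t)$ and $y(t)$ are not identical, so the symmetric MAT converse does not carry over verbatim. Lemma~\ref{lemma:DoF_bounding} is nevertheless applicable because it is invoked \emph{after} conditioning on the full channel realization $\Hset$; given $\Hset$, each output at time $t$ is a scalar Gaussian observation of $\x(t)$ through a Rayleigh coefficient with identical second-order statistics and a common trace-constrained input covariance, so the coherence-time mismatch affects only $\Hset$, which is already in the conditioning, and the factor-of-two entropy bound that drives the argument is preserved. Verifying that the delayed-CSIT Markov structure---that $\x(t)$ at each instant is a function only of past channels and the messages---remains compatible with single-letterization under heterogeneous block fading is the technically delicate part.
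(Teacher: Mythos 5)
Your treatment of \eqref{eq:dd_CSIT_static_dynamic_one_outer_3} and your overall plan (enhance to a physically degraded channel, then apply the extremal entropy inequality and Lemma~\ref{lemma:DoF_bounding}) match the paper, but your enhancement for \eqref{eq:dd_CSIT_static_dynamic_one_outer_1} points the degradation the wrong way, and the Fano step that makes your arithmetic close is not valid. You give $y'^n$ to the \emph{dynamic} receiver and then bound its rate by $nR \leq h(y^n|y'^n,\M',\Hset) - h(y^n|y'^n,\M',\M,\Hset) + n\epsilon_n = I(\M;y^n|y'^n,\M',\Hset)+n\epsilon_n$. The correct bound for the strong user of a degraded broadcast channel is $nR \leq I(\M;y^n,y'^n|\M',\Hset)+n\epsilon_n$, which carries the additional term $I(\M;y'^n|\M',\Hset)$; that term is $\Theta(n\log\rho)$ in general (the static user's output certainly carries information about $\M$ --- the achievable scheme of Theorem~\ref{theorem:dd_CSIT_static_dynamic_one_ach} exploits exactly this), so it cannot be dropped. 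Restoring it, your weighted combination only gives $\tfrac{1}{2}nR'+nR \leq n\log\rho + h(y^n|y'^n,\M',\Hset)+o(n\log\rho)$, i.e.\ $\tfrac{d'}{2}+d\leq 2$. In fact, with the correct converse your enhancement (dynamic user strong) yields $nR'\leq h(y'^n|\Hset)-h(y'^n|U,\Hset)$ and $nR\leq h(y^n,y'^n|U,\Hset)+o(n\log\rho)$, whose $1:\tfrac{1}{2}$ combination proves the \emph{other} inequality~\eqref{eq:dd_CSIT_static_dynamic_one_outer_2}, not~\eqref{eq:dd_CSIT_static_dynamic_one_outer_1}.

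The fix is the paper's choice of direction: to prove $\tfrac{d'}{2}+d\leq 1$, give $y$ to the \emph{static} receiver, making the static user the strong one. Then $R'\leq I(x;y',y|U,\Hset)=h(y',y|U,\Hset)+o(\log\rho)$ and $R\leq I(U;y|\Hset)$, and the weight $\tfrac{1}{2}$ on $R'$ pairs with the $\tfrac{1}{2}\log|\I+\H\Cov_x\H^{\dagger}|$ term of the extremal entropy inequality, so that $\tfrac{1}{2}h(y',y|U,\Hset)-h(y|U,\Hset)\leq o(\log(\rho))$ by Lemma~\ref{lemma:DoF_bounding}; the leftover $h(y|\Hset)\leq\log(\rho)+o(\log(\rho))$ gives the bound. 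One further step you gloss over: after the enhancement the transmitter still has delayed CSIT, and you must invoke the fact that causal feedback does not increase the capacity of a physically degraded broadcast channel~\cite{Gamal_feedback} before applying the single-letter degraded-BC converse with auxiliary $U$. Your observation that Lemma~\ref{lemma:DoF_bounding} survives the $\T\neq\T'$ mismatch because all entropies are conditioned on the full realization $\Hset$ is correct and consistent with the paper.
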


\begin{proof}
The inequality~\eqref{eq:dd_CSIT_static_dynamic_one_outer_3} represents the single-user outer bound~\cite{Zheng_communication}. We prove the bound~\eqref{eq:dd_CSIT_static_dynamic_one_outer_1} as follows. We enhance the original channel by giving both users global CSIR. In addition, the channel output of the dynamic user, $y(t)$, is given to the static user. Therefore, the channel outputs at time instant $t$ are $( y'(t), y(t), \Hset )$ at the static user, and $( y(t), \Hset )$ at the dynamic user. The enhanced channel is physically degraded~\cite{Bergmans_random, Bergmans_simple}, hence, removing the delayed CSIT does not reduce the capacity~\cite{Gamal_feedback}. Also,
\begin{align}
R' \leq & I ( x(t) ; y'(t), y(t) | U,\Hset ) \twocolbreak
  = h ( y'(t), y(t) | U,\Hset ) 
 - h ( y'(t), y(t) | U, x(t), \Hset ) \nonumber \\
R \leq & I ( U ; y(t) | \Hset ) 
= h ( y(t) | \Hset ) 
 - h ( y(t) | U,\Hset ),
\end{align}
where $U$ is an auxiliary random variable, and $U \rightarrow x \rightarrow (y'(t), y(t))$ forms a Markov chain. Therefore, 
\begin{align}
\frac{R'}{2} + R \leq & h ( y(t) | \Hset) 
 + \frac{1}{2} h ( y'(t), y(t) | U, \Hset) \twocolbreak
 - h ( y(t) | U, \Hset) 
 + o ( \log(\rho) ) \nonumber \\ 
\leq & \log(\rho)
+ \frac{1}{2} h ( y'(t), y(t) | U, \Hset) \twocolbreak
- h ( y(t) | U, \Hset) 
 + o( \log(\rho) ) \label{eq:dd_CSIT_outer_Rs_Rd_1} \\
\leq &\log(\rho)
+ \max_{\text{Tr} \{\Cov_x\} \leq \rho, \Cov_x \succcurlyeq 0} 
 \mathbb{E}_H \big\{ \frac{1}{2} \log | \I + \H \Cov_x \H^\dagger | \twocolbreak
 - \log(1 + \h^\dagger(t) \Cov_x \h(t) ) \big\} 
+ o(\log(\rho) ) \label{eq:dd_CSIT_outer_Rs_Rd_2} \\
 \leq & \log(\rho) + o(\log(\rho) ), \label{eq:dd_CSIT_outer_Rs_Rd_3}
\end{align}
where~\eqref{eq:dd_CSIT_outer_Rs_Rd_1} follows since $h(y(t)| \Hset) \leq \log(\rho) + o(\log(\rho))$~\cite{Telatar_capacity},~\eqref{eq:dd_CSIT_outer_Rs_Rd_2} follows from extremal entropy inequality~\cite{Yi_degrees,Liu_extremal,Liu_vector}, and~\eqref{eq:dd_CSIT_outer_Rs_Rd_3} follows from Lemma~\ref{lemma:DoF_bounding}. Hence, the bound~\eqref{eq:dd_CSIT_static_dynamic_one_outer_1} is proved.
A similar argument, with the role of the two users reversed, leads to the bound~\eqref{eq:dd_CSIT_static_dynamic_one_outer_2}.
\end{proof}

\begin{remark}
\label{remark:delayed_static_dynamic_one_opt}
The inner and outer bounds obtained for the two-user case partially
meet, with the gap diminishing with
the coherence time of the dynamic user as shown in
Fig.~\ref{figure:delayed_static_dynamic_one_T15} and
Fig.~\ref{figure:delayed_static_dynamic_one_T30} for $\T = 15$ and $\T
= 30$, respectively.
\end{remark}

\begin{figure}
\center
\includegraphics[width=\Figwidth]{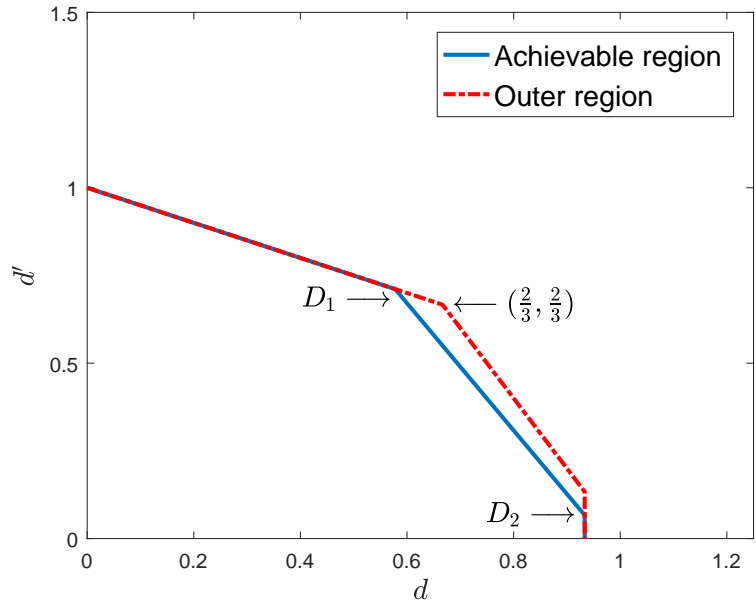}
\caption{One static and one dynamic with delayed CSIT and $\T = 15$}
\label{figure:delayed_static_dynamic_one_T15}
\end{figure}
\begin{figure}
\center
\includegraphics[width=\Figwidth]{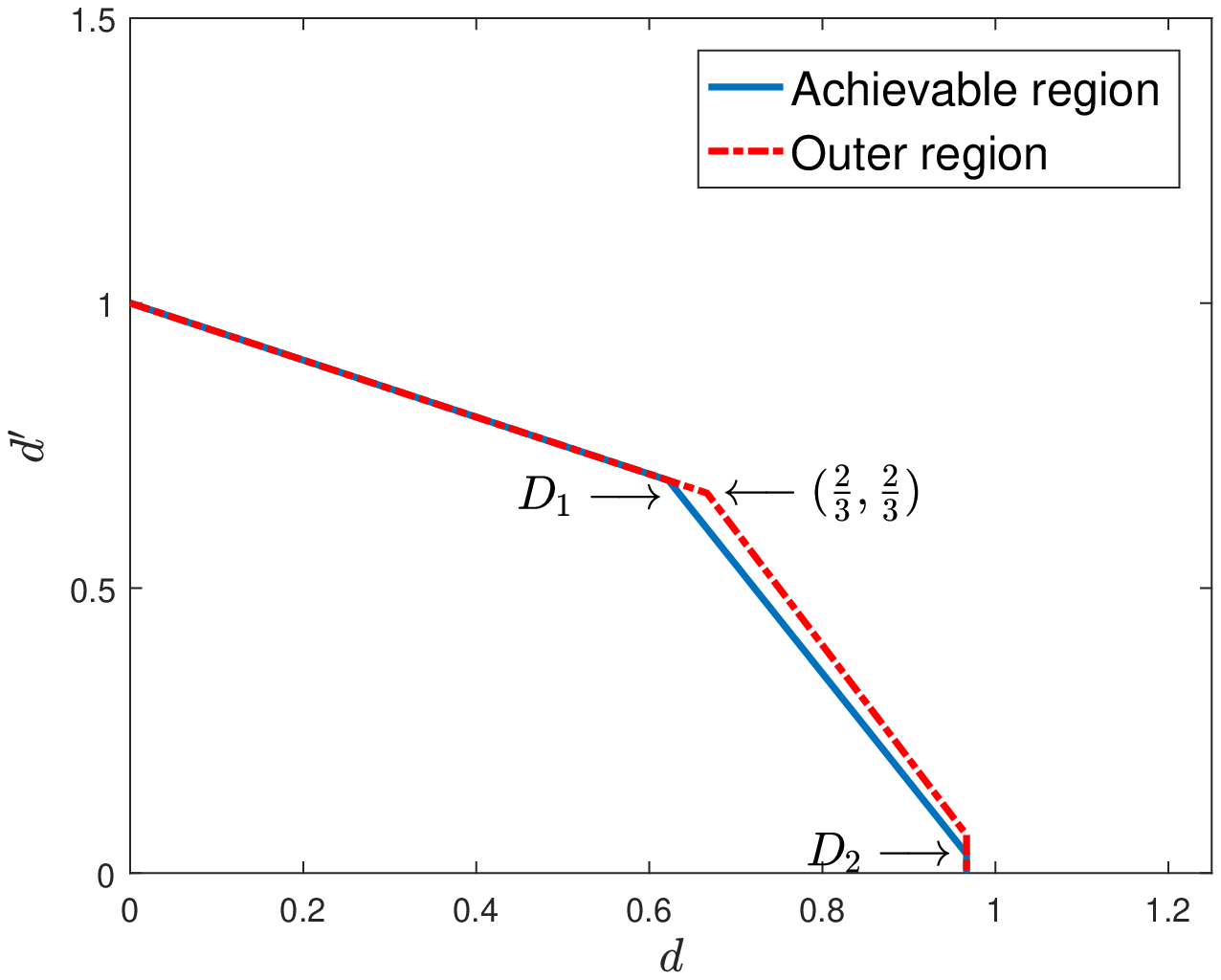}
\caption{One static and one dynamic with delayed CSIT and $\T = 30$}
\label{figure:delayed_static_dynamic_one_T30}
\end{figure}

\subsection{Transmission to arbitrary number of static and dynamic users}
\label{section:dd_CSIT_static_dynamic_K}

\begin{theorem}
\label{theorem:dd_CSIT_static_dynamic_K}
The fading broadcast channel characterized by Eq.~\eqref{eq:system_miso_bc}, with delayed CSIT, can achieve the multiuser degrees of freedom characterized by vectors $\Dset_i$,
\begin{align}
\Dset_1 & : 
\frac{1}{1 +\frac{1}{2}+\ldots + \frac{1}{\m'}} \sum_{i=1}^{\m'} \e_i^{\dagger}, \label{eq:dd_CSIT_ach_Kusers_1} \\
\Dset_2, & \ldots, \Dset_{\m\m'+1}: 
\frac{2}{3}( 1 + \frac{1}{\T}) \e_j^{\dagger} + \frac{2}{3}( 1 - \frac{2}{\T}) \e_{\m'+i}^{\dagger}, \twocolbreak \ j = 1, \ldots, \m' , \quad i = 1, \ldots, \m,
 \label{eq:dd_CSIT_ach_Kusers_2} \\
\Dset_{\m\m'+2}, & \ldots, \Dset_{\m\m'+\m'+2} :
\frac{\m}{\T} \e_j^{\dagger} + \twocolbreak \frac{1}{1 + \frac{1}{2} + \ldots + \frac{1}{\m}} (1 - \frac{\m}{\T} ) \sum_{i=1}^{\m} \e_i^{\dagger}, 
\quad j = 1, \ldots, \m',
 \label{eq:dd_CSIT_ach_Kusers_3}
\end{align}
where $\e_j$ is the canonical coordinate vector. Their convex hull characterized an achievable degrees of freedom region.
\end{theorem}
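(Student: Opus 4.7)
The plan is to realize each vertex $\Dset_i$ by a dedicated transmission scheme (silencing the unused receivers), and then to take the convex hull by time-sharing. The vertices fall naturally into three families, each of which is obtained by invoking a scheme that has already been established, with one genuinely new combination needed for the third family.

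First, I would handle $\Dset_1$: silence all $\m$ dynamic users and apply Theorem~\ref{theorem:dd_CSIT_static} to the $\m'$ static users, which directly attains $d'_j = 1/(1+\tfrac{1}{2}+\cdots+\tfrac{1}{\m'})$ for every $j$ via retrospective interference alignment over super-symbols of length $\T'$. Next, for each of the $\m\m'$ vertices $\Dset_2,\ldots,\Dset_{\m\m'+1}$ I would pick one static index $j$ and one dynamic index $i$, silence all others, and invoke the two-user scheme from Theorem~\ref{theorem:dd_CSIT_static_dynamic_one_ach} (the pair $\bigl(\tfrac{2}{3}(1+\tfrac{1}{\T}),\tfrac{2}{3}(1-\tfrac{2}{\T})\bigr)$). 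That scheme, which chains retrospective interference alignment across three super-symbols with product superposition, transfers verbatim because the other users' signals can simply be turned off.

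The last family $\Dset_{\m\m'+2},\ldots,\Dset_{\m\m'+\m'+2}$ is the only place where a new construction is needed. Here I would combine retrospective interference alignment among all $\m$ dynamic users with product superposition on the pilots. Concretely, within each super-symbol of length $\T$, the first $\m$ channel uses are pilot symbols $x_\tau^{(1)},\ldots,x_\tau^{(\m)}$ for the dynamic users, and the remaining $\T-\m$ uses carry the Maddah-Ali-Tse retrospective alignment payload for the $\m$ dynamic users (with outdated CSIT), yielding $\tfrac{1}{1+\frac{1}{2}+\cdots+\frac{1}{\m}}(1-\tfrac{\m}{\T})$ degrees of freedom per dynamic user as in Theorem~\ref{theorem:dd_CSIT_dynamic}. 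The new ingredient is that each of the $\m$ pilot symbols is multiplied by an independent data symbol $x_s^{(k)}$ destined for the chosen static user $j$; since the static user knows $g_j$ and the pilot value, it recovers $x_s^{(k)}$ from $g_j x_\tau^{(k)} x_s^{(k)}$ in each of the $\m$ pilot slots, while each dynamic user still estimates its equivalent channel from the pilot. This realizes the vertex with static user $j$ at $\m/\T$ and every dynamic user at $\tfrac{1}{H_\m}(1-\tfrac{\m}{\T})$. Varying $j$ over $1,\ldots,\m'$ gives $\m'$ such vertices, and the remaining one in this group is the pure-dynamic vertex obtained by dropping the product-superposition layer, which is exactly Theorem~\ref{theorem:dd_CSIT_dynamic}.

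Finally, an achievable region is a convex set, so time-sharing among the schemes attaining the vertices $\Dset_1,\ldots,\Dset_{\m\m'+\m'+2}$ achieves their convex hull. The main obstacle, and really the only nontrivial step, is verifying that reusing the dynamic-user pilots via product superposition does not contaminate either the static user's decoding (which is immediate because $g_j$ is known at the static receiver and $x_\tau^{(k)}$ is a fixed known pilot) or the dynamic users' channel estimation and subsequent retrospective alignment (which only depends on the equivalent channel $\bar{\h}_i = \h_i x_\tau x_s$ being a known-statistics rescaling of $\h_i$, so the subsequent $(T-m)$-slot alignment protocol goes through unchanged in the degrees of freedom sense); all other steps are immediate invocations of earlier theorems.
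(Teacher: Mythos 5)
Your proposal is correct and follows essentially the same route as the paper: $\Dset_1$ by invoking Theorem~\ref{theorem:dd_CSIT_static}, the $\m\m'$ mixed vertices by invoking the two-user scheme of Theorem~\ref{theorem:dd_CSIT_static_dynamic_one_ach}, and the last family by a super-symbol $[\U,\;\U\V]$ in which the diagonal block simultaneously serves as the dynamic users' pilots and carries the static user's data while $\V$ carries the Maddah-Ali--Tse retrospective alignment payload, followed by time-sharing for the convex hull. Your explicit check that the pilot reuse leaves the static user's decoding and the dynamic users' equivalent-channel estimation uncontaminated is exactly the (implicit) content of the paper's product-superposition step, so nothing is missing.
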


\begin{proof}
The achievability of~\eqref{eq:dd_CSIT_ach_Kusers_1} was proved in Section~\ref{section:dd_CSIT_static} via multiuser transmission to static users. The achievability of~\eqref{eq:dd_CSIT_ach_Kusers_2} was proved in Section~\ref{section:dd_CSIT_static_dynamic_one}, via a two-user transmission to a dynamic-static pair. 

We now show the achievability of~\eqref{eq:dd_CSIT_ach_Kusers_3} via retrospective interference alignment~\cite{Maddah_completely} along with product superposition. Over a super symbol of length $\T$, consider the following transmission:
\begin{equation}
\X = [ \U, \;\; \U \V ],
\end{equation}
where $\U \in \mathbb{C}^{\m \times \m}$ is diagonal and includes $\m$ independent symbols intended for the static user $j$, and $\V \in \mathbb{C}^{\m \times (\T - \m)}$ is a super symbol containing independent symbols intended for the dynamic users according to retrospective interference alignment~\cite{Maddah_completely}. Therefore, the static user decodes $\U$. Thus, over $\T$ time instants, the static user achieves $\m$ degrees of freedom and the dynamic users achieve $\frac{1}{1 + \frac{1}{2} + \ldots + \frac{1}{\m}} ( \T - \m )$, hence~\eqref{eq:dd_CSIT_ach_Kusers_3} is achieved.
\end{proof}

\begin{theorem}
\label{theorem:dd_CSIT_outer_Kusers}
An outer bound on the degrees of freedom of the fading broadcast channel characterized by Eq.~\eqref{eq:system_miso_bc}, with delayed CSIT, is
\begin{align}
\sum_{j = 1}^{\m'} \frac{d'_{j}}{\m' +\m} + \sum_{i =1}^{\m} \frac{d_{i}}{\m} &\leq 1, 
 \label{eq:dd_CSIT_outer_Kusers_1} \\
\sum_{j = 1}^{\m'} \frac{d'_{j}}{\m'} + \sum_{i =1}^{\m} \frac{d_{i}}{\m' + \m} &\leq 1, 
 \label{eq:dd_CSIT_outer_Kusers_2} \\
 d'_{j} &\leq 1,  \quad \forall j = 1, \ldots, \m',
\label{eq:dd_CSIT_outer_Kusers_3} \\ d_{i} &\leq 1 - \frac{1}{\T}, \quad \forall i = 1, \ldots, \m.
\label{eq:dd_CSIT_outer_Kusers_4}
\end{align}
\end{theorem}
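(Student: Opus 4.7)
The plan is to note first that the constraints \eqref{eq:dd_CSIT_outer_Kusers_3} and \eqref{eq:dd_CSIT_outer_Kusers_4} are single-user outer bounds: $d'_j\leq 1$ is the elementary point-to-point MISO DoF bound with CSIR at a single-antenna receiver, while $d_i\leq 1-1/T$ was already established in Theorem~\ref{theorem:dd_CSIT_dynamic} using the fact that feedback (and thus delayed CSIT) does not increase single-user capacity. The substantive work is the pair of mixed bounds \eqref{eq:dd_CSIT_outer_Kusers_1} and \eqref{eq:dd_CSIT_outer_Kusers_2}, which generalize \eqref{eq:dd_CSIT_static_dynamic_one_outer_1} and \eqref{eq:dd_CSIT_static_dynamic_one_outer_2} from the two-user case to $(\m,\m')$ users. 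I will treat \eqref{eq:dd_CSIT_outer_Kusers_1} in detail; \eqref{eq:dd_CSIT_outer_Kusers_2} follows by swapping the roles of the two groups.

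The core idea is the same enhancement used in Theorem~\ref{theorem:dd_CSIT_static_dynamic_one_outer}: grant global CSIR to everyone, and hand every static receiver the entire collection $\y_d\triangleq(y_1,\ldots,y_\m)$ of dynamic receivers' outputs. After this enhancement the channel is physically degraded in the two-group sense, with a ``strong'' super-receiver observing $(\y'_s,\y_d)$, where $\y'_s\triangleq(y'_1,\ldots,y'_{\m'})$, and a ``weak'' super-receiver observing $\y_d$. For a physically degraded broadcast channel delayed CSIT does not enlarge the capacity region~\cite{Gamal_feedback}, so we may assume no CSIT in the enhanced model. Introducing an auxiliary variable $U$ with $U\to X\to \y_d\to(\y'_s,\y_d)$ yields, for every code achieving the rate tuple,
\begin{align}
\sum_{j=1}^{\m'} R'_j &\leq I(X;\y'_s,\y_d\mid U,\Hset)
= h(\y'_s,\y_d\mid U,\Hset)+o(\log\rho),\\
\sum_{i=1}^{\m} R_i &\leq I(U;\y_d\mid \Hset)
= h(\y_d\mid\Hset)-h(\y_d\mid U,\Hset).
\end{align}

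Dividing the first bound by $\m'+\m$, the second by $\m$, and adding, the $I(U;\y_d|\Hset)$ piece splits so that $h(\y_d\mid\Hset)/\m\leq \log\rho+o(\log\rho)$ contributes the unit slope on the right-hand side. What remains is
\begin{equation}
\tfrac{1}{\m'+\m}\,h(\y'_s,\y_d\mid U,\Hset)-\tfrac{1}{\m}\,h(\y_d\mid U,\Hset),
\end{equation}
to which I would apply the extremal entropy inequality exactly as in \eqref{eq:nn_CSIT_R_d_R_s_3} and \eqref{eq:dd_CSIT_outer_Rs_Rd_2}, producing the gap $\tfrac{1}{\m'+\m}\E_H\log|\I+\H_c\Cov_x\H_c^\dagger|-\tfrac{1}{\m}\E_H\log|\I+\H_d\Cov_x\H_d^\dagger|$, where $\H_c$ stacks all $\m'+\m$ channel vectors and $\H_d$ stacks the $\m$ dynamic ones. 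Because $\N_t\geq \m'+\m$, Lemma~\ref{lemma:DoF_bounding} bounds this difference by $o(\log\rho)$, so that after normalizing by $\log\rho$ and taking $\rho\to\infty$ one arrives at \eqref{eq:dd_CSIT_outer_Kusers_1}.

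The main obstacle I anticipate is not any single computation but keeping the auxiliary-variable argument clean when the degraded ``super-users'' carry multiple private messages each; this is handled by treating each group as a single virtual MIMO receiver whose rate is the sum of the per-user rates, so that the classical two-user degraded-broadcast converse applies verbatim. A minor point requiring attention is the introduction of the time-sharing/per-symbol averaging needed to push the single-letter inequalities through block fading (mirroring \eqref{eq:nn_CSIT_R_d_R_s_2}), and the verification that giving the dynamic outputs to the static group truly produces physical degradation of the full vector output so that \cite{Gamal_feedback} can be invoked to discard the delayed CSIT. Bound \eqref{eq:dd_CSIT_outer_Kusers_2} is obtained by the symmetric enhancement in which the static receivers' outputs are handed to every dynamic receiver, after which the identical chain of steps yields the stated inequality.
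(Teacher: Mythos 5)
Your proposal is correct and follows essentially the same route as the paper: enhance with global CSIR, collapse each group into a cooperating super-receiver, hand the dynamic super-output to the static side to obtain a physically degraded channel, invoke \cite{Gamal_feedback} to strip the delayed CSIT, apply the two-user degraded/K{\"o}rner--Marton converse with weights $\tfrac{1}{\m'+\m}$ and $\tfrac{1}{\m}$, and finish with the extremal entropy inequality plus Lemma~\ref{lemma:DoF_bounding}. The only blemish is cosmetic: your Markov chain should read $U\to X\to(\y'_s,\y_d)\to\y_d$ (the dynamic output is the degraded one), which is consistent with the rate bounds you then write down.
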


\begin{proof}
The inequalities~\eqref{eq:dd_CSIT_outer_Kusers_3},~\eqref{eq:dd_CSIT_outer_Kusers_4} represent the single-user bounds on the static and the dynamic users, respectively~\cite{Telatar_capacity,Zheng_communication}. The remainder of the proof is dedicated to establishing the bounds~\eqref{eq:dd_CSIT_outer_Kusers_1} and~\eqref{eq:dd_CSIT_outer_Kusers_2}. 

We enhance the channel by providing global CSIR as well as allowing full cooperation among static users and full cooperation among dynamic users. 
 The enhanced channel is equivalent to a broadcast channel with two users: one static equipped with $\m'$ antennas and one dynamic equipped with $\m$ antennas. Define $\Y' \in \mathbb{C}^{\m'}$ and $\Y \in \mathbb{C}^{\m}$ to be the received signals of the static and the dynamic super-user, respectively, in the enhanced channel. We further enhance the channel by giving $\Y$ to the static user, generating a physically degraded channel since $\X \rightarrow ( \Y', \Y ) \rightarrow \Y$ forms a Markov chain. Feedback including delayed CSIT has no effect on capacity~\cite{Gamal_feedback}, therefore we remove it from consideration. Subsequently, we can utilize the  K{\"o}rner-Marton outer bound~\cite{Marton_coding}, 
\begin{align}
\sum_{j = 1}^{\m'} R'_j \leq & I(\X ; \Y', \Y | U, \Hset) \nonumber \\
\sum_{i = 1}^{\m} R_i \leq & I( U ; \Y | \Hset).
\end{align}
Therefore, from applying extremal entropy inequality~\cite{Yi_degrees,Liu_extremal,Weingarten_capacity_1} and Lemma~\ref{lemma:DoF_bounding},
\begin{align}
\sum_{j = 1}^{\m'} \twocolAlignMarker \frac{R'_{j}}{\m' + \m} + \sum_{i =1}^{\m} \frac{R_{i}}{\m} \twocolnewline
\leq & \frac{1}{\m' + \m} I(\X ; \Y', \Y | U, \Hset)
 + \frac{1}{\m} I( U ; \Y | \Hset) \nonumber \\
 = & \frac{1}{\m' + \m} h(\Y', \Y | U, \Hset) + o(\log(\rho)) \twocolbreak
 + \frac{1}{\m} h( \Y | \Hset) 
- \frac{1}{\m} h( \Y | U, \Hset) \nonumber \\
\leq & \log(\rho) + o(\log(\rho)). \label{eq:dd_CSIT_outer_Kusers_proof_2}
\end{align}
Therefore, the bound~\eqref{eq:dd_CSIT_outer_Kusers_1} is proved. Similarly, we can prove the bound~\eqref{eq:dd_CSIT_outer_Kusers_2} using the same steps after switching the roles of the two users in the enhanced channel.
\end{proof}

\section{Hybrid CSIT: Perfect CSIT for the Static Users and No CSIT for the Dynamic Users}
\label{section:pn_CSIT}

\begin{theorem}
\label{theorem:pn_CSIT_ach}
The fading broadcast channel characterized by Eq.~\eqref{eq:system_miso_bc}, with perfect CSIT for the static users and no CSIT for the dynamic users, can achieve the following multiuser degrees of freedom,
\begin{align}
\Dset_1 & : \sum_{j=1}^{\m'} \e^{\dagger}_j, \label{eq:pn_CSIT_ach_1} \\
\Dset_2, \ldots, \Dset_{\m+1} & : \frac{1}{\T} \sum_{j=1}^{\m'} \e^{\dagger}_j + (1 - \frac{1}{\T}) \e^{\dagger}_i, \quad i=1, \ldots, \m. \label{eq:pn_CSIT_ach_2} 
\end{align}
Therefore, their convex hull is also achievable.
\end{theorem}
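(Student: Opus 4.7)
My plan is to achieve each of the $\m+1$ vertices $\Dset_1,\ldots,\Dset_{\m+1}$ separately and then invoke time-sharing to obtain the convex hull. Vertex $\Dset_1$, which gives one DoF to every static user and nothing to any dynamic user, is immediate from zero-forcing beamforming: since the transmitter has perfect CSIT for the $\m'$ static users and $\N_t\geq\m'+\m\geq\m'$, at each slot $t$ pick a ZF precoder $\W(t)\in\mathbb{C}^{\N_t\times\m'}$ satisfying $\g_j^\dagger(t)\W(t)=\e_j^\dagger$ and transmit $\x(t)=\W(t)\x'(t)$, so that each static user receives its own symbol coherently in every slot.

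For each remaining vertex $\Dset_{i+1}$ I would combine product superposition (which reaches dynamic user $i$ coherently without a dedicated training slot, as in~\cite{Li_product,Fadel_coherent}) with ZF for the static users (which exploits CSIT to prevent the dynamic-user data from contaminating the static users). Over one coherence block of dynamic user $i$, pick a ZF precoder $\W$ for the static users at slot~1 and transmit
\begin{align*}
\x(1) &= \W\x',\\
\x(t) &= \W\x'\cdot x_d(t-1),\qquad t=2,\ldots,\T,
\end{align*}
where $\x'\in\mathbb{C}^{\m'}$ carries one symbol per static user (reused over the whole block) and $x_d(1),\ldots,x_d(\T-1)$ are independent data symbols for dynamic user~$i$. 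Static user $j$ then sees $y'_j(1)=x'_j$ and decodes it from slot~1 alone; slots $t\geq 2$ merely deliver the scaled known version $x'_j\cdot x_d(t-1)$ and can be discarded. Dynamic user~$i$, whose channel is constant across the block, sees $y_i(1)=\alpha$ and $y_i(t)=\alpha\cdot x_d(t-1)$ for $t\geq 2$, where $\alpha\triangleq\h_i^\dagger\W\x'$; because $\h_i$ is independent of the static channels and $\x'$ is generic data, $\alpha$ is almost-surely a nonzero scalar, so the dynamic user estimates it from slot~1 and decodes $x_d(t-1)$ coherently for $t=2,\ldots,\T$.

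The tally over $\T$ slots yields $d'_j=1/\T$ for every static user and $d_i=(\T-1)/\T$, which matches $\Dset_{i+1}$. Convex combination of the strategies above via time-sharing then achieves every point in the convex hull of $\{\Dset_1,\ldots,\Dset_{\m+1}\}$. I expect the main subtle point to be justifying that one noisy observation of $\alpha$ in slot~1 is enough to attain a full DoF per data symbol: the standard product-superposition argument, as in~\cite{Li_product,Fadel_coherent}, is that estimating a scalar held constant across a coherence block costs only an $O(1)$ rate at high SNR, so using slot~1 simultaneously for static-user data and as an implicit dynamic-user pilot loses exactly one slot out of $\T$ in the dynamic user's tally while giving the static users their one symbol apiece.
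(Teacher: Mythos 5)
Your proposal is correct and is essentially the paper's own proof: $\Dset_1$ via zero-forcing (channel inversion) for the static users, and $\Dset_{i+1}$ via product superposition $\X=[\u,\ \u\v^\dagger]$ with $\u$ a ZF-precoded combination of static-user symbols serving double duty as the dynamic user's implicit pilot. Your added remarks (that $\alpha=\h_i^\dagger\W\x'$ is almost surely nonzero and that estimating it costs only $O(1)$ rate) are just a more explicit justification of the same scheme.
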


\begin{proof}
$\Dset_1$ is achieved by inverting the channels of the static users at the transmitter and then every static user achieves one degree of freedom. $\Dset_{2}, \ldots, \Dset_{\m+1}$ in~\eqref{eq:pn_CSIT_ach_2} are achieved using product superposition along with channel inversion as follows. The transmitted signal over $\T$ instants is, 
\begin{equation}
\X = [\u, \;\; \u\v^{\dagger}],
\end{equation}
where $\u = \sum_{j=1}^{\m'} \b_j u_j$, $u_j$ is a symbol intended for the static user $j$, $\g_{j}^{\dagger} \b_j = 0$, and $\v \in \mathbb{C}^{\T - 1}$ contain independent symbols intended for the dynamic user $i$. Each of the static users receive an interference-free signal during the first time instant achieving one degrees of freedom. The dynamic user estimates its equivalent channel during the first time instant and decodes $\v$ during the remaining $(\T - 1)$ time instants.
\end{proof}

\begin{theorem}
\label{theorem:pn_CSIT_outer}
An outer bound on the degrees of freedom of the fading broadcast channel characterized by Eq.~\eqref{eq:system_miso_bc}, with perfect CSIT for the static users and no CSIT for the dynamic users, is 
\begin{align}
\sum_{j =1}^{\m'} \frac{d'_{j}}{\m' + 1 } + \sum_{i= 1}^{\m} d_{i} & \leq 1, \label{eq:pn_CSIT_outer_1} \\
d'_{j} & \leq 1, \quad \forall j = 1, \ldots, \m', 
\label{eq:pn_CSIT_outer_2} \\
 \sum_{i =1}^{\m} d_{i} & \leq 1 - \frac{1}{\T}. 
 \label{eq:pn_CSIT_outer_3}
\end{align}
\end{theorem}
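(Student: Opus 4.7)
The bounds \eqref{eq:pn_CSIT_outer_2} and \eqref{eq:pn_CSIT_outer_3} are immediate: \eqref{eq:pn_CSIT_outer_2} is the single-user MISO DoF bound for each static receiver, while \eqref{eq:pn_CSIT_outer_3} follows by discarding the perfect static-CSIT enhancement and invoking the all-dynamic no-CSIT sum-DoF bound \cite{Huang_degrees,Vaze_degree,Fadel_disparity}. The work is in establishing the weighted-sum bound \eqref{eq:pn_CSIT_outer_1}; the plan is to follow the template of Theorem~\ref{theorem:nn_CSIT_outer} but with a grouping of the receivers dictated by the hybrid CSIT.

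The first step is to enhance the channel by furnishing global CSIR and allowing the $\m'$ static receivers to cooperate fully, so that the static side collapses to a single super-receiver with the $\m'\times\N_t$ channel $\G=[\g_1,\ldots,\g_{\m'}]^{\dagger}$ that remains perfectly known at the transmitter and with stacked output $\Y'=(y'_1,\ldots,y'_{\m'})^{\top}$. On the dynamic side we keep the $\m$ single-antenna receivers, whose channels are unknown to the transmitter and are therefore stochastically equivalent. The enhanced channel falls in the multiuser multilevel class of Section~\ref{section:Ml_BC} with $\m'_{\text{new}}=1$ static user and $\m_{\text{new}}=\m$ dynamic users; stochastic equivalence on the dynamic side lets us impose the degradation order $X\rightarrow Y_1\rightarrow\cdots\rightarrow Y_{\m}$ (by the usual chain-of-outputs enhancement) without shrinking the capacity region.

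Theorem~\ref{theorem:KM_outer} is then applied to this enhanced channel. Because the static chain has a single element, \eqref{eq:KM_outer_3} is vacuous and \eqref{eq:KM_outer_4} reduces to $\sum_{j=1}^{\m'} R'_j \le I(W;\Y'|U_1,\Hset)+h(\Y'|U_1,W,\Hset)+o(\log(\rho))$. On the dynamic side, the stochastic equivalence of $Y_1,\ldots,Y_{\m}$ makes \eqref{eq:KM_outer_1}--\eqref{eq:KM_outer_2} telescope exactly as in \eqref{eq:nn_CSIT_R_d_1}--\eqref{eq:nn_CSIT_R_d_3}, yielding $\sum_{i=1}^{\m} R_i \le \log(\rho)-h(Y_1|U_1,W,\Hset)-I(W;\Y'|U_1,\Hset)+o(\log(\rho))$. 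The weighted combination $\tfrac{1}{\m'+1}\sum R'_j+\sum R_i$ absorbs the $I(W;\Y'|U_1,\Hset)$ term into a non-positive residue, and after bounding $h(\Y'|U_1,W,\Hset)\le h(\Y',Y_1|U_1,W,\Hset)$ and the entropy manipulations paralleling \eqref{eq:nn_CSIT_R_d_R_s_1}--\eqref{eq:nn_CSIT_R_d_R_s_3} leaves
\[
\tfrac{1}{\m'+1}\sum_{j=1}^{\m'} R'_j+\sum_{i=1}^{\m} R_i \le \max_{\text{Tr}(\Cov_x)\le\rho,\,\Cov_x\succcurlyeq\ZeroMat}\mathbb{E}_{\Hset}\Bigl[\tfrac{1}{\m'+1}\log|\I+\J\Cov_x\J^{\dagger}|-\log(1+\h_1^{\dagger}\Cov_x\h_1)\Bigr]+\log(\rho)+o(\log(\rho)),
\]
where $\J=[\G^{\dagger}\;\;\h_1]^{\dagger}\in\mathbb{C}^{(\m'+1)\times\N_t}$ stacks the super-static and one dynamic channel. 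Since $\N_t\ge\m'+\m\ge\m'+1$, Lemma~\ref{lemma:DoF_bounding} with $N_1=\m'+1$ and $N_2=1$ forces the maximum to $o(\log(\rho))$, producing \eqref{eq:pn_CSIT_outer_1} after dividing by $n\log(\rho)$ and letting $\rho\to\infty$.

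The principal obstacle is the bookkeeping around the perfect static-CSIT. In the no-CSIT proof of Theorem~\ref{theorem:nn_CSIT_outer} the input covariance can be taken independent of the channel realization, whereas here $\x(t)$ may depend on $\G$; the resolution is that $\G$ and $\h_1$ are independent, so after conditioning on $\Hset$ the compound channel $\J$ has a first block $\G$ that can be treated as a constant and an independent random row $\h_1^{\dagger}$, and the trace constraint on $\Cov_x$ is preserved under averaging over $\G$, which is precisely what Lemma~\ref{lemma:DoF_bounding} requires. A secondary subtlety is to confirm that Theorem~\ref{theorem:KM_outer}, which is stated for a DMC, transports to the Gaussian fading channel once one conditions throughout on $\Hset$ and allows the auxiliary variables to encode the CSIT-dependent transmit strategy; this parallels the usage made in \eqref{eq:nn_CSIT_R_s_1}--\eqref{eq:nn_CSIT_R_d_R_s_3}.
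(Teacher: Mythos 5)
Your treatment of \eqref{eq:pn_CSIT_outer_2} and \eqref{eq:pn_CSIT_outer_3} matches the paper, and your overall architecture for \eqref{eq:pn_CSIT_outer_1} (global CSIR, full static cooperation, reduction to the multilevel class of Section~\ref{section:Ml_BC}, telescoping via stochastic equivalence of the dynamic outputs, then the extremal entropy inequality and Lemma~\ref{lemma:DoF_bounding}) is the paper's architecture. But there is a genuine gap at the pivotal step: you apply Theorem~\ref{theorem:KM_outer} to an enhanced channel in which the static super-receiver's channel $\G$ ``remains perfectly known at the transmitter.'' Theorem~\ref{theorem:KM_outer} is an outer bound for a discrete memoryless multilevel broadcast channel, i.e., for encoders that map messages to inputs; when the encoder may additionally depend on $\G$, the code class is strictly larger, and an outer bound proved for the smaller class does not transfer to it. Your proposed fix --- condition on $\Hset$ and ``allow the auxiliary variables to encode the CSIT-dependent transmit strategy'' --- is an assertion, not a proof: the converse in Appendix~\ref{appendix:KM_outer} builds its auxiliaries and Markov chains for message-only encoders, and the extremal-entropy step you invoke is stated for an input covariance that does not depend on the channel realization being averaged over. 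This is precisely the obstacle you demote to a ``secondary subtlety.''

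The paper closes this gap with the \emph{local statistical equivalence} device of~\cite{Mukherjee_vector,Tandon_fading,Tandon_synergistic}: draw $\tilde{\G},\tilde{\Z}$ with the same distribution as, but independent of, $\G,\Z'$; give the static super-receiver the genie signal $\tilde{\Y}=\tilde{\G}\X+\tilde{\Z}$ and give $\tilde{\G}$ to all receivers but \emph{not} to the transmitter. Since $h(\tilde{\Y},\Y'|\Hset)=h(\Y'|\Hset)+o(\log(\rho))$, one may discard $\Y'$ without loss of degrees of freedom, and the resulting channel --- whose static observation now travels through the unknown, input-independent $\tilde{\G}$ --- genuinely has no CSIT, so Theorem~\ref{theorem:KM_outer} applies legitimately and the rest of your computation goes through with $\tilde{\G}$ in place of $\G$ in your stacked matrix $\J$. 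Without this enhancement, or an explicit re-derivation of the multilevel outer bound for state-dependent encoders, your proof of \eqref{eq:pn_CSIT_outer_1} is incomplete.
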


\begin{proof}
The inequalities~\eqref{eq:pn_CSIT_outer_2} represent single-user bounds for the static users~\cite{Telatar_capacity}, and~\eqref{eq:pn_CSIT_outer_3} is a time-sharing outer bound for the dynamic users that was established in~\cite{Fadel_coherent, Fadel_disparity}. It remains to prove~\eqref{eq:pn_CSIT_outer_1}, as follows. 

We enhance the channel by giving global CSIR to all users and allowing full cooperation between the static users. This gives rise to an equivalent static user with  $\m'$ antennas receiving  $\Y'$ over an equivalent channel $\G$ and noise $\Z'$. At this point, we have a multi-user system where CSIT is available with respect to one user, but not others. We then bound the performance of this system with that of another (similar) system that has no CSIT. To do so, we use the {\em local statistical equivalence property} developed and used in~\cite{Mukherjee_vector,Tandon_fading,Tandon_synergistic}. First, we draw $\tilde{\G},\tilde{\Z}$ according to the distribution of $\G,\Z'$ and independent of them. We enhance the channel by providing $\tilde{\Y}=\tilde{\G}\X+\tilde{\Z}$ to the static receiver and $\tilde{\G}$ to all receivers. Because we do {\em not} provide $\tilde{\G}$ to the transmitter, there is no CSIT with respect to $\tilde{\Y}$. According to~\cite{Mukherjee_vector}, we have $h(\tilde{\Y},\Y'|\Hset) = h(\Y'|\Hset)+o(\log(\rho))$, where $\Hset = (\G,\tilde{\G}, \h_1, \ldots, \h_{\m})$, therefore we can remove $\Y'$ from the enhanced channel without reducing its degrees of freedom. This new equivalent channel has one user with $m'$ antennas receiving $(\tilde{\Y}, \Hset)$, $\m$ single-antenna users receiving $(y_i, \Hset)$, and no CSIT.\footnote{In the enhanced channel after removal of $\Y'$, the transmitter and receivers still share information about $\G$, but this random variable is now independent of all (remaining) transmit and receive variables.} Having no CSIT, the enhanced channel is in the form of a multilevel broadcast channel studied in Section~\ref{section:Ml_BC}, and hence using Theorem~\ref{theorem:KM_outer}, 

\begin{align}
\sum_{j=1}^{\m'} R'_{j} \leq & I( W ; \tilde{\Y} | U,\Hset ) 
 + I( \X; \tilde{\Y} | U, W, \Hset ) \nonumber \\
 R_1 \leq & I(U, W ; \y_{1}| V_1, \Hset ) 
 - I( W ; \tilde{\Y} | U,\Hset ) \nonumber \\
R_{i} \leq & I(V_{i-1} ; \y_{i}| V_i, \Hset ), \quad i = 2, \ldots, \m.
\end{align}
The dynamic receiver received signals have the same distribution. By following bounding steps parallel to~\eqref{eq:nn_CSIT_R_d_1},~\eqref{eq:nn_CSIT_R_d_2},~\eqref{eq:nn_CSIT_R_d_3},
\begin{align}
\sum_{j=1}^{\m} R_{i}
 \leq & \log(\rho) + o(\log(\rho)) 
- I(W ; \tilde{\Y} |U, \Hset) \twocolbreak
 - h( \y_{1} | U, W, \Hset).
\end{align}
Therefore,
\begin{align}
\sum_{j=1}^{\m'} \twocolAlignMarker \frac{R'_{j}}{\m' + 1} + \sum_{j=1}^{\m} R_{i} \twocolnewline
 \leq & \log(\rho) + o(\log(\rho)) \twocolbreak
+ (\frac{1}{\m' + 1} - 1 ) I(W; \tilde{\Y} |U, \Hset ) 
+ \frac{h( \tilde{\Y} | U, W, \Hset )}{\m' + 1}  \nonumber \\
& - h( \y_{1} | U, W, \Hset ), \\
\leq & \log(\rho) + o(\log(\rho)) \twocolbreak
 + \frac{h( \tilde{\Y}, \y_{1} | U, W, \Hset )}{\m' + 1} - h( \y_{1} | U, W, \Hset ) \\
\leq & \log(\rho) + o(\log(\rho)),
\end{align}
where the last inequality follows from applying the extremal entropy inequality~\cite{Yi_degrees,Liu_extremal,Weingarten_capacity_1} and Lemma~\ref{lemma:DoF_bounding}. This concludes the proof of the bound~\eqref{eq:pn_CSIT_outer_1}.
\end{proof}

\section{Hybrid CSIT: Perfect CSIT for the Static Users and Delayed CSIT for the Dynamic Users}
\label{section:pd_CSIT}

We begin with inner and outer bounds for one static and one dynamic user, then extend the result to multiple users. The transmitter knows the channel of the static users perfectly and instantaneously, and an outdated version of the channel of the dynamic users.

\subsection{Transmitting to One Static and One Dynamic User}
\label{section:pd_CSIT_one}

\begin{theorem}
\label{theorem:pd_CSIT_ach_one}
For the fading broadcast channel characterized by Eq.~\eqref{eq:system_miso_bc} with one static and one dynamic user, with perfect CSIT for the static user and delayed CSIT for the dynamic user, the achievable degrees of freedom region is the convex hull of the vectors,
\begin{align}
\Dset_1 : & (d',d) = (1 - \frac{1}{2\T} , \frac{1}{2} - \frac{1}{2\T}), \label{eq:pd_CSIT_ach_one_1} \\
\Dset_2 : & (d',d) = (\frac{1}{\T}, 1 - \frac{1}{\T} ). \label{eq:pd_CSIT_ach_one_2}
\end{align}
\end{theorem}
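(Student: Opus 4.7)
The vertex $\Dset_2=(\tfrac{1}{\T},1-\tfrac{1}{\T})$ requires no CSIT at all and is inherited directly from the product-superposition construction already proved in Section~\ref{section:nn_CSIT_ach}; I would simply cite Theorem~\ref{theorem:nn_CSIT_ach}. Hence all the substantive content lies in achieving the second vertex $\Dset_1=(1-\tfrac{1}{2\T},\tfrac{1}{2}-\tfrac{1}{2\T})$, after which the claimed region follows by the standard time-sharing argument between $\Dset_1$ and $\Dset_2$.

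For $\Dset_1$ I would build a super-symbol of length $2\T$, i.e., two consecutive coherence intervals of the dynamic user with independent realizations $\h_1,\h_2$, while the static channel $\g$ is fixed throughout (consistent with $\T'\geq 2\T$; otherwise the scheme is run within one static block and repeated). The scheme fuses three ingredients: (a) instantaneous zero-forcing toward the static user, enabled by perfect CSIT of $\g$; (b) retrospective interference alignment across the two dynamic blocks in the spirit of Maddah-Ali and Tse, enabled by the delayed CSIT of $\h_1$; and (c) product superposition so that a single pilot slot devoted to the dynamic user simultaneously carries a beamformed static symbol. In block~1 the transmitter overlays static symbols --- beamformed along $\g$ and therefore instantaneously clean at the static user --- on dynamic symbols sent along $\g^{\perp}$, which are invisible to the static user. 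The dynamic user cannot yet decode because it does not know $\h_1$, so it merely stores its received samples. Having learned $\h_1$ at the start of block~2, the transmitter computes the interference that the block-1 static symbols left at the dynamic user and re-injects that interference in block~2 along $\g^{\perp}$ (where it vanishes at the static user), paired with fresh beamformed static symbols; one slot of the super-symbol is the dynamic pilot, which by product superposition also carries a single static symbol. A careful count of degrees of freedom gives $2\T-1$ independent static symbols and $\T-1$ independent dynamic symbols over $2\T$ channel uses, i.e., exactly $\Dset_1$.

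The main obstacle, and the step I would write out most carefully, is verifying that the combined block-1/block-2 observations at the dynamic user form a square, full-rank system for its $\T-1$ desired symbols. Two consistency checks must align simultaneously: (i) the retrospective re-emission must cancel exactly the static-symbol interference in the dynamic user's equations, which forces the re-emission direction to lie in $\g^{\perp}$ and pins down the pairing between block-1 and block-2 slots; and (ii) the product-superposition pilot must produce an estimate of the effective post-precoding channel such that the surviving linear system is invertible almost surely, which one checks from the continuous joint density of $\g,\h_1,\h_2$. The rest is degrees-of-freedom bookkeeping and the trivial closure under time-sharing with $\Dset_2$.
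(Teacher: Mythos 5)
Your treatment of $\Dset_2$ and the time-sharing closure matches the paper, and your overall architecture for $\Dset_1$ (two dynamic coherence blocks, zero-forcing the dynamic stream into $\g^{\perp}$, retrospective re-emission of the block-1 interference, product superposition on the pilot slot) is in the right spirit. But the block-2 construction you describe has a genuine gap that breaks the count. You re-inject the interference along $\g^{\perp}$ and \emph{additively} pair it with $\T-1$ fresh per-slot static symbols; those fresh symbols are what push your static total to $2\T-1$. The single-antenna dynamic user then observes, in slot $t$ of block 2, a sum $\h_2^{\dagger}\b'\,u'_t + \h_2^{\dagger}\b\, I_t$ of an unknown fresh static symbol and the interference term $I_t$ it needs for cancellation --- one equation, two unknowns. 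The transmitter cannot null $u'_t$ at the dynamic user because it has only delayed CSIT of $\h_2$, so the dynamic user can never extract $I_t$, and the $\T-1$ dynamic degrees of freedom are lost. Dropping the fresh static symbols repairs the dynamic user but leaves the static user with only about $\T$ symbols over $2\T$ slots, i.e.\ $d'\approx\tfrac12$, well short of $1-\tfrac{1}{2\T}$.

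The paper's scheme resolves exactly this tension by two moves you are missing. First, the block-1 static payload $\u_t$ is \emph{two}-dimensional and not beamformed along $\g$: the static user gets only one observation $\g^{\dagger}\u_t$ per slot and is deliberately left unable to decode in block 1. Second, block 2 is a pure product superposition $x_2=[\bar u,\ \bar u(\h_1^{\dagger}\u_1),\ldots,\bar u(\h_1^{\dagger}\u_{\T-1})]$ from one antenna: the single static symbol $\bar u$ multiplies the interference sequence rather than adding to it, so the dynamic user strips it via its pilot estimate of $h_2\bar u$ and recovers $\h_1^{\dagger}\u_t$ cleanly for cancellation, while the static user uses the very same quantity $\h_1^{\dagger}\u_t$ as the second independent observation needed to resolve the two symbols in $\u_t$. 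The re-emitted interference is thus simultaneously useful to both receivers, which is how the static side reaches $2(\T-1)+1=2\T-1$ without any additive fresh symbols in block 2. Your full-rank concerns in the final paragraph are the right things to check, but they apply to a linear system that, as architected, is not solvable in the first place.
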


\begin{proof}
The degrees of freedom~\eqref{eq:pd_CSIT_ach_one_2} can be achieved by product superposition as discussed in Section~\ref{section:nn_CSIT}, without CSIT. We proceed to prove the achievability of~\eqref{eq:pd_CSIT_ach_one_1}. 

\begin{enumerate}
\item  Consider $[\u_1, \;\; \cdots, \;\; \u_{T-1}]$ to be a complex $2\times (T-1)$ matrix containing symbols intended for the static user, $[v_1, \;\; \cdots, \;\; v_{T-1}]$ intended for the dynamic user, and $\b \in \mathbb{C}$ is a beamforming vector so that $\g^\dagger \b=0$. In addition we define $\u_0={\bf 0}, v_0=1$. Using these components, the transmitter constructs and transmits a super-symbol of length $\T$, whose value at time $t$ is:
\begin{equation}
\x_1^{\dagger}(t) = \u_t+\b \; v_t.
\label{eq:first_trans_pd_csit}
\end{equation}
Note that $\x_1(0)=\b$ does not carry any information for either user, and serves as a pilot. The received super symbol at the static user is:
\begin{equation}
\y_1^{'\dagger} = [0, \;\; \g^{\dagger}\u_1, \;\; \cdots, \;\; \g^{\dagger}\u_{\T-1}].
\label{eq:first_static_pd_csit}
\end{equation}
The received super symbol at the dynamic user
\begin{align}
y_1^{\dagger} = & [ \h^{\dagger}_1\b, \;\; (\h^{\dagger}_1\u_1 + \h^{\dagger}_1\b v_1),\;\; \cdots,\twocolbreak \;\; (\h^{\dagger}_1\u_{\T-1} + \h^{\dagger}_1\b v_{\T-1}) ].
\label{eq:first_dynamic_pd_CSIT}
\end{align}
The dynamic user estimates its equivalent channel  $\h^{\dagger}_1\b$ from the received value in the first time instant. The remaining terms include symbols intended for the dynamic user plus some interference, whose cancellation is the subject of the next step.
 \item The transmitter next sends a second super symbol of length $\T$,
\begin{equation}
x_2 = [\bar{u},\;\; \bar{u}(\h^{\dagger}_1\u_1),\;\; \cdots,\;\; \bar{u}(\h^{\dagger}_1\u_{\T-1})],
\label{eq:second_trans_pd_CSIT}
\end{equation}
where $\bar{u} \in \mathbb{C}$ is a symbol intended for the static user. Hence,
\begin{equation}
y_2^{\dagger} = [ h_2\bar{u}, \;\; h_2\bar{u} (\h^{\dagger}_1\u_1),\;\;  \cdots, \;\; h_2\bar{u} (\h^{\dagger}_1\u_{\T-1})].
\label{eq:second_dynamic_pd_CSIT}
\end{equation}
The dynamic user estimates the equivalent channel $h_2\bar{u}$ during the first time instant and then acquires $\h^{\dagger}_1\u_t$, the interference in~\eqref{eq:first_dynamic_pd_CSIT}. Therefore, using $y_1, y_2$, the dynamic user solves for $v_t$ achieving $( \T - 1 )$ degrees of freedom. Furthermore,
\begin{equation}
y_2^{'\dagger} = [ g_1\bar{u}, \;\; g_1\bar{u}(\h^{\dagger}_1\u_1), \;\; \cdots, \;\; g_1\bar{u}(\h^{\dagger}_1\u_{\T-1})].
\label{eq:second_static_pd_CSIT}
\end{equation}
The static user solves for $\bar{u}$ achieving one degree of freedom and also uses $\h^{\dagger}_1\u_t$ to solve for $\u_t$ achieving further $2 \left( \T - 1\right)$ degrees of freedom.
\end{enumerate}
In summary, during $2\T$ instants, the static user achieves $(2\T - 1)$ degrees of freedom and the dynamic user achieves $(\T - 1)$ degrees of freedom. This shows the achievability of~\eqref{eq:pd_CSIT_ach_one_1}.
\end{proof}

\begin{theorem}
\label{theorem:pd_CSIT_outer_one}
For the fading broadcast channel characterized by Eq.~\eqref{eq:system_miso_bc} with one static and one dynamic user, where there is perfect CSIT for the static user and delayed CSIT for the dynamic user, an outer bound on the degrees of freedom region is,
\begin{align}
\frac{d'}{2} + d \leq & 1, \label{eq:pd_CSIT_outer_one_1} \\
 d' \leq & 1, \label{eq:pd_CSIT_outer_one_2} \\
 d \leq & 1 - \frac{1}{\T}. \label{eq:pd_CSIT_outer_one_3} 
\end{align}
\end{theorem}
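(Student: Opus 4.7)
The two single-user bounds follow quickly. Inequality~\eqref{eq:pd_CSIT_outer_one_2} is the standard single-user MISO degrees-of-freedom bound~\cite{Telatar_capacity}. For~\eqref{eq:pd_CSIT_outer_one_3}, I would observe that in a single-user setting the delayed CSIT is just output feedback and does not enlarge capacity~\cite{Shannon_zero}, so after removing it one obtains the non-coherent single-user MISO bound $1 - 1/\T$~\cite{Zheng_communication}.

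The main content is the weighted sum bound~\eqref{eq:pd_CSIT_outer_one_1}, which I propose to prove by reusing the enhancement strategy from the proof of Theorem~\ref{theorem:dd_CSIT_static_dynamic_one_outer}. First, give global CSIR to both receivers and hand the dynamic user's output $y(t)$ to the static receiver; the enhanced channel is physically degraded since $\x(t) \to (y'(t), y(t)) \to y(t)$ forms a Markov chain. Second, since feedback does not enlarge the capacity region of a physically degraded broadcast channel~\cite{Gamal_feedback}, the delayed CSIT of $\h$ can be discarded without reducing capacity; the perfect CSIT of $\g$ is retained (an enlargement), so the resulting bound remains a valid outer bound for the original channel. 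Third, apply a K{\"o}rner-Marton decomposition conditional on $\Hset$ to write $R' \leq I(\x(t); y'(t), y(t) \mid U, \Hset)$ and $R \leq I(U; y(t) \mid \Hset)$ for some auxiliary $U$ with $U \to \x(t) \to (y'(t), y(t))$, where the input distribution may depend on $\g$ through the retained perfect CSIT.

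Combining these two inequalities as $R'/2 + R$ and following bounding steps parallel to~\eqref{eq:dd_CSIT_outer_Rs_Rd_1}--\eqref{eq:dd_CSIT_outer_Rs_Rd_3}, the term $h(y(t) \mid \Hset)$ is dominated by $\log \rho$, while the residual $\tfrac{1}{2} h(y'(t), y(t) \mid U, \Hset) - h(y(t) \mid U, \Hset)$ is controlled by the extremal entropy inequality~\cite{Liu_extremal,Weingarten_capacity_1} and Lemma~\ref{lemma:DoF_bounding} applied with $N_1 = 2$ and $N_2 = 1$, yielding an $o(\log\rho)$ residual and hence $d'/2 + d \leq 1$. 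The subtlety that distinguishes this argument from its delayed-CSIT counterpart is that with perfect CSIT of $\g$ the input covariance $\Cov_x$ may depend on the realization of $\g$; this is the step I would need to handle most carefully in a full write-up. Fortunately, Lemma~\ref{lemma:DoF_bounding} is pointwise in the channel realization, so the extremal bound holds for each fixed $\g$ with the $\g$-conditional covariance, and taking expectation over the joint channel distribution preserves the $o(\log\rho)$ order.
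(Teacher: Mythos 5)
Your proposal is correct in its conclusion and its overall architecture, and your treatment of \eqref{eq:pd_CSIT_outer_one_2}, \eqref{eq:pd_CSIT_outer_one_3}, the first enhancement (global CSIR, handing $y$ to the static receiver, and discarding the delayed CSIT of $\h$ via the feedback argument for physically degraded channels \cite{Gamal_feedback}), and the final $R'/2+R$ computation all coincide with the paper's proof. Where you genuinely diverge is in the handling of the perfect CSIT of $\g$, which is the step that distinguishes this theorem from Theorem~\ref{theorem:dd_CSIT_static_dynamic_one_outer}. The paper inserts an additional enhancement based on the local statistical equivalence property \cite{Mukherjee_vector,Tandon_fading}: it manufactures an independent copy $\tilde{\G},\tilde{\Z}$, gives the static receiver $\tilde{\Y}=\tilde{\G}\X+\tilde{\Z}$, uses $h(\tilde{\Y},\Y'|\Hset)=h(\Y'|\Hset)+o(\log\rho)$ to discard $\Y'$, and thereby lands in a channel with \emph{no} CSIT at all, after which the textbook degraded-BC converse and the extremal entropy inequality apply with a single, channel-independent input covariance. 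You instead retain the perfect CSIT and argue that the degraded-BC decomposition and the entropy-difference bound survive because Lemma~\ref{lemma:DoF_bounding} holds pointwise in the channel realization. This can be made to work, but it quietly incurs two pieces of technical debt that the paper's $\tilde{\Y}$ trick is designed to avoid: (i) the decomposition $R'\leq I(\x;y',y\,|\,U,\Hset)$, $R\leq I(U;y\,|\,\Hset)$ must be re-derived for encoders of the form $\x_k=f(M',M,\g^k)$, i.e., for a degraded broadcast channel with state known causally at the encoder and fully at the decoders, rather than simply invoked; and (ii) once $\Cov_x$ depends on the realization of $\g$, the constraint $\mathrm{Tr}\{\Cov_x\}\leq\rho$ in Lemma~\ref{lemma:DoF_bounding} holds only on average over $\g$, so the pointwise application of the lemma followed by an expectation needs an integrability argument (the per-realization constant in the lemma and the per-realization power both having finite logarithmic moments under Rayleigh fading). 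You correctly flag this as the delicate step; in a full write-up I would recommend either supplying those two arguments explicitly or adopting the paper's statistical-equivalence enhancement, which removes the CSIT before any entropy bounding is done and keeps the remainder of the proof identical to the delayed-CSIT case.
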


\begin{proof}
The inequalities~\eqref{eq:pd_CSIT_outer_one_2} and~\eqref{eq:pd_CSIT_outer_one_3} represent the single-user outer bounds~\cite{Telatar_capacity,Zheng_communication}. It only remains to prove the outer bound~\eqref{eq:pd_CSIT_outer_one_1}, as follows. 
\begin{enumerate}
	\item We enhance the channel by giving global CSIR to both users and also give $y$ to the static user. The enhanced channel is physically degraded having $( \Y', \G )$ at the static user and $( y, \G )$ at the dynamic user, where $\Y' \triangleq (y', y)$ and $\G \triangleq (\h, \g)$. In a physically degraded channel, causal feedback (including delayed CSIT) does not affect capacity~\cite{Gamal_feedback}, so we can remove the delayed CSIT with respect to the dynamic user.
	\item We now use another enhancement with the motivation to remove the remaining CSIT (non-causal, with respect to the static user). This is accomplished, similar to Theorem~\ref{theorem:pn_CSIT_outer},  via local statistical equivalence property~\cite{Mukherjee_vector,Tandon_fading,Tandon_synergistic} in the following manner. We create a channel $\tilde{\G}$, and noise $\tilde{\Z}$ with the same distribution but independently of the true channel and noise, and a signal $\tilde{\Y}=\tilde{\G}\X+\tilde{\Z}$. A genie will give $\tilde{\Y}$ to the static receiver and $\tilde{\G}$ to both receivers. It has been shown~\cite{Mukherjee_vector} that $h(\tilde{\Y},\Y'|\Hset) = h(\Y'|\Hset)+o(\log\rho)$, where $\Hset = (\G,\tilde{\G})$, therefore we can remove $\Y'$ from the enhanced channel without reducing its degrees of freedom. 
	\item The enhanced channel is still physically degraded, therefore~\cite{Bergmans_random, Bergmans_simple}
\begin{align}
R' \leq & I ( \x ; \tilde{\Y} | U, \Hset ) 
 = h ( \tilde{\Y} | U, \Hset ) + o(\log(\rho)) \nonumber \\
R \leq & I ( U; y | \Hset ) 
 = h ( y | \Hset ) 
- h ( y | U, \Hset ),
\end{align}
where $U$ is an auxiliary random variable, and $U \rightarrow x \rightarrow (y', y)$ forms a Markov chain. Therefore, 
\begin{align}
\frac{1}{2} R' + R \leq &h ( y | \Hset ) 
 + \frac{1}{2} h ( \tilde{\Y} | U, \Hset ) \twocolbreak
 - h ( y | U, \Hset ) + o(\log(\rho)) \nonumber \\ 
 \leq & \log(\rho) + o(\log(\rho)),
\end{align}
where the last inequality follows from extremal entropy inequality and Lemma~\ref{lemma:DoF_bounding}~\cite{Yi_degrees,Liu_extremal,Weingarten_capacity_1}. This concludes the proof of  the bound~\eqref{eq:pd_CSIT_outer_one_1}.
\end{enumerate}
\end{proof}

\begin{remark}
\label{remark:pd_CSIT_one}
For the above broadcast channel with hybrid CSIT, the achievable sum degrees of freedom is $d_{\text{sum}} = \frac{3}{2} - \frac{1}{\T}$, and the outer bound on the sum degrees of freedom is $d_{\text{sum}} \leq \frac{3}{2}$. The gap decreases with the dynamic user coherence time (see Fig.~\ref{figure:pd_static_dynamic_one_T15} and~\ref{figure:pd_static_dynamic_one_T30}).
\end{remark}

\begin{figure}
\center
\includegraphics[width=\Figwidth]{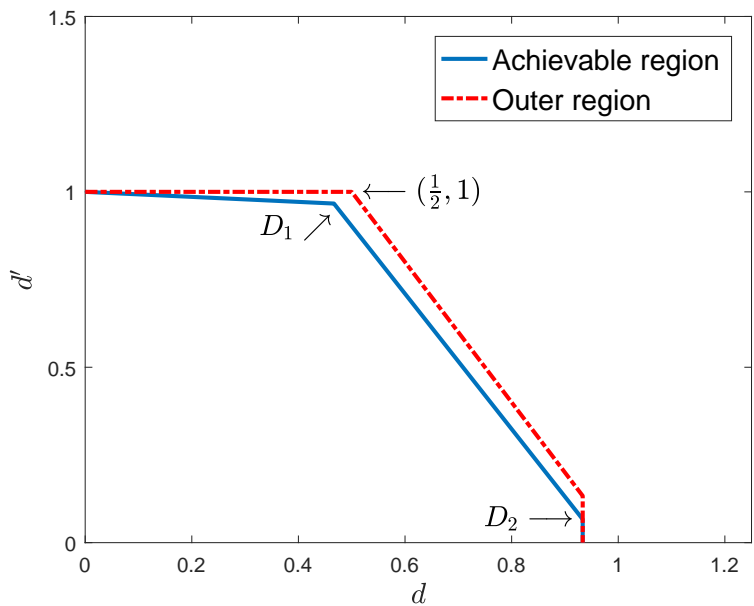}
\caption{One static and one dynamic user with hybrid CSIT and $\T = 15$}
\label{figure:pd_static_dynamic_one_T15}
\end{figure}
\begin{figure}
\center
\includegraphics[width=\Figwidth]{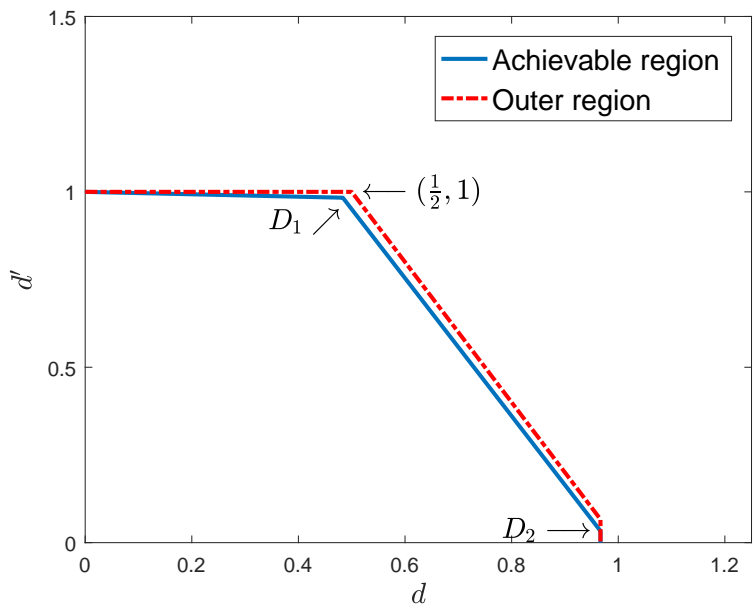}
\caption{One static and one dynamic user with hybrid CSIT and $\T = 30$}
\label{figure:pd_static_dynamic_one_T30}
\end{figure}

\subsection{Multiple Static and Dynamic Users}
\label{section:pd_CSIT_K}

\begin{theorem}
\label{theorem:pd_CSIT_ach_K}
The fading broadcast channel characterized by Eq.~\eqref{eq:system_miso_bc}, with perfect CSIT for the static users and delayed CSIT for the dynamic users, can achieve the following degrees of freedom,
\begin{align}
\Dset_1& : \sum_{j =1}^{\m'} \e^{\dagger}_j, \label{eq:pd_CSIT_ach_K_1} \\
\Dset_2, & \ldots, \Dset_{\m\m'+1}:  (1 - \frac{1}{2\T}) \e^{\dagger}_j + (\frac{1}{2} - \frac{1}{2\T}) \e^{\dagger}_i, 
\twocolbreak \quad j = 1, \ldots, \m', \ i= 1, \ldots, \m, \label{eq:pd_CSIT_ach_K_2} \\
\Dset_{\m\m'+2}, & \ldots, \Dset_{\m\m'+\m+2}: \frac{1}{\T} \sum_{j=1}^{\m'}\e^{\dagger}_j + (1 - \frac{1}{\T}) \e^{\dagger}_i, 
\twocolbreak \quad i = 1, \ldots, \m, \label{eq:pd_CSIT_ach_K_3} \\
\Dset_{\m\m'+\m+3}& : \frac{\m}{\T} \sum_{j=1}^{\m'} \e^{\dagger}_j 
\twocolbreak 
+ (\frac{1}{1 + \frac{1}{2} + \ldots + \frac{1}{\m}} (1 - \frac{\m}{\T})) \sum_{i=1}^{\m}\e^{\dagger}_i. 
\label{eq:pd_CSIT_ach_K_4}
\end{align}
The achievable region consists of the convex hull of the above vectors.
\end{theorem}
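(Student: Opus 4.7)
The plan is to exhibit each of the listed vertices $\Dset_k$ individually, after which the convex hull follows by standard time-sharing. Three of the four vertex families would just reuse machinery already in hand. The vertex $\Dset_1$ would be obtained by channel-inversion/zero-forcing over the perfectly known static-user channels, exactly as in $\Dset_1$ of Theorem~\ref{theorem:pn_CSIT_ach}. Each vertex in the family $\Dset_2,\dots,\Dset_{mm'+1}$ would be obtained by silencing everyone except one static-dynamic pair $(j,i)$ and invoking the two-user scheme of Theorem~\ref{theorem:pd_CSIT_ach_one}, which needs only the perfect CSIT of user $j$ and the delayed CSIT of user $i$. Each vertex in the family $\Dset_{mm'+2},\dots,\Dset_{mm'+m+2}$ would be obtained by activating all $m'$ static users together with one dynamic user $i$ and invoking the product-superposition-with-zero-forcing scheme of Theorem~\ref{theorem:pn_CSIT_ach}; since that scheme uses no dynamic-user CSIT, the stronger hypothesis of delayed dynamic-user CSIT available in this section is simply unused.

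The only vertex requiring a new construction is $\Dset_{mm'+m+3}$. Over a super-symbol of $T$ instants I would transmit
\begin{equation*}
\X \;=\; \bigl[\,\U,\; \U\V\,\bigr],
\end{equation*}
where $\U\in\mathbb{C}^{N_t\times m}$ fills the first $m$ instants and $\V\in\mathbb{C}^{m\times(T-m)}$ determines the remaining instants through the multiplicative superposition. The matrix $\U$ would be built by zero-forcing: each of its $m$ columns is a sum $\sum_{j=1}^{m'}\b_j u_{j,k}$ in which the beamformer $\b_j$ is orthogonal to every $\g_{j'}$ with $j'\ne j$ and carries a fresh symbol $u_{j,k}$ intended for static user $j$. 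Consequently $\g_j^{\dagger}\U$ delivers $m$ interference-free symbols to static user $j$ during the first $m$ instants, giving $m/T$ DoF per static user after amortization over the super-symbol. The matrix $\V$ would then be populated by the Maddah-Ali--Tse retrospective interference alignment code of Theorem~\ref{theorem:dd_CSIT_static} applied to the \emph{equivalent} $m$-antenna $m$-user MISO broadcast channel whose per-user row vectors are $\tilde{\h}_i^{\dagger}\triangleq\h_i^{\dagger}\U$.

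The step I expect to be the main obstacle in a rigorous write-up is justifying that this equivalent-channel reduction is sound. Two facts have to be checked: first, that the first $m$ instants simultaneously serve as the static-user payload \emph{and} as implicit noise-free pilots of $\tilde{\h}_i^{\dagger}$ for dynamic user $i$ (since that user literally observes $\h_i^{\dagger}\U$ during those instants); this is the product-superposition trick of \cite{Li_coherent,Fadel_coherent} transplanted into a multi-antenna equivalent channel. Second, because the transmitter chose $\U$ causally and obtains $\h_i$ with one-block delay, it knows $\tilde{\h}_i$ with that same delay, so the MAT alignment is legitimately executable on the equivalent channel. Granting those two points, Theorem~\ref{theorem:dd_CSIT_static} applied to the equivalent channel over the remaining $T-m$ slots yields $1/(1+\tfrac12+\cdots+\tfrac1m)$ DoF per dynamic user per slot, i.e.\ $(1-m/T)/(1+\tfrac12+\cdots+\tfrac1m)$ DoF per dynamic user over the full super-symbol, which together with the $m/T$ per-static-user DoF above realizes $\Dset_{mm'+m+3}$ and completes the list of vertices.
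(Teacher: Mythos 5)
Your proposal is correct and follows essentially the same route as the paper: channel inversion for $\Dset_1$, reuse of the two-user hybrid scheme for $\Dset_2,\dots,\Dset_{\m\m'+1}$, reuse of the no-dynamic-CSIT product-superposition scheme for $\Dset_{\m\m'+2},\dots,\Dset_{\m\m'+\m+2}$, and a $[\U,\;\U\V]$ product superposition combining zero-forcing for the static users with retrospective interference alignment on the equivalent channels $\h_i^{\dagger}\U$ for the last vertex. Your write-up is in fact slightly more explicit than the paper's on the two points you flag (the first $\m$ instants doubling as pilots of the equivalent channel, and the transmitter's delayed knowledge of $\tilde{\h}_i$ via its own choice of $\U$), which the paper asserts without elaboration.
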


\begin{proof}
$\Dset_1$ is achieved by inverting the channel of the static users at the transmitter, providing one degree of freedom per static user. The achievability of $\Dset_{2}, \ldots, \Dset_{\m\m'+1}$ was established in Section~\ref{section:pd_CSIT_one}, and that of $\Dset_{\m\m'+2}, \ldots, \Dset_{\m\m'+\m+2}$ was proved in Section~\ref{section:pn_CSIT} without CSIT for the dynamic user, so it remains achievable with delayed CSIT. $\Dset_{\m\m'+\m+3}$ is achieved by retrospective interference alignment~\cite{Maddah_completely} along with product superposition, as follows. The transmitted signal over $\T$ instants is
\begin{equation}
\X = [ \bar{\U}, \;\; \bar{\U}\V],
\end{equation}
where $\bar{\U} \in \mathbb{C}^{\m \times \m}$ contains independent symbols intended for the static users sent by inverting the channels of the static users. Therefore, during the first $\m$ time instants, each static user receives an interference-free signal and achieves $\m$ degree of freedom, and furthermore the dynamic users estimate their equivalent channels. During the remaining time instants, each dynamic receiver obtains coherent observations of $(\T-\m)$ transmit symbols, which are pre-processed, combined and interference-aligned into super-symbols $\V$ according to retrospective interference alignment techniques of~\cite{Maddah_completely}. Accordingly, each dynamic receiver achieves $\frac{1}{1 + \frac{1}{2} + \ldots + \frac{1}{\m}} (1 - \frac{\m}{\T})$ degrees of freedom.
\end{proof}

\begin{theorem}
\label{theorem:pd_CSIT_outer_K}
An outer bound on the degrees of freedom region of the fading broadcast channel characterized by Eq.~\eqref{eq:system_miso_bc}, with perfect CSIT for the static users and delayed CSIT for the dynamic users, is
\begin{align}
\sum_{j = 1}^{\m'} \frac{d'_{j}}{\m' + \m} + \sum_{i = 1}^{\m} \frac{d_{i}}{\m} & \leq 1, 
\label{eq:pd_CSIT_outer_K_1} \\
\sum_{i = 1}^{\m} d_{i} & \leq \frac{\m}{ 1 + \frac{1}{2} + \ldots + \frac{1}{\m} }, 
 \label{eq:pd_CSIT_outer_K_2} \\
d'_{j} & \leq 1, \qquad j = 1, \ldots, \m',
 \label{eq:pd_CSIT_outer_K_3} \\
 d_{i} & \leq 1 - \frac{1}{\T}, \qquad i = 1, \ldots, \m.
 \label{eq:pd_CSIT_outer_K_4}
\end{align}
\end{theorem}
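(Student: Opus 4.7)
The plan is to establish the four bounds one at a time, with~\eqref{eq:pd_CSIT_outer_K_1} being the main technical work. The single-user bounds~\eqref{eq:pd_CSIT_outer_K_3} and~\eqref{eq:pd_CSIT_outer_K_4} are immediate: the first is the coherent point-to-point MIMO bound of~\cite{Telatar_capacity}, and the second is the non-coherent single-user bound of~\cite{Zheng_communication}, which applies to each dynamic user because delayed CSIT, being a form of causal feedback, cannot enlarge the point-to-point capacity~\cite{Shannon_zero}. For the dynamic sum bound~\eqref{eq:pd_CSIT_outer_K_2}, I argue by restriction. Any achievable tuple componentwise dominates $(0,\ldots,0,d_1,\ldots,d_{\m})$, which is therefore also achievable. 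A scheme realizing the restricted tuple carries no static messages, and because the static channels $\g_j$ are independent of the dynamic observations, messages, and channels, any use the transmitter makes of perfect static CSIT amounts to randomization over variables that are independent of everything dynamic; this can be emulated in the pure $\m$-user dynamic broadcast channel with delayed CSIT (private encoder randomness does not enlarge capacity, and the degrees-of-freedom region is closed under convex combinations). Theorem~\ref{theorem:dd_CSIT_dynamic}, Eq.~\eqref{eq:dd_CSIT_dynamic_outer_2}, then yields~\eqref{eq:pd_CSIT_outer_K_2}.

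For~\eqref{eq:pd_CSIT_outer_K_1} the plan is a staged enhancement that successively neutralizes the delayed CSIT and then the perfect CSIT, reducing the problem to a no-CSIT physically degraded broadcast channel to which K\"orner--Marton and Lemma~\ref{lemma:DoF_bounding} can be applied. First, I give global CSIR to all receivers and allow full cooperation within the static group (a super-user with $\m'$ antennas observing $\Y'$) and within the dynamic group (a super-user with $\m$ antennas observing $\Y$). Second, I enhance further by giving $\Y$ to the static super-user, making the channel physically degraded along the chain $\X\to(\Y',\Y)\to\Y$. In a physically degraded channel, causal feedback---including the delayed CSIT regarding the dynamic channels---does not enlarge capacity~\cite{Gamal_feedback}, so the delayed CSIT is dropped. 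Third, to eliminate the perfect CSIT on the static links, I apply the local statistical equivalence device of~\cite{Mukherjee_vector,Tandon_fading,Tandon_synergistic} as in Theorem~\ref{theorem:pn_CSIT_outer} and Theorem~\ref{theorem:pd_CSIT_outer_one}: draw $(\tilde{\G},\tilde{\Z})$ as independent copies of $(\G,\Z')$, form $\tilde{\Y}=\tilde{\G}\X+\tilde{\Z}$, and provide $\tilde{\Y}$ to the static super-user and $\tilde{\G}$ to every receiver. Because $h(\tilde{\Y},\Y'|\Hset)=h(\Y'|\Hset)+o(\log\rho)$ with $\Hset=(\G,\tilde{\G},\h_1,\ldots,\h_{\m})$, the observation $\Y'$ can be removed without diminishing the degrees-of-freedom region; the transmitter has no CSIT with respect to $\tilde{\G}$, and its residual knowledge of $\G$ is decoupled from the remaining observations.

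At this point the channel is physically degraded---the static super-user observes $(\tilde{\Y},\Y)$ and the dynamic super-user observes $\Y$---with no operative CSIT. Applying the K\"orner--Marton outer bound~\cite{Marton_coding} yields $\sum_j R'_j\leq I(\X;\tilde{\Y},\Y|U,\Hset)$ and $\sum_i R_i\leq I(U;\Y|\Hset)$ for an auxiliary $U$ with $U\to\X\to(\tilde{\Y},\Y)$. Weighting by $1/(\m'+\m)$ and $1/\m$ respectively and absorbing the noise entropies $h(\cdot|\X,\Hset)=o(\log\rho)$ leaves the cross term
\[
\frac{1}{\m'+\m}h(\tilde{\Y},\Y|U,\Hset)-\frac{1}{\m}h(\Y|U,\Hset),
\]
which I bound by $o(\log\rho)$ using the extremal entropy inequality~\cite{Liu_extremal,Liu_vector} together with Lemma~\ref{lemma:DoF_bounding}, taking $\H_1$ to be the $(\m'+\m)\times\N_t$ stacking of $\tilde{\G}$ with the dynamic channel matrix and $\H_2$ to be just the dynamic channel matrix. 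The remaining piece $(1/\m)\,h(\Y|\Hset)\leq\log\rho+o(\log\rho)$ contributes the final $\log\rho$, yielding~\eqref{eq:pd_CSIT_outer_K_1}. The principal obstacle is the ordering of the enhancements: the delayed CSIT must be neutralized \emph{through} the physically degraded chain before the local statistical equivalence step replaces $\Y'$ with $\tilde{\Y}$, since only at the latter stage can the static branch be treated as a no-CSIT observation while preserving the degraded structure that justified stripping the delayed CSIT in the first place.
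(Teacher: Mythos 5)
Your proposal is correct and follows essentially the same route as the paper: single-user bounds for \eqref{eq:pd_CSIT_outer_K_3}--\eqref{eq:pd_CSIT_outer_K_4}, reduction to Theorem~\ref{theorem:dd_CSIT_dynamic} for \eqref{eq:pd_CSIT_outer_K_2}, and for \eqref{eq:pd_CSIT_outer_K_1} the same chain of enhancements (global CSIR, intra-group cooperation, giving $\Y$ to the static super-user to obtain a physically degraded channel and strip the delayed CSIT, then local statistical equivalence to strip the perfect CSIT) followed by K{\"o}rner--Marton, the extremal entropy inequality, and Lemma~\ref{lemma:DoF_bounding} with the same $1/(\m'+\m)$ versus $1/\m$ weighting. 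The only deviation is cosmetic: the paper draws $\tilde{\G}$ as an independent copy of the full stacked channel $\hat{\G}$ and discards $\hat{\Y}=(\Y',\Y)$ wholesale, whereas you replace only $\Y'$ and retain $\Y$ at the static super-user, which if anything preserves the degraded Markov chain more transparently.
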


\begin{proof}
The inequalities~\eqref{eq:pd_CSIT_outer_K_3} and~\eqref{eq:pd_CSIT_outer_K_4} represent the single-user outer bounds for the static and dynamic users, respectively~\cite{Telatar_capacity,Zheng_communication}. According to Theorem~\ref{theorem:dd_CSIT_dynamic},~\eqref{eq:pd_CSIT_outer_K_2} represents an outer bound for the dynamic users. It only remains to prove~\eqref{eq:pd_CSIT_outer_K_1} as follows. 

\begin{enumerate}
	\item The original channel is enhanced by giving the users global CSIR. Furthermore, we assume full cooperation between the static users and between the dynamic users. The resulting enhanced channel is a broadcast channel with two users: one static user equipped with $\m'$ antennas, received signal $\Y'$, channel $\G$, and noise noise $\Z'$, and one dynamic user equipped with $\m$ antennas, received signal $\Y$, channel $\H$, and noise $\Z$. 
	\item We further enhance the channel by giving $\Y$ to the static user, constructing a physically degraded channel. For the enhanced channel, the static receiver is equipped with $\m' + \m$ antennas and has received signal $\hat{\Y}=[\Y^\dagger, \; \Y^{'\dagger}]^\dagger$, channel $\hat{\G}=[\G^\dagger, \; \H^\dagger]^\dagger$, and noise $\hat{\Z}=[\Z^\dagger, \; \Z^{'\dagger}]^\dagger$. Since any causal feedback (including delayed CSIT) does not affect the capacity of a physically degraded channel~\cite{Gamal_feedback}, the delayed CSIT for the dynamic receiver can be removed.

	\item  We now use another enhancement with the motivation to remove the remaining CSIT (non-causal, with respect to the static user). We create an artificial channel and noise, $\tilde{\G}$, $\tilde{\Z}$, with the same distribution but independent of $\hat{\G}$, $\hat{\Z}$, and a signal $\tilde{\Y}=\tilde{\G}\X+\tilde{\Z}$. A genie will give $\tilde{\Y}$ to the static receiver and $\tilde{\G}$ to both receivers. It has been shown~\cite{Mukherjee_vector} that $h(\tilde{\Y},\hat{\Y}|\Hset) = h(\hat{\Y}|\Hset)+o(\log\rho)$, where $\Hset = (\hat{\G},\tilde{\G})$, therefore we can remove $\hat{\Y}$ from the enhanced channel without reducing its degrees of freedom. 

\item The enhanced channel is physically degraded without CSIT, therefore~\cite{Bergmans_random, Bergmans_simple},
\begin{align}
\sum_{j=1}^{\m'} R'_{j} & \leq I( \X ; \tilde{\Y} | U, \Hset) \nonumber \\
\sum_{i=1}^{\m} R_{i} & \leq I( U ; \Y | \Hset).
\end{align}
Hence,
\begin{align}
\sum_{j=1}^{\m'} \frac{R'_{j}}{\m' + \m} + \twocolAlignMarker \sum_{j=1}^{\m} \frac{R_{i}}{\m} \twocolnewline
 \leq & \frac{1}{\m' + \m} h( \tilde{\Y} | U, \Hset) 
 + \frac{1}{\m} h( \Y | \Hset) \twocolbreak
 - \frac{1}{\m} h( \Y | U, \Hset)
 + o(\log(\rho)) \nonumber \\
\leq & \log(\rho) + o(\log(\rho)),
\end{align}
where the last inequality follows from the extremal entropy inequality~\cite{Yi_degrees,Liu_extremal,Weingarten_capacity_1} and Lemma~\ref{lemma:DoF_bounding} and since $h( \Y | \Hset) \leq \m \log(\rho) + o(\log(\rho))$~\cite{Telatar_capacity}. This concludes the proof of the bound~\eqref{eq:pd_CSIT_outer_K_1}.
\end{enumerate}
\end{proof}

\section{Conclusion}
\label{section:conclusion}
A multiuser broadcast channel was studied where some receivers experience longer coherence intervals and have CSIR while other receivers experience a shorter coherence interval and do not enjoy free CSIR. The degrees of freedom were studied under delayed CSIT, hybrid CSIT, and no CSIT. Among the techniques employed were interference alignment and beamforming along with product superposition for the inner bounds. The outer bounds involved a bounding of the rate region of the multiuser (discrete memoryless) multilevel broadcast channel. Some highlights of the results are: for one static and one dynamic user with delayed CSIT, the achievable degrees of freedom region partially meets the outer bound. For one static user with perfect CSIT and one dynamic user with delayed CSIT, the gap between the achievable and the outer sum degrees of freedom is inversely proportional to the dynamic user coherence time. For each of the considered CSI conditions, inner and outer bounds were also found for arbitrary number of users.

From these results we conclude that in the broadcast channel, coherence diversity delivers gains that are distinct from, and augment, the gains from beamforming and interference alignment. The authors anticipate that the tools and results of this paper can be helpful for future studies of hybrid CSIT/CSIR in other multi-terminal  networks.

\appendices
\section{Proof of Theorem~\ref{theorem:KM_outer}}
\label{appendix:KM_outer}

Recall, $\M'_{j}, \M_{i}$ are the messages of users $j = 1, \ldots, \m'$ and $i = 1, \ldots, \m$, respectively. We enhance the channel by assuming that user $j = 1, \ldots, \m'$ knows the messages $ M'_{j+1},\ldots,M'_{m'}$ and ${M}_{1}, \ldots,M_m$ and user $i = 1, \ldots, \m$ knows the messages ${M}_{i+1}, \ldots,{M}_{m} $. Using Fano's inequality, chain rule, and data processing inequality we can bound the rates of the static user $j = 1, \ldots, \m'$,
\begin{align}
n R'_{j} & \leq I( \M'_j ; Y'_{j,1}, \ldots, Y'_{j,n}| \M'_{j+1}, \ldots, \M'_{\m'}, \M_1, \ldots, \M_\m ) \\
 & = \sum_{k=1}^n I( \M'_j ; Y'_{j,k} | U_{j,k} ) \\ 
 & \leq \sum_{k=1}^n I( \M'_j, U_{j,k}, Y'_{j-1,1}, \ldots, Y'_{j-1,k-1} ; Y'_{j,k} | U_{j,k} ) \\
 & = \sum_{k=1}^n I( U_{j-1,k} ; Y'_{j,k} | U_{j,k} ) 
\label{eq:KM_outer_static} 
\end{align}
where 
\begin{equation*}
U_{j,k}= \big ( \M'_{j+1}, \ldots, \M'_{\m'}, \M_1, \ldots, \M_\m, Y'_{j,1}, \ldots, Y'_{j,k-1} \big),
\end{equation*}
$Y'_{j,k}$ denotes the received signal of user $j$ at time instant $k$, 
\begin{equation*}
U_{\m'} \rightarrow \cdots \rightarrow U_1 \rightarrow X \rightarrow ( Y'_1, \ldots, Y'_{\m'}, Y_1, \ldots, Y_{\m} )
\end{equation*}
forms a Markov chain, and $U_0 = X$. The rate of static user $\m'$ can be bounded as
\begin{align}
n R'_{\m'} \leq & \sum_{k=1}^n I( U_{\m'-1,k} ; Y'_{\m',k} | U_{\m',k} ) \\ 
 = & \sum_{k=1}^n I( X_k ; Y'_{\m',k} | U_{\m',k} ) 
 - \sum_{k=1}^n I( X_k ; Y'_{\m',k} | U_{\m'-1,k} ) \\
 \leq & \sum_{k=1}^n I( X_k, Y_{1,k+1}, \ldots, Y_{1,n} ; Y'_{\m',k} | U_{\m',k} ) \twocolbreak 
 - \sum_{k=1}^n I( X_k ; Y'_{\m',k} | U_{\m'-1,k} ) \\
 = & \sum_{k=1}^n I( Y_{1,k+1}, \ldots, Y_{1,n} ; Y'_{\m',k} | U_{\m',k} ) \twocolbreak 
 + \sum_{k=1}^n I( X_k ; Y'_{\m',k} | U_{\m',k}, Y_{1,k+1}, \ldots, Y_{1,n} ) \nonumber \\
 & - \sum_{k=1}^n I( X_k ; Y'_{\m',k} | U_{\m'-1,k} ) \\
 = & \sum_{k=1}^n I( W_k ; Y'_{\m',k} | U_{\m',k} ) \twocolbreak
 + \sum_{k=1}^n I( X_k ; Y'_{\m',k} | U_{\m',k}, W_k ) \twocolbreak
 - \sum_{k=1}^n I( X_k ; Y'_{\m',k} | U_{\m'-1,k} ),
\end{align}
where $W_k= Y_{1,k+1}^n$. Similarly,
\begin{align}
n R_{i} \leq & I( \M_i ; Y_{i,1}, \ldots, Y_{i,n} | \M_{i+1}, \ldots, \M_{\m} ) \\ 
 = & \sum_{k=1}^n I( \M_i ; Y_{i,k} | V_{i,k} ) \\
 = & \sum_{k=1}^n I( \M_i, V_{i,k}, Y_{i-1,k+1}, \ldots, Y_{i-1,n} ; Y_{i,k} | V_{i,k} ) \\ 
 = & \sum_{k=1}^n I( V_{i-1,k} ; Y_{i,k} | V_{i,k} ),
\end{align}
where we define $V_{i,k}\triangleq ( \M_{i+1}, \ldots, \M_{\m}, Y_{i,k+1}, \ldots, Y_{i,n} )$, which leads to the Markov chain
$V_{\m} \rightarrow \cdots \rightarrow V_1 \rightarrow (U_{\m'}, W) \rightarrow X \rightarrow ( Y'_1, \ldots, Y'_{\m'}, Y_1, \ldots,  Y_{\m} )$ . Using the chain rule and Csisz{\'a}r sum identity~\cite{Csiszar_information}, we obtain the bound~\eqref{eq:dynamic_2}.

\makeatletter
\if@twocolumn%
\begin{figure*}
\begin{align}
R_{1} \leq & \sum_{k=1}^n I( \M_1, \ldots, \M_\m ; Y_{1,k} | V_{1,k} ) \\
 \leq & \sum_{k=1}^n I( \M_1, \ldots, \M_\m, Y_{1,k+1}, \ldots, Y_{1,n} ; Y_{1,k} | V_{1,k} ) \\
 = & \sum_{k=1}^n I( \M_1, \ldots, \M_\m, Y_{1,k+1}, \ldots, Y_{1,n}, Y'_{\m',1}, \ldots, Y'_{\m',k-1} ; Y_{1,k} | V_{1,k} )\\ & - \sum_{k=1}^n I( Y'_{\m',1}, \ldots, Y'_{\m',k-1} ; Y_{1,k} | \M_1,\ldots, \M_\m, Y_{1,k+1}, \ldots, Y_{1,n} ) \\
 = & \sum_{k=1}^n I( U_{\m',k}, W_k ; Y_{1,k} | V_{1,k} )
 - \sum_{k=1}^n I( Y_{1,k+1}, \ldots, Y_{1,n} ; Y'_{\m',k} | U_{\m',k} ) \label{eq:dynamic_1}\\
 = & \sum_{k=1}^n I( U_{\m',k}, W_k ; Y_{1,k} | V_{1,k} ) 
 - \sum_{k=1}^n I( W_k ; Y'_{\m',k} | U_{\m',k} ).
\label{eq:dynamic_2}
\end{align}
\hrulefill
\end{figure*}
\else%
\begin{align}
R_{1} \leq & \sum_{k=1}^n I( \M_1, \ldots, \M_\m ; Y_{1,k} | V_{1,k} ) \\
 \leq & \sum_{k=1}^n I( \M_1, \ldots, \M_\m, Y_{1,k+1}, \ldots, Y_{1,n} ; Y_{1,k} | V_{1,k} ) \\
 = & \sum_{k=1}^n I( \M_1, \ldots, \M_\m, Y_{1,k+1}, \ldots, Y_{1,n}, Y'_{\m',1}, \ldots, Y'_{\m',k-1} ; Y_{1,k} | V_{1,k} )\\ & - \sum_{k=1}^n I( Y'_{\m',1}, \ldots, Y'_{\m',k-1} ; Y_{1,k} | \M_1,\ldots, \M_\m, Y_{1,k+1}, \ldots, Y_{1,n} ) \\
 = & \sum_{k=1}^n I( U_{\m',k}, W_k ; Y_{1,k} | V_{1,k} )
 - \sum_{k=1}^n I( Y_{1,k+1}, \ldots, Y_{1,n} ; Y'_{\m',k} | U_{\m',k} ) \label{eq:dynamic_1}\\
 = & \sum_{k=1}^n I( U_{\m',k}, W_k ; Y_{1,k} | V_{1,k} ) 
 - \sum_{k=1}^n I( W_k ; Y'_{\m',k} | U_{\m',k} ).
\label{eq:dynamic_2}
\end{align}
\fi%
\makeatother

By introducing a time-sharing auxiliary random variable, $Q$,~\cite{Gamal_network} and defining
\begin{align}
X \triangleq& ( X_Q , Q ),& Y'_j &\triangleq ( Y'_{j,Q}, Q )\nonumber \\
Y_i \triangleq& ( Y_{i,Q}, Q ),& U_i &\triangleq ( U_{i,Q}, Q )\nonumber \\
V_j \triangleq& ( V_{j,Q}, Q ),& W &\triangleq ( W_Q, Q ),
\end{align}
we establish~\eqref{eq:KM_outer_1}-\eqref{eq:KM_outer_4}. Similarly, we can follow the same steps to prove~\eqref{eq:KM_outer_5}-\eqref{eq:KM_outer_8} after switching the role of the two sets of variables $Y_1',\ldots,Y_{\m'}'$ and $Y_1,\ldots,Y_{\m}$. This completes the proof of Theorem~\ref{theorem:KM_outer}.

\section{Multilevel Broadcast Channel with Degraded Message Sets}
\label{appendix:degraded_message}

Here, we study the capacity of the multiuser multilevel broadcast channel that is characterized by~\eqref{KM_outer_MC} with degraded message sets. In particular, $\M_0 \in \left[1:2^{n \R_0}\right]$ is to be communicated to all receivers, and furthermore $\M_1 \in \left[1:2^{n \R_1}\right]$ is to be communicated to receiver $Y_1$.\footnote{For compactness of expression, here we refer to each receiver by the variable denoting its received signal.} A three-receiver special case was studied by Nair and El Gamal~\cite{Nair_capacity} where the idea of indirect decoding was introduced, and the capacity is the set of rate pairs $\left( \R_1, \R_0 \right)$ such that
\begin{align}
  \R_0 \leq & \min\big\{ I( U  ; Y_2 ) , I( V ; Y'_1  ) \big\} , \nonumber \\
  \R_1 \leq & I( X ; Y_1  | U )  , \nonumber \\
\R_0  + \R_1 \leq & I( V  ; Y'_1 ) + I( X ; Y_1 | V ) ,
\end{align}
for some pmf $p(u, v)p(x|v)$. In the sequel, we give a generalization of Nair and El Gamal for multiuser multilevel broadcast channel.

\begin{theorem}
\label{Theorem:common}
The capacity of multiuser multilevel broadcast channel characterized by~\eqref{KM_outer_MC}, with degraded message sets, is the set of rate pairs $( \R_1, \R_0 )$ such that
\begin{align}
  \R_0 \leq & \min\big\{ I( U  ; Y_\m  ) , I( V ; Y'_{\m'}  ) \big\}, \nonumber \\
  \R_1 \leq & I( X ; Y_1  | U )  , \nonumber \\
\R_0  + \R_1 \leq & I( V  ; Y'_{\m'}  ) + I( X ; Y_1 | V ) ,
\end{align}
for some pmf $p(u, v)p(x|v)$.
\end{theorem}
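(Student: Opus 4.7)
The plan is to reduce the multiuser setting to the three-receiver capacity result of Nair and El Gamal~\cite{Nair_capacity} by noting that, along each of the two degraded chains, the common-message rate is constrained only by the weakest receiver while the private message $M_1$ need only be decoded by the strongest dynamic receiver $Y_1$. All intermediate receivers are automatically satisfied by data processing along the Markov chains $X \to Y_1 \to \cdots \to Y_m$ and $X \to Y'_1 \to \cdots \to Y'_{m'}$.

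For achievability, I would reuse the superposition-plus-indirect-decoding codebook of~\cite{Nair_capacity} verbatim. Independently generate a $U$-codebook of rate $R_0$ carrying $M_0$; for each $u^n$, superimpose a $V$-codebook according to $p(v|u)$ that also carries $M_0$ redundantly; for each $v^n$, superimpose an $X$-codebook of rate $R_1$ according to $p(x|v)$ carrying $M_1$. The weakest dynamic receiver $Y_m$ decodes $U$ directly, the weakest static receiver $Y'_{m'}$ recovers $M_0$ by indirect decoding of $V$, and $Y_1$ jointly decodes $(V, X)$ to obtain both messages. Degradation along each chain then propagates decodability to every intermediate receiver, so no further rate constraints arise and the achievable region coincides with the single-letter expression in the theorem.

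For the converse, Fano's inequality applied at $Y_m$ and at $Y'_{m'}$ yields upper bounds on $nR_0$ in terms of $I(M_0; Y_m^n)$ and $I(M_0; (Y'_{m'})^n)$ respectively, since $M_0$ must in particular be decodable at the weakest receiver on each chain. Single-letterization via the chain rule and the Csisz\'{a}r sum identity, with auxiliaries chosen analogously to those in~\cite{Nair_capacity} and with a time-sharing random variable $Q$ absorbed into $U$ and $V$ in the standard way, produces the single-letter bounds on $R_0$, $R_1$, and $R_0 + R_1$ claimed in the theorem. The factorization $p(u,v)p(x|v)$ is inherited automatically because, at the relevant step, $X_k$ is conditionally independent of $U_k$ given $V_k$ by the construction of the auxiliaries.

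The main obstacle I expect is the sum-rate bound, which requires the Csisz\'{a}r swap that migrates a term indexed by future observations at $Y_1$ into one indexed by past observations at $Y'_{m'}$, producing $I(V; Y'_{m'}) + I(X; Y_1 | V)$ in place of two separate single-chain bounds. This is the delicate manipulation of~\cite{Nair_capacity}; in the multiuser setting it goes through unchanged because only the two weakest receivers $Y_m$ and $Y'_{m'}$ enter the manipulation, and degradation reduces every intermediate receiver to a routine consequence of data processing.
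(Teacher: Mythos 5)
Your overall route is the same as the paper's: reduce to the three-receiver Nair--El Gamal result by working only with the weakest receiver on each degraded chain ($Y_\m$ and $Y'_{\m'}$), let data processing along the two Markov chains handle every intermediate receiver, and single-letterize the converse with the Nair--El Gamal auxiliaries plus a time-sharing variable. The converse sketch is fine and matches what the paper does.

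The one substantive problem is in your description of the achievable scheme. You write that the $V$-codebook ``also carries $\M_0$ redundantly.'' In the Nair--El Gamal construction --- and in the paper's proof --- the private message is first split as $\M_1=(\M_{10},\M_{11})$, the $V$ layer carries the pair $(\M_0,\M_{10})$, and only $\M_{11}$ rides on the top $X$ layer. This rate splitting is not cosmetic: it is the entire point of indirect decoding. If $V$ is indexed by $\M_0$ alone, there is one $v^n$ per $u^n$, the ``indirect decoding'' step at $Y'_{\m'}$ degenerates into ordinary decoding of $\M_0$, and the top layer can only deliver $\R_1 < I(X;Y_1|V)$ to $Y_1$; you would achieve $\R_1\leq I(X;Y_1|V)$ rather than the claimed $\R_1\leq I(X;Y_1|U)$, which is strictly smaller whenever $U\rightarrow V\rightarrow X$ is a nontrivial chain. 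With the split in place, $Y_1$ decodes the full triple subject to $\R_{10}+\R_{11}<I(X;Y_1|U)$ and $\R_{11}<I(X;Y_1|V)$, the static receivers search over $m_{10}$ without needing to recover it subject to $\R_0+\R_{10}<I(V;Y'_{\m'})$, and Fourier--Motzkin elimination of $(\R_{10},\R_{11})$ yields exactly the three inequalities of the theorem. Since you said you would reuse the codebook ``verbatim,'' this is probably a slip in the summary rather than in intent, but as literally written your construction does not achieve the stated region.
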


\begin{proof}
The converse parallels the proof of the converse of the three-receiver case studied by Nair and El Gamal in~\cite{Nair_capacity} after replacing $Y_2, Y'_1$ with $Y_\m, Y'_{\m'}$, respectively. In particular, $U$ and $V$ are defined as follows. 
\begin{align*}
U_k &\triangleq ( \M_0, Y_{1,1}, \ldots, Y_{1,k-1}, Y_{\m,k+1}, \ldots, Y_{\m,n} ), \\
V_k & \triangleq ( \M_0, Y_{1,1}, \ldots, Y_{1,k-1}, Y'_{\m',k+1}, \ldots, Y'_{\m',n} ),
\end{align*}
$k=1,\ldots, n$, and let $Q$ be a time-sharing random variable uniformly distributed over the set $\{1, \ldots, n \}$ and independent of $X^n, Y_1^n, Y_{\m,1}, \ldots, Y_{\m,n}, Y'_{\m', 1}, \ldots, Y'_{\m', n}$. We then set $U=( U_Q, Q ), V= ( V_Q, Q), X=X_Q, Y_1=Y_{1,Q}, Y_\m=Y_{\m,Q}$, and $Y'_{\m'}=Y'_{\m',Q}$. This completes the converse part of the proof.

The achievability part uses superposition coding and indirect decoding 
as follows.
\begin{itemize}
  \item \textbf{Rate splitting:} divide the private message $\M_1$ into two independent messages $\M_{10}$ at rate $\R_{10}$ and $\M_{11}$ at rate $\R_{11}$, where 
  $\R_1= \R_{10} + \R_{11}$.
	
  \item \textbf{Codebook generation:} fix a pmf $p(u,v)p(x|v)$ and randomly and independently generate $2^{n\R_0}$ sequences $u^n\left(m_0\right)$, $m_0 \in \left[1: 2^{n\R_0}\right]$, each according to $\prod_{k=1}^n p_U(u_k)$. For each $m_0$, randomly and conditionally independently generate $2^{n\R_{10}}$ sequences $v^n (m_0, m_{10} )$, $m_{10} \in [1: 2^{n\R_{10}} ]$, each according to $\prod_{k=1}^n p_{V|U}(v_k|u_k(m_0) )$. For each pair $(m_0, m_{10})$, randomly and conditionally independently generate $2^{n\R_{11}}$ sequences $x^n (m_0, m_{10}, m_{11} )$, $m_{11} \in [1: 2^{n\R_{11}} ]$, each according to $\prod_{k=1}^n p_{X|V} (x_k|v_k(m_0, m_{10} ))$.
  
\item \textbf{Encoding:} to send the message pair $(m_0, m_1 )=(m_0, m_{10}, m_{11})$, the encoder transmits $x^n (m_0, m_{10}, m_{11} )$.

\item \textbf{Decoding at the users $Y_2, \ldots, Y_\m$:} decoder $i$ declares that $\hat{m}_{0i} \in [ 1 : 2^{n \R_0} ]$ is sent if it is the unique message such that $( u^n (\hat{m}_{0i} ), y_i^n ) \in \Tset_{\epsilon}^{(n)}$. Hence, by law of large numbers and the packing lemma~\cite{Gamal_network}, the probability of error tends to zero as $n \rightarrow \infty$ if 
\begin{align}\label{eq:Ach_common_k}
 \R_0 &< \min_{\quad 2 \leq i \leq \m}\{ I ( U  ; Y_i ) - \delta(\epsilon) \},  \nonumber \\ 
 &=  I( U  ; Y_\m ) - \delta(\epsilon)  , 
\end{align}
where the last equality follows from applying data processing inequality on the Markov chain 
$U \rightarrow X \rightarrow Y_1 \rightarrow Y_2 \rightarrow \cdots \rightarrow Y_\m$.

\item \textbf{Decoding at $Y_1$:} decoder 1 declares that $( \hat{m}_{01}, \hat{m}_{10}, \hat{m}_{11} )$ is sent if it is the unique message triple such that $\big( u^n ( \hat{m}_{01} ), v^n ( \hat{m}_{01}, \hat{m}_{10} ), x^n (\hat{m}_{01}, \hat{m}_{10}, \hat{m}_{11} ), y_1^n \big) \in  [ 1 : 2^{n \R_0} ]$. Hence, by law of large numbers and the packing lemma~\cite{Gamal_network}, the probability of error tends to zero as $n \rightarrow \infty$ if 
\begin{align}\label{eq:Ach_commonandpriv_1}
  \R_{11} & < I ( X  ; Y_1 | V ) - \delta(\epsilon), \nonumber \\
 \R_{10} + \R_{11} & < I ( X  ; Y_1 | U ) - \delta(\epsilon), \nonumber \\
\R_0 + \R_{10} + \R_{11} & < I ( X  ; Y_1 ) - \delta(\epsilon).
  \end{align}

  \item \textbf{Decoding at users $Y'_1, \ldots, Y'_{\m'}$:} decoder $j$ decodes $m_0$ indirectly by declaring $\tilde{m}_{0j}$ is sent if it is the unique message such that $( u^n (\tilde{m}_{0j} ), v^n (\tilde{m}_{0j}, m_{10} ), z_j^n ) \in \Tset_{\epsilon}^{(n)}$ for some $m_{10} \in [ 1 : 2^{n \R_0}]$. Hence, by law of large numbers and packing lemma, the probability of error tends to zero as $n\rightarrow \infty$ if 
\begin{align}\label{eq:Ach_common_j}
 \R_0 + \R_{10} &< \min_{\quad 1 \leq j \leq \m'} \{ I ( U, V ; Y'_j ) - \delta(\epsilon) \}, \nonumber\\
  &= \min_{\quad 1 \leq j \leq \m'} \{ I ( V ; Y'_j ) - \delta(\epsilon) \}, \nonumber\\
  &=  I ( V ; Y'_{\m'} ) - \delta(\epsilon),
  \end{align}
where the last two equalities follow from applying the chain rule and data processing inequality on the Markov chain $U \rightarrow V \rightarrow X \rightarrow Y'_1 \rightarrow Y'_2 \rightarrow \cdots \rightarrow Y'_{\m'}$.
\end{itemize}
By combining the bounds in \eqref{eq:Ach_common_k}, \eqref{eq:Ach_commonandpriv_1}, \eqref{eq:Ach_common_j}, substituting $\R_{10}+\R_{11}=\R_1$, and eliminating $\R_{10}$ and $\R_{11}$ by the Fourier-Motzkin procedure~\cite{Nair_capacity}, the proof of the achievability is completed.
\end{proof}

\bibliographystyle{IEEEtran}
\bibliography{IEEEabrv,References}

\end{document}